\theoremstyle{thmstyleone}%
\theoremstyle{thmstyletwo}%
\newtheorem{thm}{\bf Theorem}      
\newtheorem{cor}{\bf Corollary}[section]     
\newtheorem{lem}{\bf Lemma}
\newtheorem{prop}{\bf Proposition}
\newtheorem{rem}{\bf Remark}
\theoremstyle{thmstylethree}%
\begin{document}

\title[Feature splitting parallel algorithm for Dantzig selectors]{Feature splitting parallel algorithm for Dantzig selectors}


\author[1]{\fnm{Xiaofei} \sur{Wu}}\email{xfwu1016@163.com}
\equalcont{These authors contributed equally to this work.}
\author[2]{\fnm{Yue} \sur{Chao}}\email{cyasy@xmu.edu.cn}
\equalcont{These authors contributed equally to this work.}
\author[3]{\fnm{Rongmei} \sur{Liang}}\email{liang\_r\_m@163.com}
\author[3]{\fnm{Shi} \sur{Tang}}\email{loveqq200305@163.com}
\author*[1]{\fnm{Zhiming} \sur{Zhang}}\email{zmzhang@cqu.edu.cn}


\affil[1]{\orgdiv{College of Mathematics and Statistics}, \orgname{Chongqing University}, \orgaddress{\street{Huxi}, \city{Chongqing}, \postcode{401331}, \state{Chongqing}, \country{China}}}

\affil[2]{\orgdiv{Department of Statistics and Data Science}, \orgname{Xiamen University}, \orgaddress{\street{Siming}, \city{Xiamen}, \postcode{361005}, \state{Fujian}, \country{China}}}

\affil[3]{\orgdiv{Big Data and Intelligence Engineering School}, \orgname{Chongqing College of International Business and Economics}, \orgaddress{\street{Xuefu}, \city{Chongqing}, \postcode{401520}, \state{Chongqing}, \country{China}}}



\abstract{The Dantzig selector is a widely used and effective method for variable selection in ultra-high-dimensional data. Feature splitting is an efficient processing technique that involves dividing these ultra-high-dimensional variable datasets into manageable subsets that can be stored and processed more easily on a single machine. This paper proposes a variable splitting parallel algorithm for solving both convex and nonconvex Dantzig selectors based on the proximal point algorithm. The primary advantage of our parallel algorithm, compared to existing parallel approaches, is the significantly reduced number of iteration variables, which greatly enhances computational efficiency and accelerates the convergence speed of the algorithm. Furthermore, we show that our solution remains unchanged regardless of how the data is partitioned, a property referred to as partition-insensitive.  In theory, we use a concise proof framework to demonstrate  that the algorithm exhibits linear convergence. Numerical experiments indicate that our algorithm performs competitively in both parallel and nonparallel environments. The R package  for implementing the proposed algorithm can be obtained at \url{https://github.com/xfwu1016/PPADS}. }

\keywords{Dantzig selector, Feature splitting, Parallel computing, Partition-insensitive, Proximal point algorithm}

\maketitle

\section{Introduction}\label{sec1}

As scientific and technological advancements progress, data collection has become increasingly easier. One manifestation of this is the growing number of features (variables) in datasets, even though the number of truly useful features remains limited. Consequently, many variable selection methods have been proposed, such as subset selection, stepwise regression, and regularization techniques.
Among these, the Dantzig selector (abbreviated as DS), introduced by \cite{Candes2007The}, is one of the most well-known methods for ultra-high-dimensional data (\textcolor{red}{the number of features far exceeds the number of samples}), defined as follows,
\begin{align}\label{ods}
\min\limits_{\bm \beta \in \mathbb{R}^p} \quad \lVert \bm \beta \rVert_1 \quad \text{s.t.} \quad  \lVert  \bm X^\top(\bm X \bm \beta - \bm y)/n \rVert_{\infty} \leq \lambda,
\end{align}
where 
\[
\bm{X} = \begin{pmatrix} \bm{X}_1 & \ldots & \bm{X}_p \end{pmatrix} \  \text{and} \
\bm{y} = \begin{pmatrix} y_1 & \ldots & y_n \end{pmatrix}^\top
\]
denote the \(n \times p\) design matrix and the response vector, respectively; $\lambda$ denotes a non-negative tuning parameter. Here, we write $\|\mathbf{x}\|_q$ for the $\ell_q$ norm of $\mathbf{x} \in \mathbb{R}^p$, where $1 \leq q \leq \infty$. 
\textcolor{red}{
We operate under the assumption of a  linear regression model \(\bm y = \bm X \bm \beta+ \bm \epsilon\), where \( \bm \epsilon\) represents the error vector. The true regression coefficient vector \(\beta\) remains generally unknown.
The parameter \(\lambda\) is of paramount importance. A smaller \(\lambda\) tightens the constraint \(\lVert \bm X^\top(\bm y - \bm X \bm \beta)\rVert_{\infty}\leq\lambda\), resulting in a sparser estimator. In contrast, a larger \(\lambda\) relaxes the constraint, potentially leading to the DS  estimator having more non-zero elements.
DS was devised to tackle the variable selection and estimation challenges in  ultra-high-dimensional linear regression. When the number of features \(p\) significantly surpasses the sample size \(n\), traditional linear regression methods become ineffective. DS overcomes this limitation by integrating an \(\ell_1\) norm constraint and an inequality constraint, enabling efficient variable selection and estimation. Moreover, some nonconvex DS models in \cite{Wen2024Nonconvex}  possesses the Oracle property, signifying that under specific conditions, it can identify the true variable set with a probability approaching 1, and the estimated coefficients exhibit asymptotic normality. Next, we will review the theoretical and algorithmic work on DS and some of its variants. }

The optimal $\ell_2$ rate properties for estimator of DS were established under a sparsity scenario, and impressive empirical performance on real-world problems involving large values of $p$ was demonstrated in \cite{Candes2007The}. Due to its outstanding performance in managing ultra-high dimensional data and its tight error bounds, the Dantzig selector (DS) has attracted considerable attention. Discussions on the Dantzig selector can be found in the works of \cite{Bickel2007Discussion}, \cite{Cai2007Discussion}, \cite{Cands2007Rejoinder}, \cite{Efron2007Discussion}, \cite{Friedlander2007Discussion}, \cite{Meinshausen2007Discussion}, and \cite{Ritov2007Discussion}. In addition, the approximate equivalence relationship between DS and another popular variable selection method Lasso (\cite{Tibshirani1996Regression}) was established by \cite{Bickel2008Simultaneous} and \cite{James2008DASSO}. 

Many studies on DS variants have emerged in recent years. \cite{Dicker2013Parallelism} proposed the adaptive DS to enhance the selection performance of the original DS by incorporating the adaptive weighting concept introduced by \cite{Zou2006The}. \cite{Liu2010The} suggested using group regularization terms instead of $\ell_1$ regularization to develop a group DS, allowing it to better adapt to data containing grouping prior information. \cite{Chatterjee2014Generalized} introduced a generalized DS for linear models, where any norm that captures the parameter structure can be utilized for estimation. Recently, \cite{Ge2022The} and \cite{Wen2024Nonconvex} proposed the nonconvex DSs based on \(\ell_1 - \alpha \ell_2\) regularization, SCAD and MCP. DS has also been extended to other model settings, including generalized linear models in \cite{James2008DASSO}, censored linear regression models in \cite{Li2014The}, and multiple quantile regression models in \cite{Park2017Dantzigtype}.

As discussed by \cite{Candes2007The}, a natural approach to solving (\ref{ods}) is to reformulate it as a linear programming (LP) problem using a primal–dual interior point algorithm (\cite{Boyd2004Convex}). However, interior point methods are typically inefficient for high-dimensional problems ($p>n$), as they require solving dense Newton systems at each iteration. An alternative approach for solving (\ref{ods}) involves using homotopy methods to compute the entire solution path of the Dantzig selector, see \cite{James2008DASSO}. Nonetheless, as noted in \cite{Becker2010Templates} (Section 1.2), these methods also struggle with  high-dimensional problems. To enable the DS to be applicable to high-dimensional datasets, \cite{Becker2010Templates} reformulated (\ref{ods}) as a linear cone programming problem, for which a smooth approximation of its dual problem is solved using a projected gradient algorithm (\cite{Beck2009A}).

Observing (\ref{ods}), it contains \(\ell_1\) and \(\ell_\infty\) terms that are not smooth. A widely applicable algorithm for addressing this is the Alternating Direction Method of Multipliers (ADMM, \cite{Boyd2010Distributed}). \cite{Lu2012An} was the first to  rewrite (\ref{ods}) into an optimization form with separable equality constraints, and utilize ADMM to find the Dantzig selector. Numerical experiments in  \cite{Lu2012An} demonstrated that this method generally outperforms the approach presented by \cite{Becker2010Templates} in terms of CPU time while producing solutions of comparable quality. It is worth noting that ADMM is an iterative algorithm, and the efficiency of the algorithm depends on whether each subproblem can be efficiently solved.  In each iteration of ADMM described by \cite{Lu2012An}, two subproblems needed to be solved successively. One of the subproblems had a closed-form solution, while the other did not and was approximated using a nonmonotone gradient method (NGM). However, NGM is also an iterative algorithm, and embedding it within ADMM can lead to double loops, resulting in excessive computational burden. To address the challenges arising from the subproblem with double loops, \cite{Wang2012The} proposed a linearized version of ADMM (LADMM) specifically tailored for the Dantzig selector. This modified approach demonstrated superior efficiency compared to ADMM in \cite{Lu2012An} when solving both synthetic and real-world datasets. \cite{Li2015The} encapsulated the LADMM algorithm in an efficient and user-friendly R package called ``flare". In addition, the ADMM algorithm and LADMM have been employed by \cite{Ge2022The} and \cite{Chatterjee2014Generalized} to solve the \(\ell_1 - \alpha \ell_2\) DS and the generalized DS, respectively.

In addition to the above two popular ADMM algorithms, several other algorithms have been proposed to solve the DS. With the aid of proximal operators, \cite{Prater2015Finding} skillfully introduced two simple iterative methods through a fixed-point reformulation of the solutions to (\ref{ods}). They noted that their first iterative algorithm followed the same steps as the LADMM iteration presented by \cite{Wang2012The}, while the second iterative algorithm aligned with the steps of the primal-dual iteration algorithm introduced by \cite{Chambolle2011A} for DS. 
A fast splitting method introduced in \cite{He2015A} shows competitive performance when compared to both ADMM and LADMM. Subsequently, a series of other splitting algorithms, including partially linearized ADMM and customized proximal point algorithms (CPPA, proposed by \cite{Gu2014Customized}), has been proposed in \cite{He2017Splitting} to tackle the DS.  Recently, \cite{Fang2021Efficient} reformulated DS problem as an equivalent convex composite optimization problem and proposes a semismooth Newton augmented Lagrangian  algorithm to solve it, while also applying a proximal point dual semismooth Newton algorithm for another equivalent form of the problem. However, the iterative algorithms mentioned above were designed by formulating the optimization of DS as a constrained optimization problem. During the solution process of each subproblem, it was necessary to introduce dual variables to eliminate the constraints. \cite{Mao2021A} was the first to reformulate DS as an unconstrained optimization problem and proposed a partially proximal linearized alternating minimization method (P-PLAM). Notably, its iterative process does not require updates of dual variables, and two subproblems have straightforward closed-form solutions.

It is worth noting that the aforementioned algorithms do not consider the use of variable splitting and parallel computation frameworks to enhance the efficiency of DS in handling high-dimensional data. More recently, inspired by the works of  \cite{Sun2015A},  \cite{Li2023Regularized},  \cite{Wen2023Feature} and \cite{Cai2024Asset},  \cite{Wen2024Nonconvex} proposed an efficient computational algorithm for solving DS and nonconvex DSs, utilizing a feature splitting approach. This approach not only reduces memory usage but also facilitates the parallel implementation of the computational framework. This parallel algorithm is based on the three-block ADMM algorithm proposed by \cite{Sun2015A}. However, as noted by \cite{Chen2016The}, the directly extended three-block ADMM algorithm cannot theoretically guarantee convergence. To ensure convergence, \cite{Sun2015A} introduced updates for certain intermediate variables at each iteration. Although this algorithm has theoretical convergence properties, its application within a parallel computing structure results in excessive redundant intermediate variables. As pointed out by \cite{Lin2022Alternating}, these excessive redundant intermediate variables in ADMM algorithm can degrade solution precision and slow down algorithm convergence. Specifically, the parallel algorithm proposed by \cite{Wen2024Nonconvex} necessitates updating $3Kp$ variables per iteration, where $K$ represents the number of variable partitions. Clearly, in the context of ultra-high dimensions, if $K$ is notably large, the number of variables that the algorithm needs to update becomes excessive.

This paper aims to propose a unified optimization algorithm for solving both DS and nonconvex DS models, effectively mitigating the issue of excessive intermediate variables associated with the three-block ADMM in \cite{Wen2024Nonconvex}.  This new parallel algorithm is framed within the PPA (proximal point algorithms in \cite{Parikh2013Proximal})  structure.  Compared to the three-block ADMM algorithm in \cite{Wen2024Nonconvex}, our algorithm has the following three significant advantages.
\begin{enumerate}
\item The proposed algorithm requires updating only $3p$ variables in each iteration, which is independent of the number of variable partitions $K$. Fewer updated variables not only enhance the computational speed of each iteration, accelerating algorithm convergence, but also result in solutions with higher precision.

\item Our parallel algorithm exhibits partitioning insensitivity through specific structural design. Specifically, in parallel environment, the  solution of algorithm remains invariant regardless of how the matrix is partitioned.  The partition insensitivity ensures the reliability and accuracy of solutions in a parallel computing environment, independent of the chosen parallelization strategy. This facilitates a more flexible adaptation of computing resources to accommodate diverse computational requirements and hardware setups.

\item We provide a concise proof demonstrating that the proposed algorithm achieves a linear convergence rate. The linear convergence speed of the algorithm ensures stable progress and efficient computational performance, enabling reliable approximation of the exact solution in a shorter amount of time.
\end{enumerate}

The structure of this paper is organized as follows. In Section \ref{sec2}, we review relevant algorithms for solving DS. Section \ref{sec3} discusses two parallel PPA algorithms for solving DS and its nonconvex variants, including their insensitivity to algorithm partitioning and convergence properties. Synthetic numerical simulations are presented in Section \ref{sec4}, while real data is discussed in Section \ref{sec5}.  Section \ref{sec6} summarizes the findings and concludes the paper, with a discussion on future research directions. Technical proofs and supplementary experiments are included in the Appendix. The R package  for implementing the proposed algorithm can be obtained at \url{https://github.com/xfwu1016/PPADS}.

\textbf{Notations}:
$\bm 0_n$ and $\bm 1_n$ represent $n$-dimensional vectors with all elements being 0 and 1, respectively.
$\bm I_n$ represents the $n$-dimensional identity matrix.
The Hadamard product is denoted by $\odot$.
The sign$(\cdot)$ function is defined component-wise such that sign$(t) = 1$ if $t > 0$, sign$(t) = 0$ if $t = 0$, and sign$(t) = -1$ if $t < 0$.
$(\cdot)_+$ signifies the element-wise operation of extracting the positive component, while $|\cdot|$ denotes the element-wise absolute value function.
For any vector $\bm u$, $\|\bm u\|_1$, $\|\bm u\|_2$, and $\|\bm u\|_{\infty}$ denote the $\ell_1$ norm, the $\ell_2$ norm, and the $\ell_{\infty}$ norm of $\bm u$, respectively.
$\|\bm u\|_{\bm H} := \sqrt{\bm u^\top \bm H \bm u}$ is used to denote the norm of $\bm u$ under the matrix $\bm H$, where $\bm H$ is a matrix. If \( \bm H \) is positive semi-definite, we denote it as \( \bm H  \succeq \bm 0 \); if the matrix is positive definite, we denote it as \( \bm H  \succ \bm 0 \).

\section{Review of Related Algorithms}\label{sec2}
As the focus of this paper is on parallel computation, currently, the only algorithm capable of implementing parallel computation for the DS optimization objective function is ADMM in \cite{Wen2024Nonconvex}. Therefore, this paper solely reviews some ADMM algorithms. For detailed introductions to other first-order and second-order algorithms for computing DS, readers can refer to the literature on algorithms mentioned in the introduction of this paper. ADMM algorithm is an iterative optimization method designed to tackle complex convex minimization problems with linear constraints. It operates by decomposing the original problem into smaller, more manageable subproblems that are easier to solve. ADMM alternates between optimizing these subproblems and updating dual variables to enforce the constraints, making it particularly effective for large-scale problems and various statistical learning applications. For a comprehensive review, readers can refer to \cite{Boyd2010Distributed}.

In order to make (\ref{ods}) satisfy the condition that ADMM can handle problems, we need to introduce the auxiliary variable $\bm X^\top(\bm X \bm \beta - \bm y)  =  \bm z$.  To simplify the symbol, we make $\bm A = \bm X^\top \bm X$.  Then, (\ref{ods}) can be rewritten into the following equation constrained optimization form, 
\begin{equation}\label{ds}
\begin{aligned}
& \min_{\bm \beta, \bm z} \quad   \|\bm \beta\|_{1} +  \delta_{\mathcal{Z}_0(\bm z)}  \\
& \text{s.t.} \quad \bm A \bm \beta -  \bm z = \bm X^\top \bm y.
\end{aligned}   
\end{equation}
where  \( \mathcal{Z}_0(\bm z) = \left\{ \bm z : \| \bm z \|_{\infty} \leq n\lambda \right\}\) and \(
\delta_{\mathcal{Z}_0(\bm z)}  =
\begin{cases}
0, & \text{if } \bm z \in \mathcal{Z}_0, \\
+\infty, & \text{otherwise.}
\end{cases}
\)

The augmented Lagrangian form of (\ref{ds}) is
\begin{align}\label{intr1}
L_\mu(\bm \beta, \bm z; \bm u) = 
\| \bm \beta \|_1  + \delta_{\mathcal{Z}_0(\bm z)}  + \bm u^\top (  \bm A \bm \beta  -  \bm z   -  \bm X^\top \bm y ) +  \frac{\mu}{2}  \|    \bm A \bm \beta  -  \bm z   -  \bm X^\top \bm y \|_2^2,
\end{align}
where $\bm u$ is the dual variable corresponding to the linear constraint, and $\mu>0$ is a given  augmented parameter.

\subsection{ADMM and LADMM for DS}
Given \(( \bm{z}^0, \bm{u}^0)\), the iterative scheme of ADMM for problem (\ref{intr1}) is as follows,
\begin{equation}\label{twoupadmm}
\begin{aligned}
&\bm \beta^{t+1} \ \leftarrow  \mathop {\arg \min }\limits_{\bm \beta} \left\{ L_\mu(\bm \beta,  \bm z^t; \bm u^t)  \right \};\\
&\bm z^{t+1} \ \leftarrow  \mathop {\arg \min }\limits_{\bm z} \left\{L_\mu(\bm \beta^{t+1},  \bm z;  \bm u^t)  \right \}; \\
&\bm u^{t+1} \ \leftarrow  \bm{u}^{t}  + \mu( \bm A  \bm \beta^{t+1}  - \bm z^{t+1} - \bm X^\top \bm y).
\end{aligned}     
\end{equation}
Note that the ADMM iteration order mentioned in \cite{Lu2012An} is \(\bm{z} \to \bm{\beta} \to \bm{u}\). However, the order of the iterations does not affect the convergence or efficiency of the ADMM algorithm. The choice of the iteration order in (\ref{twoupadmm}) is made to maintain consistency with the LADMM presented in \cite{Wang2012The} and the parallel ADMM described in \cite{Wen2024Nonconvex}.

For the three subproblems of the ADMM iteration, the \(\bm \beta\) subproblem
\begin{align}\label{admmbeta}
\bm \beta^{t+1} \ \leftarrow  \mathop {\arg \min }\limits_{\bm \beta} \left\{ \| \bm \beta \|_1   +  \frac{\mu}{2}  \|    \bm A \bm \beta  -  \bm z^t   -  \bm X^\top \bm y  + \frac{\bm u^t}{\mu}\|_2^2 \right \},
\end{align}
which is similar to a Lasso problem and does not have a closed-form solution. \cite{Lu2012An} suggests using a nonmonotone gradient method to iteratively solve it. The \(\bm z\) subproblem has a closed-form solution (see Proposition 2.1 in \cite{Lu2012An}), while the \(\bm u\) subproblem involves a linear operation that is easy to implement. \cite{Wang2012The} pointed out that the lack of a closed-form solution in the \(\bm \beta\) step would impact the overall efficiency of the ADMM algorithm iteration. Therefore, it is recommended to use linearization methods to approximate this iteration step, ensuring that this subproblem yields a closed-form solution. Specifically, this involves adding a quadratic term to the objective function of the optimization problem for the \(\bm \beta\) subproblem, as follows, 
\begin{align}\label{ladmmbeta}
\bm \beta^{t+1} \ \leftarrow  \mathop {\arg \min }\limits_{\bm \beta} \left\{ \| \bm \beta \|_1   +  \frac{\mu}{2}  \|    \bm A \bm \beta  -  \bm z^t   -  \bm X^\top \bm y  + \frac{\bm u^t}{\mu}\|_2^2 + \frac{1}{2} \|\bm \beta - \bm \beta^t \|_{\bm S}^2 \right \},
\end{align}
where  $\bm S = \eta \bm I_p - \mu \bm A^\top \bm A$ and  $\eta$ is larger than the maximum eigenvalue of $\mu \bm A^\top \bm A$.  Rewrite the optimization formula above to eliminate some constant terms unrelated to the optimization variable  \(\bm \beta\), and the subproblem of  \(\bm \beta\) becomes
\begin{align}\label{2.3}
\bm \beta^{t+1} \ \leftarrow  \mathop {\arg \min }\limits_{\bm \beta} \left\{\| \bm \beta \|_1 + \frac{\eta}{2} \|\bm \beta - \bm \beta^t + \frac{\mu}{\eta} \bm A^\top (\bm A \bm \beta^t -  \bm z^t   -  \bm X^\top \bm y  + \frac{\bm u^t}{\mu})  \|_{2}^2 \right \}.
\end{align}
Clearly, the above objective function for optimization is a soft-thresholding operator and has a closed-form solution.

In recent years, the linearization technique of the ADMM algorithm has garnered significant attention and been applied in various fields, such as \cite{Li2014Linearized}, \cite{Gu2018ADMM}, \cite{Liang2024Linearized}, and \cite{Wu2025Multi}. It is worth mentioning that \cite{Liang2024Linearized} provides an efficient power method for computing the maximum eigenvalue of a positive semidefinite matrix in the design of LADMM. The power method itself is an iterative algorithm, but each iteration only involves matrix-vector multiplications, without matrix-matrix multiplications. In terms of numerical performance, it typically converges in a few dozen steps at most. In LADMM, note that \(\eta\) does not require a very accurate solution; it only needs to be greater than the maximum eigenvalue of the positive semidefinite matrix. Therefore, under relaxed convergence conditions, the power method can converge more quickly.  In this paper, we recommend using this efficient power method  to compute the maximum eigenvalue of all positive semidefinite matrices. 

\subsection{Parallel ADMM for DS}\label{sec22}
In a parallel computing environment, the estimation coefficients of DS can be divided into  \(K\) blocks, represented as 
$\bm \beta = (\bm \beta_{1\cdot}^\top, \bm \beta_{2\cdot}^\top, \cdots, \bm \beta_{K\cdot}^\top )^\top$,
where each $\bm \beta_{i\cdot} \in \mathbb{R}^{p_i}$ and  $\sum_{i=1}^K p_i = p$.
Correspondingly, we can partition the gram matrix \(\bm A = \bm X^\top \bm X\) into  $\bm A = (\bm A_1, \bm A_2,\cdots, \bm A_K)$, where \(\bm A_i \in \mathbb{R}^{p \times p_i}\). \cite{Wen2024Nonconvex} introduced slack variables \(\boldsymbol{\omega} = (\bm \omega_2, \cdots, \bm \omega_K)\), where \(\bm \omega_i \in \mathbb{R}^p\) for \(i \in \{2, \cdots, K\}\), to reformulate the optimization problem as follows:
\begin{align}\label{pds}
\begin{aligned}
\min_{\bm \beta, \bm z, \boldsymbol{\omega}} &\quad \sum_{i=1}^{K}  \|\bm \beta_i\|_{1} +  \delta_{\mathcal{Z}_0(\bm z)}  \\
 \text{s.t.} &\quad \bm A_1 \bm \beta_{1\cdot} + \bm \omega_2 + \cdots + \bm \omega_K -  \bm z = \bm X^\top \bm y, \\
&\quad \omega_i = \bm A_i \bm \beta_i, \quad i \in \{2, \ldots, K\}.
\end{aligned}
\end{align}
 The augmented Lagrangian form of (\ref{pds}) is
\begin{align}
    L_\mu(\bm \beta, \bm z, \bm \omega;  \bm u) & = 
\sum_{i=1}^{K}  \|\bm \beta_{i\cdot}\|_1   + \delta_{\mathcal{Z}_0(\bm z)}  + \bm u_1^\top (  \bm A_1 \bm \beta_{1\cdot}  + \sum_{i=2}^ {K}\bm \omega_i -  \bm z   -  \bm X^\top \bm y ) \\\nonumber
& +  \frac{\mu}{2}  \|   \bm A_1 \bm \beta_{1\cdot}  + \sum_{i=2} ^{K}\bm \omega_i -  \bm z   -  \bm X^\top \bm y \|_2^2\\\nonumber
& + \sum_{i=2}^{K} \left[ \bm u_i^\top  \left( \bm A_i  \bm \beta_{i\cdot} - \bm \omega_i \right) +  \frac{\mu}{2}  \|    \bm A_i  \bm \beta_{i\cdot} - \bm \omega_i  \|_2^2 \right].
\nonumber
\end{align}

It is easy to see that the separability of the \( K \) blocks of \( \bm \beta \) in the augmented Lagrangian function enables a natural parallelization for updating \(\bm \beta \). Similar to the iterative steps of \cite{Sun2015A} and \cite{Wen2023Feature}, \cite{Wen2024Nonconvex} uses the following iterative procedure,
\begin{equation}
\left\{\begin{aligned}\notag
&\bm \beta_{1\cdot}^{t+1}  \leftarrow  \mathop {\arg \min }\limits_{\bm \beta_{1\cdot}} \left\{ \|\bm \beta_{1\cdot}\|_1 + \frac{\mu}{2}  \|   \bm A_1 \bm \beta_{1\cdot}  + \sum_{i=2} ^{K}\bm \omega_i^t -  \bm z^t   -  \bm X^\top \bm y  + \frac{\bm u_1^t}{\mu} \|_2^2 +  \frac{1}{2} \|\bm \beta_{1\cdot} - \bm \beta_{1\cdot}^t \|_{\bm S_1}^2  \right \};\\ %
&\bm \beta_{i\cdot}^{t+1}  \leftarrow  \mathop {\arg \min }\limits_{\bm \beta_{i\cdot}} \left\{ \|\bm \beta_{i\cdot}\|_1 + \frac{\mu}{2}  \|    \bm A_i  \bm \beta_{i\cdot} - \bm \omega_i^t + \frac{\bm u_i^t}{\mu} \|_2^2 +  \frac{1}{2} \|\bm \beta_{i\cdot} - \bm \beta_{i\cdot}^t \|_{\bm S_i}^2  \right \} , i \in \{2, \cdots, K\};\\ %
&\bm \omega_i^{t+\frac{1}{2}}  \leftarrow  (\bm X^\top \bm y + \bm z^t  + K  \bm A_i \bm \beta_{i\cdot}^{t+1} - \bm A \bm \beta^{t+1} ) /K,  i \in \{2, \cdots, K\}; \\ %
&\bm z^{t+1}  \leftarrow  \mathop {\arg \min }\limits_{\bm z} \left\{ \delta_{\mathcal{Z}_0(\bm z)}  +   \frac{\mu}{2}  \|   \bm A_1 \bm \beta_{1\cdot}^{t+1}  + \sum_{i=2} ^{K}\bm \omega_i^{t+\frac{1}{2}} -  \bm z   -  \bm X^\top \bm y  + \frac{\bm u_1^t}{\mu} \|_2^2  \right \}; \\%
&\bm \omega_i^{t+1}  \leftarrow  (\bm X^\top \bm y + \bm z^{t+1}  + K  \bm A_i \bm \beta_{i\cdot}^{t+1} - \bm A \bm \beta^{t+1} ) /K,  i \in \{2, \cdots, K\}; \\ %
&\bm u_1^{k+1} \leftarrow  \bm{u}_1^{k}  + \mu(\bm A_1 \bm \beta_{1\cdot}^{t+1}  + \sum_{i=2}^ {K}\bm \omega_i^{t+1} -  \bm z^{t+1}   -  \bm X^\top \bm y );\\%
&\bm u_i^{k+1} \leftarrow   \bm{u}_i^{k} + \mu \left( \bm A_i  \bm \beta_{i\cdot}^{t+1} - \bm \omega_i^{t+1} \right),  i \in \{2, \cdots, K\}. 
\end{aligned} \right.
\end{equation}

The updates for \(\bm \beta_{1\cdot}\) and \(\bm \beta_{i\cdot}\) are carried out after linearization, where \(\bm S_i = \eta_i \bm I_p - \mu \bm A_i^\top \bm A_i\), with \(\eta_i\) required to exceed the maximum eigenvalue of \(\mu \bm A_i^\top \bm A_i\). It is not difficult to observe that, similar to (\ref{2.3}), the updates for  \(\bm \beta_{1\cdot}\) and \(\bm \beta_{i\cdot}\) have closed-form solutions. The update for \(\bm z\) is analogous to (\ref{twoupadmm}), and a closed-form solution can also be derived based on Proposition 2.1 from \cite{Lu2012An}. The updates of $\bm \omega_i$ and $\bm u_i$ are simple and easy to implement algebraic operations. Clearly, $\bm \omega_i$ and $\bm u_{i}$ are ultra-high dimensional  vector constructed to achieve parallel computing effects. Thus, \( \bm{\omega} \) (including all \( \bm{\omega}_i \)) is a \((K-1)p\)-dimensional vector, while \( \bm{u} \) (including all \( \bm{u}_i \)) is a \( Kp \)-dimensional vector. Due to its excessively large dimensionality, it reduces the convergence speed of the algorithm during implementation and lowers the precision of the solution.

\section{Proximal Point Algorithms for DS}\label{sec3}
Proximal Point Algorithms (PPA) are iterative optimization methods used to solve convex optimization problems, particularly those involving non-smooth functions. The core idea of these algorithms is to replace the original problem with a series of simpler subproblems that can be more easily solved using the proximal operator. The ability of PPA to handle non-smooth losses and the iterative form of variable splitting are similar to those of the ADMM algorithm. \cite{Parikh2013Proximal} offers a comprehensive overview and introduction to PPA and proximal operators, while \cite{Cai2013A} delves into a detailed discussion on the relationship between PPA and ADMM algorithms.

However, in the fields of statistics and machine learning, PPA is far less well-known than ADMM. This difference may be attributed to three main reasons: first, comprehensive review articles on ADMM (\cite{Boyd2010Distributed}) were published earlier; second, PPA and ADMM have a high degree of overlap in their applicable scenarios; and finally, in practice, PPA iterative forms tend to be more flexible and varied, whereas the alternating iteration format of the ADMM algorithm is more widely recognized. Using the two PPAs for solving DS proposed by \cite{He2017Splitting} (Algorithms 4 and 5 in their paper) as examples, these PPAs consist of two iterative steps. The first step, akin to the ADMM iteration, is called the prediction step, while the second step is referred to as the correction step, which essentially serves as a relaxation step. However, the prediction phase does not have a fixed iteration sequence like the ADMM algorithm (where one variable is held constant while the other is updated); instead, it allows for a variable iteration order. In their Algorithm 4, the dual variables are updated first, followed by the primal variables, whereas in  their Algorithm 5, the primal variables are updated first, followed by the dual variable.
This variability, while making the PPA approach more flexible for solving problems, also hinders its application in other disciplines due to its lack of memorability.

In this paper, in order to make PPA easier to remember and use, we adopt an iterative approach similar to the ADMM algorithm, which is a primal-dual alternating iteration. Next,  we will demonstrate that the PPA is more suitable than ADMM for solving DS in the context of parallel computation. Before introducing the PPA algorithm proposed in this paper, we  first review the constrained optimization form of DS,
\begin{align}\label{ds2}
\begin{aligned}
& \min_{\bm \beta, \bm z} \quad   \|\bm \beta\|_{1} +  \delta_{\mathcal{Z}_0(\bm z)}  \\
& \text{s.t.} \quad \bm A \bm \beta -  \bm z = \bm X^\top \bm y.
\end{aligned}
\end{align}
The Lagrange multiplier form of (\ref{ds2}) is   
\[
L(\bm \beta, \bm z;  \bm u) = 
\|\bm \beta\|_1   + \delta_{\mathcal{Z}_0(\bm z)}  - \bm u^\top (  \bm A \bm \beta  -  \bm z   -  \bm X^\top \bm y ),\]
where $\bm u$  is  a Lagrange multiplier (dual variable).

\subsection{Nonparallel version}
With the Lagrangian form of DS, we can introduce our PPA method in the case where the features are not split.
With given \(( \bm \beta^0, \bm{z}^0, \bm{u}^0)\), the iterative scheme of the PPA for problem (\ref{ds2}) is as follows:
\begin{equation}\label{nppa}
\begin{aligned}
&\bm \beta^{t+1} \leftarrow \mathop {\arg \min }\limits_{\bm \beta} \left\{ L(\bm \beta,  \bm z^t; \bm u^t)  + \frac{\mu}{2} \|\bm A (\bm \beta - \bm \beta^t) \|_{2}^2   \right \}; \\
&\bm z^{t+1} \leftarrow \mathop {\arg \min }\limits_{\bm z} \left\{ L(\bm \beta^{t+1},  \bm z;  \bm u^t) + \frac{\mu}{2} \| \bm z - \bm z^t \|_{2}^2 \right \}; \\
&\bm u^{t+1} \leftarrow \bm{u}^{t} - \frac{\mu}{2}\left[ 2(\bm A \bm \beta^{t+1} - \bm z^{t+1} - \bm X^\top \bm y)  - (\bm A \bm \beta^{t} - \bm z^{t} - \bm X^\top \bm y)  \right];
\end{aligned} 
\end{equation}
where \(\mu > 0\) is a given augmented parameter. Note that the Lagrange multiplier function $L$ used in equation (\ref{nppa}) differ from the augmented Lagrange multiplier $L_\mu$ employed in ADMM and LADMM, as they do not include quadratic augmented terms. However, the quadratic augmented terms are crucial for the convergence of ADMM and PPA, as discussed in \cite{Boyd2010Distributed} and \cite{Parikh2013Proximal}. Therefore, we also introduce the corresponding quadratic augmented terms for the subproblems of $\bm \beta^{t+1}$ and $ \bm z^{t+1}$, denoted as $\|\bm A (\bm \beta - \bm \beta^t) \|_{2}^2$  and  $\| \bm z - \bm z^t \|_{2}^2$, respectively. Compared to the quadratic augmented term in ADMM or LADMM, the augmented term in PPA is easier to manage and more conducive to expansion in parallel computing environments.

For the $\bm \beta$ subproblem of PPA, by organizing the optimized form of $\bm \beta$, it can be concluded that,  \begin{align*}
\bm \beta^{t+1} \leftarrow  \mathop {\arg \min }\limits_{\bm \beta} \left\{\| \bm \beta \|_1 + \frac{\mu}{2} \| \bm A \bm \beta - \bm A \bm \beta^t - \frac{\bm u^t}{\mu}   \|_{2}^2 \right \}.
\end{align*}
Similar to (\ref{admmbeta}), it is a Lasso problem without a closed-form solution. Thus, we can also use linearization techniques to give it a closed-form solution, that is
\begin{align}\label{12}
\bm \beta^{t+1} &\leftarrow  \mathop {\arg \min }\limits_{\bm \beta} \left\{\| \bm \beta \|_1 + \frac{\mu}{2} \| \bm A \bm \beta - \bm A \bm \beta^t - \frac{\bm u^t}{\mu}   \|_{2}^2 + \frac{1}{2} \|\bm \beta - \bm \beta^t \|_{\bm S}^2  \right \} \\\notag
&=  \mathop {\arg \min }\limits_{\bm \beta} \left\{\| \bm \beta \|_1 + \frac{\eta}{2} \|\bm \beta - \bm \beta^t - \frac{\bm A^\top \bm u^t}{\eta}   \|_{2}^2 \right \},
\end{align}
 where \(\bm S = \eta \bm I_p - \mu \bm A^\top \bm A\), and \(\eta\) is chosen to be larger than the maximum eigenvalue of \(\mu \bm A^\top \bm A\)  ($\text{eigen}(\mu \bm A^\top \bm A )$) to ensure that the \(\bm S\) matrix is positive definite.
Clearly, the above expression is a soft thresholding operator, which has the following closed-form solution:
 \begin{align}\label{ppabeta2}
\bm \beta^{t+1}  \leftarrow \text{sign}( \bm \beta^t + \frac{\bm A^\top \bm u^t}{\eta}) \ \odot \max \left\{ |\bm \beta^t + \frac{\bm A^\top \bm u^t}{\eta}| - \frac{1}{\eta}, 0 \right\}.
\end{align}

The second subproblem regarding \(\bm z \) in (\ref{nppa})
\begin{align*}
\bm z^{t+1} \leftarrow  \mathop {\arg \min }\limits_{\bm z} \left\{ \delta_{\mathcal{Z}_0(\bm z)} + \frac{\mu}{2} \|\bm z - \bm z^t + \frac{ \bm u^t}{\mu}   \|_{2}^2 \right \},
\end{align*}
is similar to the \(\bm z \)  problem in the ADMM proposed by \cite{Lu2012An}, with the main difference being the quadratic augmented term. However, this distinction does not affect its closed-form solution property. Similar to Proposition 2.1 in \cite{Lu2012An}, the closed-form solution is as follows:
\begin{align}\label{z}
\bm z^{t+1}  \leftarrow \min \left\{ \max \left\{ \bm z^t - \frac{ \bm u^t}{\mu}, -n \lambda \right\},  n \lambda \right\}.
\end{align}

For the update of the dual variable \(\bm u^{t+1} \), PPA includes an additional term \( (\bm A \bm \beta^{t} - \bm z^{t} - \bm X^\top \bm y)  \) compared to the update of \( \bm u^{t+1} \) in LADMM. However, please note that this additional term has already been computed during the update of \( \bm u^{t} \), so it does not require any extra computation during the update process. The iterative steps of PPA algorithm are summarized in Algorithm \ref{alg1}. Although calculating the maximum eigenvalue of \( \bm A^\top \bm A \) is necessary for determining the value of \( \eta \), the power method in \cite{Liang2024Linearized} simplifies this by only requiring the input of \( \bm A \). Consequently, the precomputation step in Algorithm \ref{alg1} avoids the need to explicitly compute \( \bm A^\top \bm A \), which can be computationally intensive, particularly when $p$ is large.
\begin{algorithm}\small
\caption{\small{PPA for DS}}
\label{alg1}
\begin{algorithmic}
\State {\textbf{Input:}  $\bm X, \bm y, \mu$,  $\lambda$ and $\bm \beta^0, \bm z^0$ and $\bm u^0$.}
\State {\textbf{Pre-computation}: $\bm X^\top \bm y$, $\bm A = \bm X^\top \bm X $ and $\eta =\text{eigen}(\mu \bm A^\top \bm A ) + 1$.} 
\State {\textbf{Output:} the total number of iterations $T$,  $\boldsymbol{\beta}^T$. }
\State {\textbf{while} not converged \textbf{do}}
\State {\qquad \quad 1. Update $\bm \beta^{t+1}$ using (\ref{ppabeta2}),\\%
 \qquad \quad 2. Update $\bm z^{t+1}$ using (\ref{z}), \\%
\qquad \quad 3. Update \( \bm u^{t+1} \) using the last equation in (\ref{nppa}),\\%
\qquad \quad 4. $t \leftarrow  t + 1$.
}
\State {\textbf{end while}}
\State {\textbf{return} solution}.
\end{algorithmic}
\end{algorithm}

It is easy to see that the updates for \(\bm \beta ^{t+1}\) and \(\bm z^{t+1} \) in PPA have slightly lower computational complexity compared to LADMM. Specifically, in LADMM, the primary computational burdens for the updates of \( \bm \beta ^{t+1} \) and \(\bm z^{t+1} \) arise from the need to compute \( \bm A^\top \bm A \bm \beta^{t} \) and \( \bm X^\top \bm X \bm\beta^{t+1} \), respectively. The \(\bm \beta ^{t+1}\) update in PPA only requires computing \( \bm A^\top \bm u^t \), while the \( \bm z^{t+1} \) update involves simple operations without the need for matrix-vector multiplication. The advantage of this computational complexity is not very pronounced in a non-parallel computing environment; however, in a parallel computing setting, PPA significantly saves on both storage and computational power compared to the parallel ADMM proposed by \cite{Wen2024Nonconvex}.

\subsection{Parallel version}
To adapt to the parallel computing environment, we need to rewrite the constraint form of DS. Unlike \cite{Wen2024Nonconvex}, which requires the introduction of an additional $(K-1)p$-dimensional auxiliary variable \( \bm \omega \), our reformulation does not require the addition of any auxiliary variables. The constrained optimization form of DS solved by PPA in a parallel computing environment is as follows:
\begin{align}\label{ds3}
\begin{aligned}
& \min_{\bm \beta, \bm z} \quad   \sum_{i=1}^{K}\|\bm \beta_{i\cdot}\|_{1} +  \delta_{\mathcal{Z}_0(\bm z)}  \\
& \text{s.t.} \quad  \sum_{i=1}^{K} \bm A_i \bm \beta_{i\cdot} -  \bm z = \bm X^\top \bm y,
\end{aligned}
\end{align}
where \(\bm A_i = \bm X^\top \bm X_i\) for \(i = 1, \ldots, K\) and \((\bm X_1, \bm X_2, \ldots, \bm X_K) = \bm X\).
Since \(  \sum_{i=1}^{K}\|\bm \beta_{i\cdot}\|_{1}  =  \|\bm \beta\|_1\) and  $\sum_{i=1}^{K} \bm A_i \bm \beta_{i\cdot}  = \bm A \bm \beta $, (\ref{ds3}) and (\ref{ds2}) are equal. This means that the convergence solutions of the DS obtained using iterative algorithms in both parallel and non-parallel computing environments are the same.

The Lagrange multiplier form of (\ref{ds3}) is   
\begin{align}\label{lmds}
L(\bm \beta, \bm z;  \bm u) = 
\sum_{i=1}^{K}\|\bm \beta_{i\cdot}\|_{1}  + \delta_{\mathcal{Z}_0(\bm z)}  - \bm u^\top (   \sum_{i=1}^{K} \bm A_i \bm \beta_{i\cdot} -  \bm z - \bm X^\top \bm y ).
\end{align}
With given \(( \bm \beta^0, \bm{z}^0, \bm{u}^0)\), the iterative scheme of the  parallel PPA (with linearization) for problem (\ref{ds3}) is as follows:
 
\begin{align}\notag
\label{nppa2}
&\bm \beta_{i\cdot}^{t+1} \leftarrow \mathop {\arg \min }\limits_{\bm \beta_{i\cdot}} \left\{ L(\bm \beta,  \bm z^t; \bm u^t)  + \frac{\mu}{2} \|\bm A_i (\bm \beta_{i\cdot} - \bm \beta_{i\cdot}^t) \|_{2}^2 + \frac{1}{2} \|\bm \beta_{i\cdot} - \bm \beta_{i\cdot}^t \|_{\bm S_i}^2  \right \}, i \in \{1, \cdots, K\}; \\ %
&\bm z^{t+1} \leftarrow \mathop {\arg \min }\limits_{\bm z} \left\{ L(\bm \beta^{t+1},  \bm z;  \bm u^t) + \frac{\mu}{2} \| \bm z - \bm z^t \|_{2}^2 \right \}; \\ %
&\bm u^{t+1} \leftarrow \bm{u}^{t} - \frac{\mu}{2}\left[ 2(\sum_{i=1}^{K} \bm A_i \bm \beta_{i\cdot}^{t+1} - \bm z^{t+1} - \bm X^\top \bm y)  - (\sum_{i=1}^{K} \bm A_i \bm \beta_{i\cdot}^{t} - \bm z^{t} - \bm X^\top \bm y)  \right];\notag
\end{align} 
where $\mu>0$ is a given  augmented parameter, $\bm S_i = \eta \bm I_{p_i} - \mu \bm A_i^\top \bm A_i$, and  $\eta$ is larger than the maximum eigenvalue of $\mu \bm A^\top \bm A$. Note that \(\eta\) is always greater than  the maximum eigenvalue of $\mu \bm A_i^\top \bm A_i$ because \(\bm A_i\) is a submatrix of \(\bm A\). This ensures the positive definiteness of \(\bm S_i\).  It is straightforward to verify that when \( K = 1 \), \( \bm S_1 = \bm S \) holds, thereby reducing the parallel PPA into Algorithm \ref{alg1}.

For the \(\bm \beta_{i\cdot}\) subproblem of the parallel PPA, by organizing the optimized form of \(\bm \beta_{i\cdot}\) , we can conclude that 
\[
\bm \beta_{i\cdot} ^{t+1} \leftarrow  \mathop {\arg \min }\limits_{\bm \beta_{i\cdot}} \left\{\| \bm \beta_{i\cdot} \|_1 + \frac{\eta}{2} \|\bm \beta_{i\cdot} - \bm \beta_{i\cdot}^t - \frac{\bm A_i^\top \bm u^t}{\eta}   \|_{2}^2 \right \},
\]
which has the  closed-form solution
 \begin{align}\label{ppabeta3}
\bm \beta_{i\cdot}^{t+1}  \leftarrow \text{sign}( \bm \beta_{i\cdot}^t + \frac{\bm A_{i}^\top \bm u^t}{\eta}) \ \odot \max \left\{ |\bm \beta_{i\cdot}^t + \frac{\bm A_{i}^\top \bm u^t}{\eta}| - \frac{1}{\eta}, 0 \right\}.
\end{align}
It is clear that the updates of different \(\bm \beta_{i\cdot}^{t+1}\) are independent of each other and can be implemented in parallel. Note that in the process of updating \(\bm \beta_{i\cdot}^{t+1}\), we only need to compute \(\bm A_{i}^\top \bm u^t\), unlike in the updates of  \(\bm \beta_{i\cdot}^{t+1}\) in parallel ADMM  in Section \ref{sec22}, which require multiple matrix-vector multiplications.
This also makes our parallel algorithm more intuitive and concise compared to \cite{Wen2024Nonconvex}, which introduces an auxiliary variable \(\bm \omega\) to construct the parallel structure. 

The updates for parallel versions of $\bm z^{t+1}$ and $\bm u^{t+1}$ are the same as those for non parallel version. Similarly, due to the simpler construction of our parallel structure, the intermediate iterations in our approach are also more concise compared to the updates of $\bm z^{t+1}$ and $\bm u_i^{t+1}$ in parallel ADMM. In addition, our dual variable consists of only one \(p\)-dimensional $\bm u^{t+1}$, unlike parallel ADMM, which introduces multiple constraints for the parallel structure and requires iterating over \(K\) \(p\)-dimensional vectors $\bm u_i^{t+1}$.
\begin{algorithm}\small
\caption{\small{Parallel PPA for DS}}
\label{alg2}
\begin{algorithmic}
\State {\textbf{Input:} $\bm X, \bm y, \mu$,  $\lambda$, $\bm \beta^0, \bm z^0$ and $\bm u^0$.}
\State {\textbf{Pre-computation}: $\bm X^\top \bm y$,  $\bm A_i = \bm X^\top \bm X_i$ and $\eta =\text{eigen}(\mu \bm A^\top \bm A ) + 1$.} 
\State {\textbf{Output:} the total number of iterations $T$,  $\boldsymbol{\beta}^T$. }
\State {\textbf{while} not converged \textbf{do}}
\State {\qquad \quad 1. Update  \(\bm \beta_{i\cdot}^{t+1}\)  in parallel  using (\ref{ppabeta3}),\\%
 \qquad \quad 2. Update $\bm z^{k+1}$ using (\ref{z}), \\%
\qquad \quad 3. Update \( \bm u^{k+1} \) using the last equation in (\ref{nppa2}),\\%
\qquad \quad 4. $t \leftarrow  t + 1$.
}
\State {\textbf{end while}}
\State {\textbf{return} solution}.
\end{algorithmic}
\end{algorithm}

We summarize the process of parallel PPA in Algorithm \ref{alg2}. As discussed earlier, when \( K = 1 \), Algorithm \ref{alg2} reduces to Algorithm \ref{alg1}.  Note that in the precomputation of Algorithm \ref{alg2}, we only need to compute the submatrices \(\bm A_i\) of \(\bm A\), rather than the entire matrix \(\bm A\), because the updates for \(\bm \beta\), \(\bm z\), and \(\bm u\) do not require \(\bm A\).  However, the power method (\cite{Liang2024Linearized}) requires \(\bm A\) to calculate \(\eta\). To avoid additional computational burden, we can combine the individual \(\bm A_i\) into \(\bm A\) and substitute it into the power method to obtain \(\text{eigen}(\mu \bm A^\top \bm A )\).

From the above discussion, it can be seen that Algorithm  \ref{alg1} and  Algorithm \ref{alg2} ($K \ge 2$) have the same variable updates, except for the update of \(\bm \beta^{t+1}\) . Therefore, is there any relationship or connection between the solutions of these two algorithms? Through some simple matrix decompositions, we can prove that the solutions of Algorithm \ref{alg1} and  Algorithm \ref{alg2} are the same. We present this interesting conclusion in Theorem \ref{TH1}.
For ease of distinction,  let us denote  $\left\{ \hat{\bm \beta}^t, \hat{\bm z}^t, \hat{\bm u}^t\right\}$ as the $t$-th iteration results of Algorithm  \ref{alg1}, and  $\left\{ \tilde{\bm \beta}^t, \tilde{\bm z}^t , \tilde{\bm u}^t \right\}$ as the $t$-th iteration results of Algorithm  \ref{alg2}.

\

\begin{thm}\label{TH1}
(Partition insensitivity). If we use the same initial iteration variables \(\{ \hat{\bm \beta}^0, \hat{\bm z}^0, \hat{\bm u}^0 \} = \{ \tilde{\bm \beta}^0, \tilde{\bm z}^0 , \tilde{\bm u}^0 \}\), the iterative solutions obtained from Algorithm \ref{alg1} and Algorithm \ref{alg2} (for any \(K\)) are indeed the same, that is, 
\begin{align}
\left\{ \hat{\bm \beta}^t, \hat{\bm z}^t, \hat{\bm u}^t \right\} = \left\{ \tilde{\bm \beta}^t, \tilde{\bm z}^t , \tilde{\bm u}^t \right\}, \quad \text{for all } t.
\end{align}
\end{thm}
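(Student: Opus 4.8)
The plan is to prove the identity by induction on the iteration index $t$. The base case $t=0$ holds by the hypothesis that both algorithms are initialized identically. For the inductive step I would assume $\{\hat{\bm\beta}^t, \hat{\bm z}^t, \hat{\bm u}^t\} = \{\tilde{\bm\beta}^t, \tilde{\bm z}^t, \tilde{\bm u}^t\}$ and show that the three updates of Algorithm \ref{alg1} reproduce, blockwise, the corresponding updates of Algorithm \ref{alg2}. Everything rests on two elementary structural identities coming from the partition $\bm A = (\bm A_1, \ldots, \bm A_K)$ and $\bm\beta = (\bm\beta_{1\cdot}^\top, \ldots, \bm\beta_{K\cdot}^\top)^\top$: first, stacking $\bm A^\top = (\bm A_1^\top, \ldots, \bm A_K^\top)^\top$ shows that the $i$-th block of $\bm A^\top \bm u$ equals $\bm A_i^\top \bm u$; second, $\sum_{i=1}^K \bm A_i \bm\beta_{i\cdot} = \bm A\bm\beta$, which the paper already records.

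For the $\bm\beta$ step, the decisive observation is that the soft-thresholding operator in (\ref{ppabeta2}) acts coordinatewise, hence blockwise. Restricting (\ref{ppabeta2}) to the $i$-th block and substituting $(\bm A^\top \hat{\bm u}^t)_i = \bm A_i^\top \hat{\bm u}^t = \bm A_i^\top \tilde{\bm u}^t$ (the last equality by the inductive hypothesis on $\bm u$) together with $\hat{\bm\beta}_{i\cdot}^t = \tilde{\bm\beta}_{i\cdot}^t$ yields exactly the parallel update (\ref{ppabeta3}); assembling the $K$ blocks gives $\hat{\bm\beta}^{t+1} = \tilde{\bm\beta}^{t+1}$. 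The $\bm z$ step is immediate, since formula (\ref{z}) is literally identical in both algorithms and depends only on $\hat{\bm z}^t = \tilde{\bm z}^t$ and $\hat{\bm u}^t = \tilde{\bm u}^t$, so $\hat{\bm z}^{t+1} = \tilde{\bm z}^{t+1}$. Finally, for the dual step I would use $\sum_{i=1}^K \bm A_i \tilde{\bm\beta}_{i\cdot}^{t+1} = \bm A\tilde{\bm\beta}^{t+1} = \bm A\hat{\bm\beta}^{t+1}$ and the analogous identity at index $t$ to rewrite the last line of (\ref{nppa2}) as the last line of (\ref{nppa}); since every ingredient on the right-hand side now agrees, $\hat{\bm u}^{t+1} = \tilde{\bm u}^{t+1}$, which closes the induction.

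The proof is short, so rather than a genuine obstacle the point requiring care is the verification that legitimizes the argument: one must confirm that the proximal (soft-thresholding) map of $\|\cdot\|_1$ is fully separable across coordinates, which is precisely what allows an arbitrary partition into $K$ blocks of arbitrary sizes to leave the combined update unchanged, and that both algorithms employ the \emph{same} scalar $\eta$. The latter is ensured by the shared precomputation step $\eta = \text{eigen}(\mu\bm A^\top\bm A) + 1$ in Algorithms \ref{alg1} and \ref{alg2}; were the step sizes $1/\eta$ to differ between the two schemes, the thresholds in (\ref{ppabeta2}) and (\ref{ppabeta3}) would no longer coincide and the blockwise identity would fail. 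Once these two facts are in place the induction goes through verbatim for every $K$.
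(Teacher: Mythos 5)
Your induction is exactly the argument the paper gives in Appendix \ref{C1}: assume equality at step $t$, decompose the $\bm\beta$-update of (\ref{ppabeta2}) blockwise via $\bm A=(\bm A_1,\dots,\bm A_K)$ to recover (\ref{ppabeta3}), note the $\bm z$-updates coincide, and use $\sum_{i=1}^K\bm A_i\bm\beta_{i\cdot}=\bm A\bm\beta$ for the dual step. Your additional remark that the shared scalar $\eta$ is essential is a correct and worthwhile observation, but the route is the same as the paper's.
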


Although the conclusion of Theorem \ref{TH1} is surprising, its proof is quite straightforward. We only need to partition the update equation for \(\bm \beta^{t+1}\) in (\ref{ppabeta2}) based on the columns of \(\bm A\) into \(K\) parts, which results in the update equation in (\ref{ppabeta3}).  In fact, the constrained optimization problem in Algorithm \ref{alg2}  involves \(K + 1\) primal variables ($\{\bm \beta_{i\cdot}\}_{i=1}^K$  and $\bm z$). The special structure of the proposed PPA algorithm allows the updates of the variables in Algorithm \ref{alg2} to be independent. Thus, $\{\bm \beta_{i\cdot}\}_{i=1}^K$ can be concatenated and treated as a single primal variable $\bm \beta$. This is also the underlying reason for the validity of the conclusion of Theorem \ref{TH1}.  \textcolor{red}{Moreover, we provide a rigorous mathematical derivation of Theorem \ref{TH1} in Appendix \ref{C1}}.

Theorem \ref{TH1} states that regardless of how \(\bm A\) is partitioned by columns, the solution of Algorithm \ref{alg2} remains unchanged and is the same as the solution of Algorithm \ref{alg1}. In \cite{Wu2023Partition}, this property is referred to as partition insensitivity. However, in the work of \cite{Wu2023Partition}, their partitioning is done by rows, while in this paper, the partitioning is done by columns. The insensitivity of  partition ensures the reliability and accuracy of solutions in a parallel computing environment, regardless of the chosen parallelization strategy. It allows for a more flexible adjustment of computing resources to meet varying computational needs and hardware configurations. The insensitivity of the partition is not observed in the parallel ADMM discussed in \cite{Wen2024Nonconvex}, as the nature of the constraints managed by the algorithm varies with \( K \).

Based on the conclusion of Theorem \ref{TH1}, it is evident that the solutions obtained by the two algorithms are the same. Therefore, when discussing convergence, it suffices to consider only one of the algorithms.
Next, we will provide the global convergence of two PPAs and their linear convergence rate. The proof of the theorem is attached in Appendix \ref{A2}.

\

\begin{thm}\label{TH2}
Let the sequence \( \left\{ \bm g^t= ( \bm{\beta}^t, \bm z^t, \bm u^t) \right\} \) be generated by  Algorithm \ref{alg1} or Algorithm \ref{alg2}. 
\begin{enumerate}
\item (Algorithm global convergence). It converges to some \( \bm g^*= (\bm{\beta}^*, \bm z^*, \bm u^*) \) that belongs to \( \Omega^* \), where \( \Omega^* \) denotes the set of saddle points of (\ref{lmds}) and is assumed to be non-empty.
\item (Linear convergence rate).  For any integer \(T > 0\), we have
\begin{align}\label{p25}
\|\bm g^T - \bm g^{T+1} \|_{\bm{H}}^2 \leq \frac{1}{\left(T+1\right)}  \|\bm g^0 - \bm g^*\|_{\bm{H}}^2,
\end{align}
where $\bm {H}$ is a  positive definite matrix, and can be found in (\ref{m1}).
\end{enumerate}
\end{thm}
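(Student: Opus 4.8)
The plan is to recast Algorithm \ref{alg1} as a proximal-point step in the variational-inequality (VI) framework and then extract both claims from a single Fej\'er-type inequality; by Theorem \ref{TH1} the iterates of Algorithms \ref{alg1} and \ref{alg2} coincide, so it suffices to treat Algorithm \ref{alg1}. Write $\bm g=(\bm\beta,\bm z,\bm u)$, set $\theta(\bm\beta,\bm z)=\|\bm\beta\|_1+\delta_{\mathcal Z_0}(\bm z)$, and introduce the affine operator attached to the saddle-point system of (\ref{lmds}),
\[
\mathcal F(\bm g)=\begin{pmatrix}-\bm A^\top\bm u\\ \bm u\\ \bm A\bm\beta-\bm z-\bm X^\top\bm y\end{pmatrix}.
\]
Then $\Omega^*$ is exactly the solution set of the VI $\theta(\bm\beta,\bm z)-\theta(\bm\beta^*,\bm z^*)+(\bm g-\bm g^*)^\top\mathcal F(\bm g^*)\ge0$ for all $\bm g$. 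Because the linear part of $\mathcal F$ is skew-symmetric, $\mathcal F$ is monotone and in fact $(\bm g-\bm g')^\top(\mathcal F(\bm g)-\mathcal F(\bm g'))=0$ for all $\bm g,\bm g'$; this identity is used throughout.

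First I would record the first-order conditions of the subproblems. Since $\bm S=\eta\bm I_p-\mu\bm A^\top\bm A$, the quadratic part of the linearized $\bm\beta$-update collapses to $\tfrac\eta2\|\bm\beta-\bm\beta^t\|_2^2$, so $\bm A^\top\bm u^t-\eta(\bm\beta^{t+1}-\bm\beta^t)\in\partial\|\bm\beta^{t+1}\|_1$; the $\bm z$-update gives $-\bm u^t-\mu(\bm z^{t+1}-\bm z^t)\in\partial\delta_{\mathcal Z_0}(\bm z^{t+1})$; and the dual step, after the substitution $\bm A\bm\beta^t-\bm z^t-\bm X^\top\bm y=(\bm A\bm\beta^{t+1}-\bm z^{t+1}-\bm X^\top\bm y)-\bm A(\bm\beta^{t+1}-\bm\beta^t)+(\bm z^{t+1}-\bm z^t)$, rearranges into $\bm A\bm\beta^{t+1}-\bm z^{t+1}-\bm X^\top\bm y=\bm A(\bm\beta^t-\bm\beta^{t+1})-(\bm z^t-\bm z^{t+1})+\tfrac2\mu(\bm u^t-\bm u^{t+1})$. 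Converting the two subgradient inclusions into subdifferential inequalities, replacing $\bm u^t$ by $\bm u^{t+1}$ wherever the $\mathcal F(\bm g^{t+1})$ terms require it, and adding the three relations produces the compact inequality
\begin{align}\label{planVI}
\theta(\bm\beta,\bm z)-\theta(\bm\beta^{t+1},\bm z^{t+1})+(\bm g-\bm g^{t+1})^\top\mathcal F(\bm g^{t+1})\ge(\bm g-\bm g^{t+1})^\top\bm H(\bm g^t-\bm g^{t+1}),
\end{align}
valid for all $\bm g$, with
\[
\bm H=\begin{pmatrix}\eta\bm I_p&\bm 0&\bm A^\top\\ \bm 0&\mu\bm I_p&-\bm I_p\\ \bm A&-\bm I_p&\tfrac2\mu\bm I_p\end{pmatrix}.
\]
I expect (\ref{planVI}) to be the crux: the bookkeeping in the dual step must be arranged so that the cross blocks $\bm A^\top$, $-\bm I_p$ and the diagonal block $\tfrac2\mu\bm I_p$ fit together into a \emph{symmetric} $\bm H$. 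Positive definiteness of $\bm H$ then follows from a Schur-complement reduction on its positive definite $\bm u$-block $\tfrac2\mu\bm I_p$, which leaves the requirement $\eta\bm I_p-\mu\bm A^\top\bm A\succ\bm 0$ --- exactly the standing assumption $\eta>\text{eigen}(\mu\bm A^\top\bm A)$.

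Given (\ref{planVI}) and $\bm H\succ\bm 0$, the two conclusions follow by standard manipulations. Putting $\bm g=\bm g^*$ in (\ref{planVI}), using the VI characterization of $\bm g^*$ together with the monotonicity identity to annihilate the $\mathcal F$-terms, and applying the polarization identity $(\bm a-\bm b)^\top\bm H(\bm c-\bm b)=\tfrac12(\|\bm a-\bm b\|_{\bm H}^2+\|\bm c-\bm b\|_{\bm H}^2-\|\bm a-\bm c\|_{\bm H}^2)$ gives the contraction
\[
\|\bm g^{t+1}-\bm g^*\|_{\bm H}^2\le\|\bm g^t-\bm g^*\|_{\bm H}^2-\|\bm g^t-\bm g^{t+1}\|_{\bm H}^2.
\]
This makes $\{\bm g^t\}$ bounded with $\|\bm g^t-\bm g^{t+1}\|_{\bm H}\to0$; passing to the limit in (\ref{planVI}) along a convergent subsequence (and using lower semicontinuity of $\theta$) shows every cluster point lies in $\Omega^*$, and re-running the contraction with $\bm g^*$ taken as that cluster point upgrades subsequential to full convergence, giving part 1. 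For part 2, I would first prove the monotonicity $\|\bm g^{t+1}-\bm g^{t+2}\|_{\bm H}\le\|\bm g^t-\bm g^{t+1}\|_{\bm H}$ by writing (\ref{planVI}) at steps $t$ and $t+1$ with test points $\bm g^{t+2}$ and $\bm g^{t+1}$, adding them, eliminating the $\theta$- and $\mathcal F$-terms via monotonicity, and closing with a Cauchy--Schwarz step in the $\bm H$-inner product. Telescoping the contraction yields $\sum_{t=0}^{T}\|\bm g^t-\bm g^{t+1}\|_{\bm H}^2\le\|\bm g^0-\bm g^*\|_{\bm H}^2$; since by monotonicity every summand dominates $\|\bm g^T-\bm g^{T+1}\|_{\bm H}^2$, we obtain $(T+1)\|\bm g^T-\bm g^{T+1}\|_{\bm H}^2\le\|\bm g^0-\bm g^*\|_{\bm H}^2$, which is precisely (\ref{p25}).
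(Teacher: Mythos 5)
Your proposal is correct and follows essentially the same route as the paper's Appendix~B proof: the variational-inequality reformulation with the skew (hence trivially monotone) operator, the prediction inequality $\theta(\bm f)-\theta(\bm f^{t+1})+(\bm g-\bm g^{t+1})^\top F(\bm g^{t+1})\ge(\bm g-\bm g^{t+1})^\top\bm H(\bm g^t-\bm g^{t+1})$ with the same matrix $\bm H$ (the paper writes its top-left block as $\mu\bm A^\top\bm A+\bm S=\eta\bm I_p$), the Fej\'er contraction and cluster-point argument for global convergence, the monotone decrease of $\|\bm g^t-\bm g^{t+1}\|_{\bm H}$ obtained by adding the inequality at consecutive steps, and the final telescoping. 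The only cosmetic differences are that you establish $\bm H\succ\bm 0$ by Schur complements rather than the paper's sum-of-squares decomposition, and you reduce to Algorithm~1 via Theorem~1 whereas the paper argues directly on the parallel iteration.
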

\begin{rem}
It is not difficult to see that (\ref{lmds}) is convex, and the component $(\bm{\beta}^*, \bm z^*)$ of its saddle point $(\bm{\beta}^*, \bm z^*, \bm u^*)$ corresponds to the optimal solution for (\ref{ds3}).
\end{rem}
\begin{rem}
 It is clear that \(\|\bm{g}^0 - \bm{g}^*\|_{\bm{H}}^2\) is a positive constant. Therefore, \(\|\bm{g}^T - \bm{g}^{T+1}\|_{\bm{H}}^2 \leq \mathcal{O}\left(\frac{1}{T}\right)\), which is known as a linear convergence rate. In addition, during the proof of this theorem, we demonstrated that \( \|\bm{g}^t - \bm{g}^{*}\|_{\bm{H}}^2 \) and \(  \|\bm{g}^t - \bm{g}^{t+1}\|_{\bm{H}}^2  \) are monotonically non-increasing, that is, \( \|\bm{g}^{t+1} - \bm{g}^{*}\|_{\bm{H}}^2   \leq \|\bm{g}^t - \bm{g}^{*}\|_{\bm{H}}^2 \) and \( \|\bm{g}^{t+1} - \bm{g}^{t+2}\|_{\bm{H}}^2   \leq \|\bm{g}^t - \bm{g}^{t+1}\|_{\bm{H}}^2 \),  see  Proposition \ref{prop2}  and Proposition \ref{prop4} in the Appendix.
\end{rem}

\subsection{Improved parallel version}
In Algorithm \ref{alg2}, we need to concatenate \( \bm A_i \) into a complete matrix \( \bm A \) during the linearization step and then use the power method to compute \( \text{eigen}(\mu \bm A^\top \bm A) \). 
When \( p \) is excessively large, concatenating all \( \bm A_i \) into a single matrix \( \bm A \) and applying the power method becomes impractical. To enable the parallel PPA to handle such situations, we will develop the following improved version of PPA.

With given \(( \bm \beta^0, \bm{z}^0, \bm{u}^0)\), the iterative scheme of the improved  parallel PPA (with linearization) for problem (\ref{ds3}) is as follows:
\begin{align}\notag
\label{nppa3}
&\bm \beta_{i\cdot}^{t+1} \leftarrow \mathop {\arg \min }\limits_{\bm \beta_{i\cdot}} \left\{ L(\bm \beta,  \bm z^t; \bm u^t)  + \frac{\mu}{2} \|\bm A_i (\bm \beta_{i\cdot} - \bm \beta_{i\cdot}^t) \|_{2}^2 + \frac{1}{2} \|\bm \beta_{i\cdot} - \bm \beta_{i\cdot}^t \|_{\bm S^{'}_{i}}^2  \right \}, i \in \{1, \cdots, K\}; \\ %
&\bm z^{t+1} \leftarrow \mathop {\arg \min }\limits_{\bm z} \left\{ L(\bm \beta^{t+1},  \bm z;  \bm u^t) + \frac{\mu}{2} \| \bm z - \bm z^t \|_{2}^2 \right \}; \\  %
&\bm u^{t+1} \leftarrow \bm{u}^{t} - \frac{\mu}{K+1}\left[ 2(\sum_{i=1}^{K} \bm A_i \bm \beta_{i\cdot}^{t+1} - \bm z^{t+1} - \bm X^\top \bm y)  - (\sum_{i=1}^{K} \bm A_i \bm \beta_{i\cdot}^{t} - \bm z^{t} - \bm X^\top \bm y)  \right];\notag
\end{align} 
where $\bm S^{'}_i = \eta_i \bm I_{p_i} - \mu \bm A_i^\top \bm A_i$, and  $\eta_i$ is larger than  \( \text{eigen}(\mu  \bm A_i ^\top  \bm A_i) \).  It is straightforward to verify that  when $K=1$, \(\bm S^{'}_1 = \bm S_1 = \bm S \),  thereby reducing the improved parallel PPA into Algorithm \ref{alg1} and Algorithm \ref{alg2} ($K = 1$).

 The  closed-form solution of $\bm \beta_{i\cdot}^{t+1}$  can be obtained by
 \begin{align}\label{ppabeta4}
\bm \beta_{i\cdot}^{t+1}  \leftarrow \text{sign}( \bm \beta_{i\cdot}^t + \frac{\bm A_{i}^\top \bm u^t}{\eta_i}) \ \odot \max \left\{ |\bm \beta_{i\cdot}^t + \frac{\bm A_{i}^\top \bm u^t}{\eta_i}| - \frac{1}{\eta_i}, 0 \right\}.
\end{align}
The only difference between (\ref{ppabeta4}) and (\ref{ppabeta3}) is that \(\eta\) has been replaced with \(\eta_i\). 
The update for \(\bm z^t \) remains unchanged compared to Algorithm \ref{alg1} and Algorithm \ref{alg2}. However, for the update of \( \bm u^{t+1} \), a slight modification is needed to ensure the linear convergence of the algorithm, specifically changing \( \frac{\mu}{2} \) in Algorithm \ref{alg2}  to \( \frac{\mu}{K+1} \).
This adjustment is necessary because we use \( K \) instances of \( \eta_i \) to approximate the entire \( \eta \). This introduces additional perturbations in the \( \bm \beta_{i\cdot} \) subproblem, necessitating adjustments to certain constants in the \( \bm u \) update process. It is worth noting that this change may not improve the computational efficiency of Algorithm \ref{alg2}. On the contrary, due to the approximation of \( \eta \), more iterations may be required at times. The so-called improvement merely makes the parallel PPA applicable in extreme cases where \( p \) is so large that the memory required to store matrix \( \bm A^\top \bm A \) and compute \( \eta \) exceeds the available memory of the machine.
We summarize improved  parallel PPA in Algorithm \ref{alg3}.

Algorithm \ref{alg3} does not exhibit the partition insensitivity (refer to Theorem \ref{TH1}) that is characteristic of Algorithm \ref{alg2}. Nevertheless, the iterative convergence solution of Algorithm \ref{alg3} is identical to that of Algorithm \ref{alg2}, and Algorithm \ref{alg3} maintains the same linear convergence rate as Algorithm \ref{alg2}.
Next, we utilize the following theorem to illustrate that.
\begin{thm}\label{TH3}
Let the sequence \( \left\{ \bm g^t= ( \bm{\beta}^t, \bm z^t, \bm u^t) \right\} \) be generated by  Algorithm \ref{alg3}. 
\begin{enumerate}
\item (Algorithm global convergence). It converges to some \( \bm g^*= (\bm{\beta}^*, \bm z^*, \bm u^*) \) that belongs to  \( \Omega^* \), where  \( \Omega^* \) denotes the set of saddle points of (\ref{lmds}) and is assumed to be non-empty.
\item (Linear convergence rate).  For any integer \(T > 0\), we have
\begin{align}\label{p25}
\|\bm g^T - \bm g^{T+1} \|_{\bm{H_K}}^2 \leq \frac{1}{\left(T+1\right)}  \|\bm g^0 - \bm g^*\|_{\bm{H_K}}^2,
\end{align}
where $\bm {H_K}$ is a  positive definite matrix , and can be found in (\ref{m2}).
\end{enumerate}
\end{thm}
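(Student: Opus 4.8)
The plan is to run the proof of Theorem \ref{TH3} in exact parallel with the variational-inequality (VI) argument underlying Theorem \ref{TH2}, isolating the single place where Algorithm \ref{alg3} genuinely differs from Algorithm \ref{alg2}. First I would reformulate the saddle-point problem for (\ref{lmds}) as a monotone VI: writing $\bm x = (\bm\beta,\bm z)$, $\theta(\bm x) = \|\bm\beta\|_1 + \delta_{\mathcal{Z}_0(\bm z)}$ and $\bm w = (\bm\beta,\bm z,\bm u)$, a point $\bm g^* = (\bm\beta^*,\bm z^*,\bm u^*)$ lies in $\Omega^*$ iff $\theta(\bm x)-\theta(\bm x^*) + (\bm w-\bm w^*)^\top F(\bm w^*) \ge 0$ for all $\bm w$, where $F$ collects the affine terms $-\bm A^\top\bm u$, $\bm u$ and $\sum_{i}\bm A_i\bm\beta_{i\cdot}-\bm z-\bm X^\top\bm y$. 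Since the coupling part of $F$ is skew-symmetric, $F$ is monotone, and this VI is the common target for Algorithms \ref{alg1}--\ref{alg3}.

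Next I would write down the first-order optimality condition of each subproblem in (\ref{nppa3}). Because $\bm S'_i = \eta_i\bm I_{p_i}-\mu\bm A_i^\top\bm A_i$, the quadratic $\tfrac{\mu}{2}\|\bm A_i(\bm\beta_{i\cdot}-\bm\beta_{i\cdot}^t)\|_2^2$ cancels the $\mu\bm A_i^\top\bm A_i$ term, so the $\bm\beta_{i\cdot}$-condition reduces to $\bm 0\in\partial\|\bm\beta_{i\cdot}^{t+1}\|_1-\bm A_i^\top\bm u^t+\eta_i(\bm\beta_{i\cdot}^{t+1}-\bm\beta_{i\cdot}^t)$, matching (\ref{ppabeta4}); the $\bm z$- and $\bm u$-conditions are read off similarly, the latter carrying the factor $\tfrac{\mu}{K+1}$. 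Assembling the three conditions and replacing $\bm w^t$ by $\bm w^{t+1}$ inside $F$ (absorbing the differences into increments $\bm g^{t+1}-\bm g^t$), I would obtain the prediction inequality $\theta(\bm x)-\theta(\bm x^{t+1})+(\bm w-\bm w^{t+1})^\top F(\bm w^{t+1}) \ge (\bm w-\bm w^{t+1})^\top \bm Q_K(\bm g^{t+1}-\bm g^t)$, from which the symmetric matrix $\bm H_K$ of (\ref{m2}) is extracted. Relative to Theorem \ref{TH2}, its only changes are that the $\bm\beta$-diagonal block becomes $\bm D_\eta:=\mathrm{blkdiag}(\eta_1\bm I_{p_1},\dots,\eta_K\bm I_{p_K})$ instead of $\eta\bm I_p$, while the $\bm u$-diagonal block becomes $\tfrac{K+1}{\mu}\bm I_p$ instead of $\tfrac{2}{\mu}\bm I_p$, the $\bm\beta$--$\bm u$ and $\bm z$--$\bm u$ couplings remaining $-\bm A^\top$ and $\bm I_p$.

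The main obstacle, and the only step not inherited verbatim from Theorem \ref{TH2}, is verifying $\bm H_K\succ\bm 0$, which is precisely what forces the scalar $\tfrac{\mu}{K+1}$. I would prove it by a Schur-complement reduction on the $(\bm\beta,\bm z)$-block $\mathrm{diag}(\bm D_\eta,\mu\bm I_p)\succ\bm 0$; the induced complement in the $\bm u$-block is $\tfrac{K+1}{\mu}\bm I_p-\bm A\bm D_\eta^{-1}\bm A^\top-\tfrac1\mu\bm I_p=\tfrac{K}{\mu}\bm I_p-\sum_{i=1}^K\tfrac1{\eta_i}\bm A_i\bm A_i^\top$. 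Using $\eta_i>\mathrm{eigen}(\mu\bm A_i^\top\bm A_i)=\mu\lambda_{\max}(\bm A_i^\top\bm A_i)$ together with $\bm A_i\bm A_i^\top\preceq\lambda_{\max}(\bm A_i^\top\bm A_i)\bm I_p$ gives $\tfrac1{\eta_i}\bm A_i\bm A_i^\top\prec\tfrac1\mu\bm I_p$ for each $i$, and summing the $K$ blocks yields $\sum_i\tfrac1{\eta_i}\bm A_i\bm A_i^\top\prec\tfrac{K}{\mu}\bm I_p$, so the complement is positive definite. This makes transparent why the factor must be $\tfrac{\mu}{K+1}$: the ``$+1$'' absorbs the $-\tfrac1\mu\bm I_p$ contributed by the $\bm z$-coupling, leaving exactly the margin $\tfrac{K}{\mu}\bm I_p$ needed to dominate the $K$ block terms that appear once the global $\eta$ is replaced by the local $\eta_i$.

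With $\bm H_K\succ\bm 0$ established, the remainder is a transcription of the Theorem \ref{TH2} argument. From the prediction inequality I would derive the contraction $\|\bm g^{t+1}-\bm g^*\|_{\bm H_K}^2\le\|\bm g^t-\bm g^*\|_{\bm H_K}^2-\|\bm g^t-\bm g^{t+1}\|_{\bm H_K}^2$, which gives Fej\'er monotonicity of $\{\|\bm g^t-\bm g^*\|_{\bm H_K}^2\}$, boundedness of the iterates, and $\|\bm g^t-\bm g^{t+1}\|_{\bm H_K}\to 0$; a standard limit-point argument then yields global convergence to some $\bm g^*\in\Omega^*$. Telescoping the contraction gives $\sum_{t=0}^{T}\|\bm g^t-\bm g^{t+1}\|_{\bm H_K}^2\le\|\bm g^0-\bm g^*\|_{\bm H_K}^2$; combining this with a monotone non-increase of the successive differences $\|\bm g^t-\bm g^{t+1}\|_{\bm H_K}^2$ (the analogue of Proposition \ref{prop4}, obtained by applying the VI at two consecutive iterates) yields $(T+1)\|\bm g^T-\bm g^{T+1}\|_{\bm H_K}^2\le\|\bm g^0-\bm g^*\|_{\bm H_K}^2$, which is exactly (\ref{p25}).
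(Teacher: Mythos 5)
Your proposal is correct and follows the same overall route as the paper: the variational-inequality reformulation with the skew-symmetric (hence monotone) operator $F$, the prediction inequality $\theta(\bm f)-\theta(\bm f^{t+1})+(\bm g-\bm g^{t+1})^\top F(\bm g^{t+1})\ge(\bm g-\bm g^{t+1})^\top\bm H_K(\bm g^t-\bm g^{t+1})$ assembled from the three subproblem optimality conditions, the contraction and Fej\'er monotonicity, the cluster-point argument for global convergence, and the telescoping-plus-monotone-decrease argument (the analogue of Proposition \ref{prop4}) for the $\mathcal{O}(1/T)$ rate. The one place you genuinely depart from the paper is the verification that $\bm H_K\succ\bm 0$: the paper writes $\bm H_K=\bm M_K+\mathrm{diag}(\bm S_1',\dots,\bm S_K',\bm 0,\bm 0)$ and exhibits $\bm M_K$ as a sum of $K+1$ rank-structured outer products, whereas you take the Schur complement of the block $\mathrm{diag}(\bm D_\eta,\mu\bm I_p)$ in the $\bm u$-block and reduce it to $\tfrac{K}{\mu}\bm I_p-\sum_{i=1}^K\tfrac1{\eta_i}\bm A_i\bm A_i^\top\succ\bm 0$, which follows from $\eta_i>\mu\lambda_{\max}(\bm A_i^\top\bm A_i)$. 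Both are valid; your Schur-complement computation has the advantage of making explicit \emph{why} the dual step size must be $\tfrac{\mu}{K+1}$ (each of the $K$ blocks consumes a $\tfrac1\mu\bm I_p$ margin and the $\bm z$-coupling consumes one more), a point the paper's decomposition leaves implicit, while the paper's sum-of-squares form dispenses with inverting $\bm D_\eta$ and handles the positive semidefinite part $\bm M_K$ and the strictly definite correction separately. (The sign you assign to the $\bm\beta$--$\bm u$ and $\bm z$--$\bm u$ coupling blocks is the opposite of the paper's $\bm G$, but this is a congruence by a sign flip of the $\bm u$-coordinate and affects neither the Schur complement nor any of the subsequent estimates.)
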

\begin{algorithm}\small
\caption{\small{Improved Parallel PPA for DS}}
\label{alg3}
\begin{algorithmic}
\State {\textbf{Input:} $\bm X, \bm y, \mu$, $\lambda$, $\bm \beta^0, \bm z^0$ and $\bm u^0$.}
\State {\textbf{Pre-computation}: $\bm X^\top \bm y$,  $\bm A_i = \bm X^\top \bm X_i$ and $\eta_i =\text{eigen}(\mu \bm A_i^\top \bm A_i ) + 1$.} 
\State {\textbf{Output:} the total number of iterations $T$,  $\boldsymbol{\beta}^T$. }
\State {\textbf{while} not converged \textbf{do}}
\State {\qquad \quad 1. Update $\bm \beta^{k+1}$ using (\ref{ppabeta4}),\\%
 \qquad \quad 2. Update $\bm z^{k+1}$ using (\ref{z}), \\%
\qquad \quad 3. Update \( \bm u^{k+1} \) using the last equation in (\ref{nppa3}),\\%
\qquad \quad 4. $t \leftarrow  t + 1$
}
\State {\textbf{end while}}
\State {\textbf{return} solution}.
\end{algorithmic}
\end{algorithm}
The conclusion and proof of Theorem \ref{TH3} closely mirror those of Theorem \ref{TH2}, with the primary distinction being the substitution of the $\bm H$ matrix with $\bm {H_K}$. A detailed proof is provided in Appendix \ref{A2}.
The iterative convergence solutions of Algorithm  \ref{alg1}, Algorithm  \ref{alg2}, and Algorithm  \ref{alg3} are identical because all three algorithms handle the same form of equality constraint optimization in  (\ref{ds3}). This property is not present in the parallel ADMM discussed in \cite{Wen2024Nonconvex}, as the form of constraints handled by the algorithm changes with increasing \( K \). The linear convergence rate of Algorithm \ref{alg3}  is \(\|\bm{g}^T - \bm{g}^{T+1}\|_{\bm{H_K}}^2 \leq \mathcal{O}\left(\frac{1}{T}\right)\), and its iterative sequences  \( \|\bm{g}^t - \bm{g}^{*}\|_{\bm{H_K}}^2 \) and \(  \|\bm{g}^t - \bm{g}^{t+1}\|_{\bm{H_K}}^2  \)  also monotonically do not increase.

\textcolor{red}{Understanding the computational resource requirements of the algorithm for data of different dimensions is crucial for assessing its feasibility in practical applications. Next, we conduct an analysis of the time and space complexity of the algorithms. Without loss of generality, assume the columns of matrix $\bm{A}$ are evenly split into $K$ parts. We use the following theorem to clarify these two complexities, and the proof of the theorem is provided in Appendix \ref{C2}. }

\

\begin{thm}\label{TH5}
 \textcolor{red}{The total time complexities of Algorithm \ref{alg2} and  Algorithm \ref{alg3} are given by
 \begin{equation}\label{comp1}
\begin{cases}
 \left[  \mathcal O(\frac{np^2}{K}) + \mathcal O(p^2) \right] + \mathcal O(\frac{p^2}{K}) \times T  & \text{Algorithm \ref{alg2} },\\
\mathcal O(\frac{np^2}{K})  + \mathcal O(\frac{p^2}{K}) \times T   & \text{Algorithm \ref{alg3}},
\end{cases}
\end{equation}
where $\mathcal{O}$  depicts the asymptotic upper bound of algorithm complexity (time or space) as the input size expands, and $T$ denotes the total number of iterations in the parallel PPA algorithms.  The total space complexities of Algorithm \ref{alg2} and  Algorithm \ref{alg3} are given by
 \begin{equation}\label{comp2}
\begin{cases}
 \mathcal O(np) + \mathcal O(p^2)    & \text{Algorithm \ref{alg2} },\\
\mathcal O(np) + \mathcal O(\frac{p^2}{K})     & \text{Algorithm \ref{alg3}}.
\end{cases}
\end{equation}}
\end{thm}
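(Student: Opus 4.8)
The plan is to bound separately the one-time precomputation cost and the recurring per-iteration cost, then to track carefully which operations are distributed across the $K$ workers (and hence carry a $1/K$ factor) versus those that are inherently global. Throughout I assume the even split $p_i = p/K$, so each block $\bm A_i = \bm X^\top \bm X_i$ has size $p \times (p/K)$, while $\bm z$ and $\bm u$ remain $p$-dimensional and each $\bm \beta_{i\cdot}$ is $(p/K)$-dimensional. The accounting then reduces to cataloguing the cost of each line of Algorithm \ref{alg2} and Algorithm \ref{alg3} under this convention.

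First I would analyze the precomputation. Forming $\bm X^\top \bm y$ costs $\mathcal{O}(np)$, and forming each $\bm A_i = \bm X^\top \bm X_i$ costs $\mathcal{O}(n \cdot p \cdot p/K) = \mathcal{O}(np^2/K)$, carried out in parallel across the $K$ workers. The decisive difference between the two algorithms appears in the eigenvalue step. Algorithm \ref{alg2} must assemble the full $p \times p$ Gram matrix $\bm A$ and run the power method of \cite{Liang2024Linearized} to obtain $\eta = \text{eigen}(\mu \bm A^\top \bm A)$; each power iteration reduces to two matrix-vector products with a $p \times p$ matrix, i.e. $\mathcal{O}(p^2)$, and since the power method converges in a number of steps bounded independently of $p$, the eigenvalue cost is $\mathcal{O}(p^2)$ and does \emph{not} divide by $K$. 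Algorithm \ref{alg3} instead computes $\eta_i = \text{eigen}(\mu \bm A_i^\top \bm A_i)$ blockwise; each power iteration there is dominated by products with $\bm A_i$, costing $\mathcal{O}(p \cdot p/K) = \mathcal{O}(p^2/K)$, which is absorbed into the already-present $\mathcal{O}(np^2/K)$ term. This is precisely what produces the extra $\mathcal{O}(p^2)$ summand in the first line of (\ref{comp1}) and its absence in the second.

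Next I would bound the per-iteration cost, which is structurally identical for both algorithms. The $\bm \beta_{i\cdot}$ update in (\ref{ppabeta3})/(\ref{ppabeta4}) requires only the product $\bm A_i^\top \bm u^t$ at cost $\mathcal{O}(p \cdot p/K) = \mathcal{O}(p^2/K)$, followed by an $\mathcal{O}(p/K)$ soft-thresholding, all executed in parallel. The $\bm z$ update (\ref{z}) is a componentwise clip at cost $\mathcal{O}(p)$. The $\bm u$ update needs $\sum_{i=1}^{K} \bm A_i \bm \beta_{i\cdot}^{t+1}$; each block product is $\mathcal{O}(p^2/K)$ in parallel, and the previous-iterate sum $\sum_{i=1}^{K} \bm A_i \bm \beta_{i\cdot}^{t}$ is reused rather than recomputed, so after the lower-order vector reduction the dominant term is again $\mathcal{O}(p^2/K)$. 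Accumulating over $T$ iterations gives the $\mathcal{O}(p^2/K) \times T$ contribution in (\ref{comp1}), and adding the precomputation yields the stated totals.

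Finally, for the space bounds in (\ref{comp2}) I would note that both algorithms store the data blocks $\bm X_i$ (totaling $\mathcal{O}(np)$) and the iteration vectors $\bm \beta, \bm z, \bm u$ (only $\mathcal{O}(p)$, hence negligible). The block matrices $\bm A_i$ each occupy $\mathcal{O}(p \cdot p/K) = \mathcal{O}(p^2/K)$; Algorithm \ref{alg3} never assembles more than one such block per worker and so needs only $\mathcal{O}(p^2/K)$ matrix storage, whereas Algorithm \ref{alg2} must concatenate the blocks into the full $p \times p$ matrix $\bm A$ for the power method, forcing $\mathcal{O}(p^2)$. The main obstacle — indeed the only delicate point — is the bookkeeping of the parallel cost model: one must justify that the power method uses a bounded number of iterations (so that no extra logarithmic or conditioning factor appears) and that the inter-worker reduction forming $\sum_i \bm A_i \bm \beta_{i\cdot}$ is dominated by the $\mathcal{O}(p^2/K)$ matrix-vector work. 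Once these two points are granted, the complexity bounds follow by direct summation of the per-line costs.
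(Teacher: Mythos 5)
Your proposal is correct and follows essentially the same route as the paper's proof: separating the one-time precomputation (where the $\mathcal O(p^2)$ power-method cost on the full Gram matrix distinguishes Algorithm 2 from Algorithm 3) from the recurring $\mathcal O(p^2/K)$ per-iteration cost of the $\bm A_i^\top \bm u$ and $\sum_i \bm A_i \bm\beta_{i\cdot}$ products, and likewise identifying $\bm X$ versus $\bm A$ (or $\bm A_i$) as the dominant storage. The only cosmetic difference is your $\mathcal O(p^2/K)$ accounting for the blockwise power method where the paper writes $\mathcal O(p^2/K^2)$; both are absorbed into the $\mathcal O(np^2/K)$ term, so the stated bounds are unaffected.
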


\textcolor{red}{We do not discuss Algorithm  \ref{alg1}  because when $K=1$, both Algorithm  \ref{alg2}   and Algorithm  \ref{alg3}  will degenerate into Algorithm \ref{alg1}.
The main difference in the computational time complexity between Algorithm \ref{alg2}   and Algorithm  \ref{alg3}  lies in the preceding pre-computation process. When $n \ge K$, their complexities are the same. When $n < K$, the complexity of Algorithm  \ref{alg2} is higher than that of Algorithm  \ref{alg3}. It is readily apparent that an increase in $K$ will substantially reduce the computational time complexity of the two parallel algorithms. It should be noted that when the matrix is divided unevenly by columns, the $p/K$ in the computational time complexity needs to be replaced with $\max\{ p_k\}_{k=1}^K$, as referred to in \cite{Wu2025AUC}. 
The conclusion regarding space storage in (\ref{comp2})  also demonstrates that Algorithm \ref{alg3} can address the drawbacks of Algorithm \ref{alg2} when the required storage space for matrix $\bm A$ and $\eta$ is too large.
This is because it divides the square of $p$ by $K$. In other words, increasing the value of $K$ can effectively alleviate the excessive storage burden.}

\subsection{Nonconvex extension}
In high-dimensional linear regression, convex regularization terms such as Lasso ensure global optimality and computational efficiency. In contrast, nonconvex regularization terms may offer better estimation and prediction performance but present computational challenges due to the lack of global optimality. Recently, in the context of DS models with nonconvex penalties, \cite{Wen2024Nonconvex} indicated that these  nonconvex DSs can be uniformly solved by combining local linear approximation (LLA, \cite{Zou2008One}) methods with an effective solution for weighted Lasso penalty.

In this paper, we mainly consider two popular nonconvex regularizers, SCAD penalty (\cite{Fan2001Variable}) and MCP penalty (\cite{Zhang2010Nearly}) in statistics. According to the suggestion in \cite{Zou2008One}), we can use a unified method named local linear approximation (LLA) to handle  the nonconvex penalty, that is
\begin{align}\label{one}
P_{a,\lambda}(|\bm \beta|) \approx  P_{a,\lambda}(|\bm \beta^l|) + \nabla P_{a,\lambda}(|\bm \beta^l|)^T (|\bm \beta| - |\bm \beta^l|), \ \text{for} \ \bm \beta  \approx \bm \beta^l,
\end{align}
where $a$ is a given constant, $\bm \beta^l$ is the solution from the last iteration, and $$\nabla P_{a,\lambda}(|\bm \beta^l|) = (\nabla P_{a,\lambda}(|\bm \beta_1^l|), \nabla P_{a,\lambda}(|\bm \beta_2^l|), \dots, \nabla P_{a,\lambda}(|\bm \beta_p^l|)) ^\top.$$

$\bullet$ For SCAD, we have \begin{align}\label{scad}
\nabla P_{a,\lambda}(|\bm \beta_j|) = \begin{cases}
\lambda, & \text{{if }} |\bm \beta_j| \leq \lambda, \\
\frac{a\lambda -  |\bm \beta_j|}{a-1}, & \text{{if }} \lambda < |\bm \beta_j|  < a  \lambda, \\
0 , & \text{{if }} |\bm \beta_j| \ge a  \lambda. \\
\end{cases}
\end{align}

$\bullet$ For MCP, we have \begin{align}\label{mcp}
\nabla P_{a,\lambda}(|\bm \beta_j|)  =
\begin{cases}
\lambda  - \frac{|\bm \beta_j|}{a}, & \text{{if }}  |\bm \beta_j|  \le a \lambda,\\
0, & \text{{if }}  |\bm \beta_j|  > a \lambda.\\
\end{cases}
\end{align}
The nonconvex DS in \cite{Wen2024Nonconvex} can be written as
\begin{align}\label{41}
\min\limits_{\bm \beta \in \mathbb{R}^p} \quad \sum_{j=1}^p P_{a,\lambda}(|\bm \beta_j|)/\lambda  \ \text{s.t.} \  |\bm X_j^\top(\bm X \bm \beta - \bm y)/n | \leq  \nabla P_{a,\lambda}(|\bm \beta_j|), j = \{1,\dots,p \},
\end{align}
where  $P_{a,\lambda}(|\bm \beta|) = \text{SCAD}(\bm \beta)  \ \text{or}  \  \text{MCP}(\bm \beta) $. By substituting equation (\ref{one}) into equation (\ref{41}), we can obtain the following optimized form in a weighted manner,
\begin{align}\label{we}
\min\limits_{\bm \beta \in \mathbb{R}^p} \ \sum_{j=1}^p\left[ \frac{ \nabla P_{a,\lambda}(|\bm \beta^l_j|)}{\lambda}|\bm \beta_j |  \right] \ \text{s.t.} \ | \bm X^\top(\bm X \bm \beta - \bm y)/n | \leq \nabla P_{a,\lambda}(|\bm \beta_j^l|), j = \{1,\dots,p \}.
\end{align}
Note that we only need to make two small changes to solve this weighted  optimization form using Algorithms \ref{alg1}, \ref{alg2} and \ref{alg3}. The first change  requires replacing $\|\bm \beta \|_1 (\text{or} \sum_{j=1}^p |\bm \beta_j| )$ in (\ref{ods}) with $\sum_{j=1}^p\left[ \frac{ \nabla P_{a,\lambda}(|\bm \beta^l_j|)}{\lambda}|\bm \beta_j |  \right]$ in (\ref{we}). In other words, substitute the equal weight \(1\) following the \(\sum\) with \({ \nabla P_{a,\lambda}(|\bm \beta^l_j|)}/{\lambda}\). The second change is in the form of the constraint, where the $\ell_\infty$-norm is replaced with \( p \)  weighted $\ell_1$-norm.

To solve nonconvex DS using the LLA algorithm, it is necessary to find a good initial value. As suggested by \cite{Wen2024Nonconvex}, we can use the solution of $P_{\lambda}(|\bm \beta|) = \lambda \|\bm \beta \|_1 $ in  (\ref{41}) as the initial value. Then, we get the solution of (\ref{41}) by  solving a sequence of weighted Lasso DS.  We summarize the detailed iterative steps of this method  in Algorithm \ref{alg4}.  
\begin{algorithm}\small
\caption{\small{PPA for nonconvex DS}}
\label{alg4}
\begin{algorithmic}
\State {1. Initialize $\bm \beta$ with  $\bm \beta^1$, where $\bm \beta^1$ is obtained by Algorithms \ref{alg1}, \ref{alg2} or \ref{alg3}.}
\State {2. For $l=1,2,\dots,L$, continue iterating the LLA iteration until convergence is achieved. }
\State {\quad  2.1. Compute the weights  $\nabla P_{\lambda}(|\bm \beta^{l}|)=(\nabla P_\lambda(|\bm \beta_1^{l}|), \nabla P_\lambda(|\bm \beta_2^{l}|), \dots, \nabla P_\lambda(|\bm \beta_p^{l}|)) ^\top$ by (\ref{scad}) or  (\ref{mcp}).}
\State {\quad  2.2. Solve weighted $\ell_1$ DS. For $t = 0,\dots,T$,}
\State {\quad \quad  2.2.1. $\bm \beta^{l,t+1}$-subproblem: solve the weighted $\ell_1$ problem in (\ref{ppabeta2}), (\ref{ppabeta3}) and (\ref{ppabeta4})  by   replacing $1$  with \({ \nabla P_{a,\lambda}(|\bm \beta^l_j|)}/{\lambda}\),} 
\State {\quad \quad 2.2.2. $\bm z^{l,t+1}$-subproblem:  $\bm z_j^{l,t+1}  \leftarrow \min \left\{ \max \left\{ \bm z_j^{l,t} - \frac{ \bm u_j^t}{\mu}, -n \nabla P_\lambda(|\bm \beta_j^{l}|) \right\},  n \nabla P_\lambda(|\bm \beta_j^{l}|) \right\}$, $j \in \{1, \dots,p\}$, }
\State {\quad \quad 2.2.3. $\bm u^{l,t+1}$-subproblem: Update $\bm u^{l,t+1}$ according to the corresponding algorithm (Algorithm \ref{alg1}, Algorithm \ref{alg2} or Algorithm \ref{alg3}).}
\State { \quad  2.3. Let  $\bm \beta^{l+1} = \bm \beta^{l,T} $.}
\end{algorithmic}
\end{algorithm}

Below, we will describe in detail the specific implementation of 2.2.1 in Algorithm \ref{alg4}. If we use Algorithm \ref{alg1}  and \ref{alg2} to solve nonconvex DS, we need to replace $1/\eta$ in ({\ref{ppabeta2}) or ({\ref{ppabeta3}) with \({ \nabla P_{a,\lambda}(|\bm \beta^l_j|)}/{({\lambda \eta})}\). If we use Algorithm  \ref{alg3} to solve nonconvex DS, we need to replace $1/\eta_i$ in  ({\ref{ppabeta4})  with \({ \nabla P_{a,\lambda}(|\bm \beta^l_j|)}/{({\lambda \eta_i})}\).
The above discussion indicates that the nonconvex DS is solved through multiple iterations of  weighted Lasso penalized DS.  Moreover, \cite{Wen2024Nonconvex} demonstrated that, theoretically, only two  iterations are sufficient to obtain a solution of (\ref{41})  with high statistical accuracy.  In the specific implementation of Algorithm \ref{alg4}, we also implement the warm-start technique (\cite{Friedman2010A}),  which uses the current solution $\bm \beta^{l}$ as the initial value for the next  solution $\bm \beta^{l+1}$. This method significantly reduces the number of iterations needed in step 2.2 of Algorithm \ref{alg4}, often achieving convergence in just a few steps. 

\textcolor{red}{Next, we will  discuss the convergence of Algorithm \ref{alg4}. As per Theorem 2.2 in \cite{Wen2024Nonconvex}, a high-precision statistical estimator can be achieved when the outer loop is repeated twice, i.e., $L = 2$.  Therefore, it is only necessary to discuss the convergence of step 2.2 of the inner loop solution for weighted $\ell_1$ DS.  Since the weighted form does not change the convexity of the objective function, we can draw the following corollary based on the conclusions of Theorem \ref{TH2} and Theorem \ref{TH3}. The reasons justifying the validity of the corollary have been presented in Appendix \ref{A2}.}

\

\begin{cor}\label{corollar1}
\textcolor{red}{Let the sequence \( \left\{ \bm g^t= ( \bm{\beta}^t, \bm z^t, \bm u^t) \right\} \) be generated by  step 2.2 in  Algorithm \ref{alg4}. 
\begin{enumerate}
\item (Algorithm global convergence). It converges to some \( \bm g^*= (\bm{\beta}^*, \bm z^*, \bm u^*) \) that belongs to  \(\tilde{ \Omega}^* \), where  \(\tilde{ \Omega}^* \) denotes the set of saddle points of    weighted $\ell_1$ DS with current weight,  and is assumed to be non-empty.
\item (Linear convergence rate).  For any integer \(T > 0\), we have
\begin{align}\label{p25}
\|\bm g^T - \bm g^{T+1} \|_{\bm{H_*}}^2 \leq \frac{1}{\left(T+1\right)}  \|\bm g^0 - \bm g^*\|_{\bm{H_*}}^2,
\end{align}
where $\bm {H_*} = \bm H$ for  Algorithm \ref{alg1} and Algorithm \ref{alg2}, and $\bm {H_*} = \bm H_K$ for  Algorithm \ref{alg3}.
\end{enumerate}}
\end{cor}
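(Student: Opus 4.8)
The plan is to recognize that, for each fixed outer index $l$, step 2.2 of Algorithm \ref{alg4} solves a problem that is structurally identical to (\ref{ds3}): the minimization of a convex objective subject to the same linear equality constraint $\sum_{i=1}^{K} \bm A_i \bm\beta_{i\cdot} - \bm z = \bm X^\top \bm y$. Only two ingredients change relative to the unweighted DS. First, the regularizer $\sum_i \|\bm\beta_{i\cdot}\|_1$ is replaced by the weighted $\ell_1$ term $\sum_{j=1}^p \frac{\nabla P_{a,\lambda}(|\bm\beta_j^l|)}{\lambda}|\bm\beta_j|$ with fixed nonnegative weights. Second, the box $\mathcal{Z}_0 = \{\bm z : \|\bm z\|_\infty \leq n\lambda\}$ is replaced by the coordinate-wise box $\{\bm z : |\bm z_j| \leq n\nabla P_\lambda(|\bm\beta_j^l|)\}$. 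I would first verify that both modifications preserve convexity: a nonnegative-weighted sum of absolute values is convex, and a box with coordinate-dependent bounds is convex with a convex indicator. Hence the weighted Lagrangian still admits the saddle-point and variational-inequality characterization on which the proofs of Theorems \ref{TH2} and \ref{TH3} rest, and the monotonicity of the associated VI operator, which is the \emph{only} property of the objective those proofs use, is retained.

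The crux of the argument is that the metric matrices $\bm H$ and $\bm H_K$ depend solely on $\bm A$, the augmented parameter $\mu$, the linearization constants $\eta$ (or $\eta_i$), and the block count $K$, none of which the reweighting alters. Since the quadratic augmentation terms appearing in (\ref{nppa2}) and (\ref{nppa3}), namely $\frac{\mu}{2}\|\bm A_i(\bm\beta_{i\cdot}-\bm\beta_{i\cdot}^t)\|_2^2$, the proximal term in $\bm S_i$ (resp.\ $\bm S'_i$), and $\frac{\mu}{2}\|\bm z - \bm z^t\|_2^2$, are all independent of the weights, the proximal-point interpretation of the iteration with respect to $\|\cdot\|_{\bm H}$ (resp.\ $\|\cdot\|_{\bm H_K}$) carries over verbatim, and these matrices remain positive definite. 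Thus the entire contraction analysis transcribes to the weighted setting: the iterates obey $\|\bm g^{t+1}-\bm g^*\|_{\bm H}^2 \leq \|\bm g^t-\bm g^*\|_{\bm H}^2 - \|\bm g^t-\bm g^{t+1}\|_{\bm H}^2$, yielding global convergence to a saddle point $\bm g^*$ of the weighted Lagrangian, and telescoping together with the monotone non-increase of $\|\bm g^t-\bm g^{t+1}\|_{\bm H}^2$ gives the $\mathcal{O}(1/T)$ bound (\ref{p25}). Selecting $\bm H_* = \bm H$ for Algorithms \ref{alg1} and \ref{alg2}, and $\bm H_* = \bm H_K$ for Algorithm \ref{alg3}, then delivers the stated conclusion.

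The step requiring the most care, and the main potential obstacle, is confirming that the modified closed-form updates described in 2.2.1 and 2.2.2 are the \emph{exact} minimizers of the corresponding weighted subproblems, so that the iteration genuinely instantiates the generic PPA scheme covered by Theorems \ref{TH2} and \ref{TH3}. Concretely, I would check that the weighted $\bm\beta$-subproblem still decouples into $p$ scalar soft-thresholding problems (with $1/\eta$ replaced by $\nabla P_{a,\lambda}(|\bm\beta_j^l|)/(\lambda\eta)$ in (\ref{ppabeta2})--(\ref{ppabeta4})) and that the $\bm z$-subproblem remains a separable projection onto the coordinate-wise box, i.e.\ the clipping in 2.2.2. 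Once this identification between the explicit updates and the $\arg\min$ steps of the weighted PPA is in place, the metric is unchanged and the convergence theorems apply with no modification, completing the corollary.
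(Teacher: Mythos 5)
Your proposal is correct and follows essentially the same route as the paper, which justifies the corollary by observing that the reweighting preserves the convexity of the $\ell_1$ term and the indicator of the (coordinate-wise) box, so the variational-inequality characterization and the contraction analysis of Theorems \ref{TH2} and \ref{TH3} carry over with the unchanged metric matrices $\bm H$ and $\bm H_K$. Your write-up is in fact more detailed than the paper's one-line justification in Appendix \ref{A2}, particularly in checking that the modified closed-form updates exactly solve the weighted subproblems.
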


\section{Synthetic numerical simulations}\label{sec4}
In this section, we use synthetic  datasets to demonstrate the accuracy, stability, and scalability of the proposed parallel PPA algorithms. All experiments in this paper were performed using R on a computer equipped with an AMD Ryzen 9 7950X 16-Core Processor running at 4.50 GHz and with 32 GB RAM. For the selection of the tuning parameter $\lambda$, we adopt the modified ``HBIC" criteria suggested by \cite{Fan2021Penalized}. We terminate all iterative algorithms as described in \cite{Lu2012An} when 
\[
\frac{\|\bm \beta^{t+1} - \bm \beta^t\|_2}{\max\{\| \bm \beta^{t+1}\|_2, 1\}} \leq 1 \times 10^{-4},
\]
or if the number of iterations exceeds 500.

\subsection{Nonparallel environment}\label{sec41}
In this subsection, we first evaluate the effectiveness and stability of our proposed algorithm in a nonparallel environment ($K=1$). As \cite{Wen2024Nonconvex},  we construct the design matrix \( \bm{X} \) by sampling each row from a multivariate normal distribution with a mean of zero and a covariance matrix \( \mathbf{\Sigma} = (\rho_{j,j^{'}})_{p \times p} \). The response \( y_i \) is generated according to the linear regression model \[
\mathbf{y} = \bm{X} \boldsymbol{\beta}^* + \varepsilon,
\] where $\bm X = (\bm x_1^\top, \dots, \bm x_n^\top)^\top$, \( \mathbf{x}_i \sim \mathcal{N}(0, \mathbf{\Sigma}) \) and \( \varepsilon_i \sim \mathcal{N}(0, 1) \) for \( i \in \{1, \ldots, n\} \). Furthermore, we consider the correlation structure defined as \( \rho_{j,j'} = \rho^{|j - j'|} \) for indices \( j \) and \( j' \) belonging to the set \( \{1, \ldots, p\} \). Our analysis concentrates on three distinct correlation levels: \( \rho \in \{0.1, 0.5, 0.9\} \).
To generate the true coefficient vector \( \boldsymbol{\beta}^* \), we randomly select a subset \( \mathcal{A} \) from the set \( \{1, \ldots, p\} \) such that the cardinality of \( \mathcal{A} \) is equal to 8, that is \( | \mathcal{A} | = 8\). Next, we assign the values 3, 1.5, 10, 4, 2, 5, 2.5, and 4.5 without replacement to \( \bm \beta^*_j \) for \( j \in \mathcal{A} \), and set \( \bm \beta^*_j = 0 \) for \( j \notin \mathcal{A} \).
Here, we consider two different combinations of sample size and data dimensions: $(n,p) = (500,1000)$ and $(n,p)= (1000,10,000)$. For larger scale numerical experiments, we conduct them in parallel computing environments.  All initial values in the algorithm are set to a small constant, such as $0.001$. 

The simulation study evaluates and contrasts the performance of the following algorithms on  solving $\ell_1$-DS: ADMM (\cite{Lu2012An}), FSM (Fast splitting method, \cite{He2015A}), CPPA-PD (customized proximal point algorithm, \cite{He2017Splitting}), partially proximal linearized alternating minimization method (P-PLAM, \cite{Mao2021A}),  three blocks ADMM (TADMM, \cite{Wen2024Nonconvex}). Note that in a nonparallel environment, the TADMM used by \cite{Wen2024Nonconvex} is the same as the LADMM used by \cite{Wang2012The}, so we only compared the latest TADMM. In addition, the three PPA algorithms proposed in this paper are the same in nonparallel environments. \textcolor{red}{There are three scenarios for CPPA-PD. 
In the first scenario, the parameters \(s = \log p/5\), \(r = 1.2\|\bm A^\top \bm A\|/s\), and \(\tau = 1.2\) are adopted. This variant is denoted as CPPA-PD1, which is also the set of parameters used in Section 3.1 of \cite{He2017Splitting}. 
In the second scenario, the parameters \(s = \log p/5\), \(r = 1.2\|\bm A^\top \bm A\|/s\), and \(\tau = 1\) are employed, and it is labeled as CPPA-PD2. 
In the third scenario, the parameters \(s = 1\), \(r=\|\bm A^\top \bm A\|\), and \(\tau = 1\) are utilized, and it is designated as CPPA-PD3.} The performance of the mentioned methods is examined under seven evaluation criteria: (1) the \( \ell_1 \)-error, expressed by \( \lVert \hat{\bm \beta} - \bm \beta^* \rVert_1 \); (2) the \( \ell_2 \)-error, quantified as \( \lVert \hat{\bm \beta} - \bm \beta^* \rVert_2^2 \);  (3) the model error, calculated via \( (\hat{\bm \beta} - \bm \beta^*)^\top \mathbf{\Sigma} (\hat{\bm \beta} - \bm \beta^*) \); (4) false positives (FP), representing the quantity of non-significant features chosen;  (5) false negatives (FN), indicating the number of significant features omitted; (6) number of iterations (NI); and (7) CPU runtime (Time), measured in seconds.  Metrics (1)–(3) serve to gauge estimation precision,  metrics (4) and (5) are used to assess feature selection reliability,  and metrics (6) and (7) evaluate computational efficiency. The simulation outcomes, based on 100 replicates, are presented in Table \ref{tab1} and Table \ref{tab2}. The numbers in the table represent the mean of one hundred repeated experiments, and the standard deviation is indicated in parentheses. Because the values of FN are all 0, they are not presented in Table \ref{tab1} and Table \ref{tab2}. For numerical experiments on nonconvex DS, such as SCAD-DS and MCP-DS, we have included them in the Appendix \ref{B1}.
\begin{table*}[!ht]
    \centering
    \caption{Comparison of various  algorithms for solving $\ell_1$-DS in nonparallel environments with data scale \((n, p) = (500, 1000)\). }
    \renewcommand\arraystretch{1.5}
    \resizebox{\linewidth}{!}{
    \begin{tabular}{llllllll}
    \hline
        Method & $\rho$  & $\ell_1$ error & $\ell_2$ error & Model error & \quad \ FP & \quad \ NI & \ Time(s)  \\ \hline
        \multirow{3}{*}{ADMM} & 0.1 & 5.65(0.231) & 1.23(0.047) & 1.15(0.039) & 19.78(2.03) & 154.3(14.1) & 10.76(0.78)  \\ 
                              & 0.5 & 6.11(0.305) & 1.47(0.050) & 1.21(0.042) & 22.50(2.17) & 159.2(14.7) & 11.32(0.75)  \\ 
                              & 0.9 & 7.34(0.456) & 1.52(0.063) & 1.47(0.055) & 27.41(2.52) & 163.3(15.3) & 11.95(0.81)  \\ \hdashline[0.5pt/5pt]
        \multirow{3}{*}{FSM}  & 0.1 & 3.58(0.223) & 1.01(0.029) & 0.99(0.032) & 16.23(1.88) & \textbf{146.0(13.0)} & 4.69(0.32)  \\ 
                              & 0.5 & 3.29(0.217) & 1.23(0.033) & 1.10(0.036) & 17.45(2.13) & \textbf{157.1(13.5)} & 4.55(0.27)  \\ 
                              & 0.9 & 3.77(0.289) & 1.78(0.051) & 1.65(0.040) & 19.38(2.29) & \textbf{142.5(12.7)} & 4.72(0.29)  \\ \hdashline[0.5pt/5pt]
        \multirow{3}{*}{CPPA-PD1} & 0.1 & 3.23(0.243) & 1.47(0.030) & 1.28(0.038) & 15.33(1.99) & 203.8(19.5) & 3.56(0.25)  \\ 
                                 & 0.5 & 3.17(0.228) & 1.35(0.029) & 1.19(0.036) & 16.77(2.36) & 211.9(20.5) & 3.79(0.26)  \\ 
                                 & 0.9 & 3.27(0.256) & 1.50(0.032) & 1.24(0.041) & 19.84(2.71) & 200.3(19.7) & 3.42(0.21)  \\ \hdashline[0.5pt/5pt]%
\multirow{3}{*}{\textcolor{red}{CPPA-PD2}} & 0.1 & {2.04(0.112)} & 0.77(0.016) & 0.69(0.013) & {9.11(1.02)} & 208.4(19.7) & 2.74(0.13)  \\ 
                                 & 0.5 & {2.01(0.109)} & {0.74(0.015)} & {0.70(0.012)} & {9.08(0.99)} & 213.6(18.9) & 2.78(0.14)  \\ 
                                 & 0.9 & {2.14(0.112)} & {0.81(0.018)} & {0.73(0.013)} & {10.1(1.12)} & 221.8(19.0) & 2.81(0.13)  \\ \hdashline[0.5pt/5pt]%
  \multirow{3}{*}{\textcolor{red}{CPPA-PD3}} & 0.1 &  \textbf{1.13(0.040)} & 0.28(0.005) & 0.23(0.004) & \textbf{0.00(0.000)} & 213.4(20.2) & 2.80(0.13)  \\ 
                                 & 0.5 & \textbf{1.09(0.037)} & {0.22(0.004)} & \textbf{0.21(0.004)} & {0.06(0.001)} & 220.5(15.4) & 2.79(0.13)  \\ 
                                 & 0.9 & {1.28(0.046)} & \textbf{0.24(0.003)} & \textbf{0.25(0.004)} & \textbf{0.08(0.001)} & 210.5(14.6) & 2.79(0.14)  \\ \hdashline[0.5pt/5pt]%
 \multirow{3}{*}{P-PLAM}  & 0.1 & 2.33(0.112) & 0.77(0.025) & 0.71(0.021) & 10.36(1.27) & 169.4(12.1) & \textbf{2.53(0.18)}  \\ 
                                 & 0.5 & 2.29(0.125) & 0.75(0.023) & 0.68(0.019) & 12.47(1.32) & 181.7(13.8) & \textbf{2.74(0.16)}  \\ 
                                 & 0.9 & 2.57(0.173) & 0.98(0.041) & 0.77(0.030) & 14.08(2.08) & 172.5(12.6) & \textbf{2.61(0.15)}  \\ \hdashline[0.5pt/5pt]
        \multirow{3}{*}{TADMM}   & 0.1 & 1.45(0.057) & 0.25(0.005) & 0.21(0.005) & 15.63(1.61) & 357.3(28.3) & 6.88(0.65)  \\ 
                                 & 0.5 & 1.39(0.053) & 0.24(0.004) & 0.22(0.005) & 16.22(1.67) & 369.7(27.5) & 7.03(0.72)  \\ 
                                 & 0.9 & 1.67(0.061) & 0.28(0.006) & 0.27(0.006) & 23.41(2.13) & 382.2(26.2) & 7.25(0.71)  \\ \hdashline[0.5pt/5pt]
        \multirow{3}{*}{PPA} & 0.1 & {1.15(0.042)} & \textbf{0.24(0.004)} & \textbf{0.20(0.004)} & \textbf{0.00(0.000)} & 219.0(15.3) & 2.79(0.12)  \\ 
                             & 0.5 & {1.11(0.039)} & \textbf{0.21(0.004)} & {0.22(0.005)} & \textbf{0.05(0.001)} & 222.2(15.7) & 2.81(0.15)  \\ 
                             & 0.9 & \textbf{1.27(0.045)} & {0.25(0.004)} &{0.25(0.005)} & \text{0.09(0.001)} & 215.3(14.9) & 2.75(0.12)  \\ \hline
    \end{tabular}}
\label{tab1}
\end{table*}

As evidenced by Table \ref{tab1} and Table \ref{tab2}, our proposed PPA method demonstrates significant superiority over other approaches in solving $\ell_1$-DS problem, both in terms of estimation accuracy and feature selection effectiveness. Notably, the number of false positives (FP) is substantially smaller, indicating that the PPA algorithm proposed is highly selective, minimizing the probability of incorrectly identifying zero coefficients as non-zero. In terms of number of iterations (NI), FSM demands the least, with ADMM following closely behind, primarily due to its avoidance of linearization steps. Linearization, despite simplifying subproblem resolution, merely offers approximate solutions, necessitating additional iterations. As \cite{He2020Optimally} observed, increased $\eta$ values lead to slower convergence and extended iteration durations. When considering CPU runtime (Time), P-PLAM emerges as the top performer, PPA comes next, while ADMM and TADMM lag behind. The reason why P-PLAM requires the least computation time is that its optimization formulation is unconstrained, obviating the need to update dual variables. As a result, it boasts the minimum number of iterative variables, thereby achieving the fastest computation speed. PPA follows closely in terms of computation time because it also involves a manageable number of iterative variables, and each iteration is computationally straightforward. ADMM underperforms due to the embedded double loop, and TADMM has an extended duration attributed to its large NI.

\textcolor{red}{A reviewer pointed out that in Table \ref{tab1} and Table \ref{tab2}, the numerical performance of PPA proposed in this paper is better than CPPA-PD1, but both algorithms belong to proximal point algorithm  and have high similarity. Therefore, it is necessary to provide a detailed explanation of this phenomenon.  Different parameter settings alter the iterative form of CPPA-PD. When \(\tau = 1\), CPPA-PD does not have a correction step. When \(s = 1\),   $r = \eta$, then the \(\bm \beta\) update of CPPA-PPA  is the same as that of Algorithm \ref{alg1}.
Among the three variants of CPPA-PD, CPPA-PD1 performs the worst. This is mainly because the parameters corresponding to CPPA-PD1 are not suitable for models with extremely high sparsity. The solution obtained through computation has an excessively high false positive (FP) value, which in turn leads to poor performance in the first three error-related metrics.  There are primarily two reasons for this phenomenon.
The first reason pertains to the correction mechanism. The correction step, given by the formula \({\bm \beta}^{t + 1}={\bm \beta}^{t}-1.2({\bm \beta}^{t}-\tilde{\bm \beta}^{t})\), which combines information from \({\bm \beta}^{t}\) and \(\tilde{\bm \beta}^{t}\), can impede the estimation of highly sparse data. Specifically, as long as either \({\bm \beta}^{t}\) or \(\tilde{\bm \beta}^{t}\) is non - zero, \({\bm \beta}^{t + 1}\) will also be non - zero. In contrast,  Algorithm \ref{alg1} does not incorporate such a correction step.
The second reason is related to the parameter values in high-dimensional models, as presented in Table \ref{tab1} and Table \ref{tab2}. In these cases, the value of \(s\) exceeds 1.2, while \(r\) is less than \(\|\bm A^\top\bm A\|\). This value of \(r\) results in the matrix corresponding to the added linearized quadratic term in the \(\bm \beta\)-subproblem no longer being positive definite (refer to \eqref{12}). Although a complement was introduced in the \(\bm u\) - subproblem to guarantee the convergence of CPPA-PD1, the non-positive definite quadratic approximation makes the \(\bm \beta\)-subproblem less accurate compared to the \(\bm \beta\)-subproblem solved by  Algorithm \ref{alg1}. }
\textcolor{red}{CPPA-PD2 aims to address the first issue. By setting \(\tau = 1\) and eliminating the correction step, CPPA-PD2 shows improved numerical performance compared to CPPA-PD1.
Building on CPPA-PD2, CPPA-PD3 further resolves the positive-definiteness problem of the linearization in the \(\bm \beta\)-subproblem, thereby achieving even better numerical performance. Notably, CPPA-PD3 exhibits similar performance to PPA. This can be explained by their iterative forms in the algorithm. The \(\bm \beta\)-subproblems of CPPA-PD3 and PPA are the same (the \(\bm \beta\)-subproblem is the main iterative step for solving the DS model). Although there are some differences in constant terms when updating \(\bm z\) and \(\bm u\), these differences do not affect the efficiency of algorithm. }

\begin{table*}[!ht]
    \centering
    \caption{Comparison of various  algorithms for solving $\ell_1$-DS in nonparallel environments with data scale \((n, p) = (1000, 10,000)\).}
    \renewcommand\arraystretch{1.5}
    \resizebox{\linewidth}{!}{
    \begin{tabular}{llllllll}
    \hline
        Method & $\rho$  & $\ell_1$ error & $\ell_2$ error & Model error & \quad \ FP & \quad \ NI & \ Time(s)  \\ \hline
        \multirow{3}{*}{ADMM} & 0.1 & 9.31(0.523) & 2.78(0.075) & 2.36(0.067) & 35.27(4.42) & 429.2(32.7) & 1086.3(76.2)  \\ 
                              & 0.5 & 9.24(0.516) & 2.74(0.071) & 2.41(0.069) & 36.11(4.58) & 430.1(33.5) & 1092.2(78.5)  \\ 
                              & 0.9 & 10.50(0.54) & 2.89(0.082) & 2.45(0.072) & 37.59(4.67) & 441.8(34.9) & 1103.4(82.0)   \\ \hdashline[0.5pt/5pt]
        \multirow{3}{*}{FSM} & 0.1 & 6.65(0.391) & 2.23(0.065) & 2.10(0.057) & 27.32(4.02) & \textbf{425.3(31.8)} & 662.1(35.2)  \\ 
                             & 0.5 & 6.57(0.382) & 2.15(0.061) & 2.09(0.063) & 25.87(3.87) & \textbf{423.7(31.2)} & 652.3(34.1)  \\ 
                             & 0.9 & 6.01(0.395) & 2.34(0.068) & 2.24(0.068) & 24.96(4.13) & \textbf{437.4(38.9)} & 671.0(40.3)  \\  \hdashline[0.5pt/5pt]
        \multirow{3}{*}{CPPA-PD1} & 0.1 & 2.73(0.076) & 0.67(0.019) & 0.59(0.052) & 20.45(3.32) & 493.0(43.5) & 563.2(39.5)  \\ 
                                 & 0.5 & 2.68(0.072) & 0.61(0.017) & 0.52(0.049) & 19.57(3.10) & 497.6(47.2) & 572.0(40.3)  \\ 
                                 & 0.9 & 2.95(0.083) & 0.75(0.023) & 0.48(0.041) & 24.33(3.54) & 488.2(50.3) & 588.4(41.2)  \\ \hdashline[0.5pt/5pt]  %
  \multirow{3}{*}{\textcolor{red}{CPPA-PD2}} & 0.1 & 2.06(0.034) & 0.45(0.013) & 0.41(0.023) & 13.67(2.51) & 477.1(42.4) & 540.3(40.0)  \\ 
                                 & 0.5 & 2.11(0.033) & 0.40(0.010) & 0.38(0.046) & 14.40(2.41) & 472.2(44.8) & 533.0(42.2)  \\ 
                                 & 0.9 & 2.08(0.036) & 0.51(0.020) & 0.47(0.038) & 13.86(1.99) & 469.6(47.1) & 530.3(45.5)  \\ \hdashline[0.5pt/5pt]  %
\multirow{3}{*}{\textcolor{red}{CPPA-PD3}} & 0.1 & \textbf{1.55(0.018)} & \textbf{0.40(0.008)} & {0.32(0.009)} & \textbf{12.10(0.78)} & 452.1(35.2) & 457.0(26.0)  \\ 
                             & 0.5 & \textbf{1.60(0.022)} & \textbf{0.49(0.009)} & {0.36(0.010)} & {11.93(0.77)} & 475.8(39.3) & 477.3(29.1)  \\ 
                             & 0.9 & {1.75(0.033)} & \textbf{0.61(0.011)} & {0.37(0.009)} & \textbf{12.40(0.86)} & 488.4(44.6) & 503.8(34.4) \\ \hdashline[0.5pt/5pt] %
\multirow{3}{*}{P-PLAM} & 0.1 & 2.56(0.068) & 0.58(0.015) & 0.51(0.039) & 36.32(4.20) & \text{451.6(34.2)} & \textbf{360.3(21.6)}  \\ 
                                & 0.5 & 2.47(0.061) & 0.55(0.014) & 0.50(0.034) & 38.56(5.11) & \text{438.2(36.9)} & \textbf{352.4(19.8)}  \\ 
                                & 0.9 & 2.73(0.072) & 0.63(0.017) & 0.56(0.041) & 40.32(4.97) & {443.0(41.3)} & \textbf{368.8(22.3)}  \\ \hdashline[0.5pt/5pt]
        \multirow{3}{*}{TADMM} & 0.1 & 4.37(0.269) & 1.64(0.045) & 1.51(0.032) & 23.25(2.41) & 500+(0.00) & 800.3(42.4)  \\ 
                               & 0.5 & 4.25(0.261) & 1.58(0.042) & 1.46(0.030) & 21.73(2.25) & 500+(0.00) & 813.6(50.8)  \\ 
                               & 0.9 & 4.69(0.278) & 1.70(0.048) & 1.55(0.034) & 26.08(2.83) & 500+(0.00) & 798.1(46.3)  \\ \hdashline[0.5pt/5pt]
        \multirow{3}{*}{PPA} & 0.1 &{1.57(0.020)} & {0.42(0.009)} & \textbf{0.30(0.008)} & {12.23(0.81)} & 456(36.3) & 460.2(27.2)  \\ 
                             & 0.5 & {1.62(0.024)} & {0.51(0.010)} & \textbf{0.32(0.009)} & \textbf{11.91(0.75)} & 473(40.5) & 480.4(30.3)  \\ 
                             & 0.9 & \textbf{1.74(0.031)} & {0.63(0.012)} & \textbf{0.35(0.009)} & {12.55(0.89)} & 491(45.8) & 507.2(35.6)   \\ \hline
    \end{tabular}}
\label{tab2}
\end{table*}

\subsection{Parallel environment}\label{sec42}
In the parallel environment, we focus on  denser  coefficients (many coefficients are non-zero) rather than sparse ones. In fact, dense coefficients often appear in numerical experiments related to DS algorithms, as seen in \cite{Lu2012An}, \cite{Wang2012The}, \cite{Prater2015Finding}, \cite{He2015A} and \cite{Mao2021A}. Here, we set $(n, p, |\mathcal{A}|) = (720s, 2560s, 320s)$ for $s = 1, \dots, 10$. Under this configuration, non-zero coefficients constitute one-eighth of the total coefficients. Consequently, we partition $p$ into 80 segments and randomly select 10 segments as non-zero coefficients, each containing $32s$ non-zero entries. These non-zero coefficients are defined as follows:
\begin{equation}\label{gbeta}
\beta^*_j = \left\{ \begin{array}{l}
\xi_j(1 + |a_j|),\ \ \ \ \text{if} \ j \in \mathcal{A}, \\
0,\ \ \ \ \ \ \ \ \ \ \ \ \ \ \ \ \ \text{otherwise},
\end{array} \right.
\end{equation}
where $\xi_j$ is randomly selected from the set $\{+1, -1\}$ and $a_j$ follows a $\mathcal{N}(0, 1)$ distribution. When implementing parallel computing, matrix $\bm A$ will be divided into $K$ blocks. Currently, the only parallel computing algorithm for DS is the three-block ADMM algorithm (TADMM) proposed by \cite{Wen2024Nonconvex}. Therefore, we will only compare the two parallel PPA algorithms presented in this paper with TADMM.

\begin{table*}[!ht]\footnotesize
    \centering
    \caption{Comparison of parallel environment for solving $\ell_1$-DS with $s=5$.}
    \renewcommand\arraystretch{1.5}
    \resizebox{\linewidth}{!}{
    \begin{tabular}{lllllll}
        \hline
        Method & $K$ & \qquad AE & \quad \ FP &   \quad FN  & \quad \ Ite & \quad \ Time(s)  \\ \hline
        \multirow{4}{*}{TADMM} & 1  & 0.412(0.148) &  123.6(14.2) &0.32(0.002) & 500+(0.00) & 2528.6(123.2)  \\ 
                               & 5  & 0.453(0.151) &  131.5(15.6) & 0.36(0.002)&500+(0.00) & 623.4(45.6)  \\ 
                               & 10 & 0.528(0.165) &  137.1(16.0) & 0.47(0.003)&500+(0.00) & 381.9(31.8)  \\ 
                               & 20 & 0.545(0.179) &  141.2(16.3) & 0.51(0.003) & 500+(0.00) & 237.0(22.5)  \\ \hdashline[0.5pt/5pt]
        \multirow{4}{*}{PPPA}  & 1 & \textbf{0.103(0.012)} &  \textbf{85.52(9.2)} & 0.12(0.001) & \textbf{442.3(31.8)} & 994.2(63.1)  \\ 
                               & 5 & \textbf{0.103(0.012)} &  \textbf{85.52(9.2)} & \textbf{0.12(0.001)} & \textbf{442.3(31.8)} & 278.4(23.5)  \\ 
                               & 10 & \textbf{0.103(0.012)} &  \textbf{85.52(9.2)} & \textbf{0.12(0.001)}& \textbf{442.3(31.8)} & \textbf{175.8(16.2)}  \\ 
                               & 20 & \textbf{0.103(0.012)} &  \textbf{85.52(9.2)} & \textbf{0.12(0.001)} & \textbf{442.3(31.8)} & \textbf{132.1(10.4)}  \\ \hdashline[0.5pt/5pt]
        \multirow{4}{*}{IPPPA} & 1 & 0.109(0.013) &   87.28(9.90) & \textbf{0.11(0.001)} &456.6(33.7) & \textbf{902.3(57.9)}  \\ 
                               & 5 & 0.110(0.014) &   89.15(10.3) & 0.13(0.001) & 463.2(34.2) & \textbf{268.6(20.6)}  \\ 
                               & 10 & 0.113(0.014) &  91.34(10.9) & 0.15(0.001) & 466.8(35.1) & 181.5(15.8)  \\ 
                               & 20 & 0.115(0.015) &   93.40(11.5) & 0.18(0.001)& 475.4(36.5) & 147.0(11.5)  \\ \hline
    \end{tabular}}
\label{tab3}
\end{table*}

Unlike the evaluation criteria for sparse coefficients, we use the following five  evaluation criteria in the example of dense coefficients: (1) absolute error (AE),  expressed by \( \lVert \hat{\bm \beta} - \bm \beta^* \rVert_1/p \); (2) false positives (FP), representing the quantity of non-significant features chosen;  (3) false negatives (FN), indicating the number of significant features omitted; (4) number of iterations (NI); and (5) CPU runtime (Time), measured in seconds. To save space, we present results only for \( K = 1, 5, 10, 20 \) and \( \rho = 0.5 \) in Table \ref{tab3} and Table \ref{tab4}.  \textcolor{red}{Note that \( K = 1\) means computing in a nonparallel environment.}  Results for other values of \( K \) will be shown in Figure \ref{Fig1} and Figure \ref{Fig2}. The above numerical experimental results are all for convex $\ell_1$-DS. For numerical experiments on nonconvex DS, such as SCAD-DS and MCP-DS, we have included them in the Appendix \ref{B2}.
\begin{table*}[!ht]\footnotesize
    \centering
    \caption{Comparison of parallel environment for solving $\ell_1$-DS with $s=10$.}
    \renewcommand\arraystretch{1.5}
    \resizebox{\linewidth}{!}{
    \begin{tabular}{lllllll}
        \hline
        Method & $K$ & \qquad AE & \quad \ FP & \quad \ FN &  \quad \ Ite & \quad \ Time(s)  \\ \hline
        \multirow{4}{*}{TADMM} & 1  & 0.833(0.227) & 352.5(30.4) &1.32(0.010) & 500+(0.00) & 5443.7(277.5)  \\ 
                               & 5  & 0.842(0.239) & 361.0(32.7) &1.57(0.016) & 500+(0.00) & 1015.5(152.7)  \\ 
                               & 10 & 0.857(0.248) & 372.2(35.1) & 1.84(0.021) & 500+(0.00) & 692.7(72.9)  \\ 
                               & 20 & 0.866(0.256) & 380.1(38.2) & 2.23(0.027) & 500+(0.00) & 386.2(45.8)  \\ \hdashline[0.5pt/5pt]
        \multirow{4}{*}{PPPA}  & 1  & 0.191(0.024) & \textbf{163.6(15.8)} & 0.51(0.006) & \textbf{472.6(34.9)} & 1624.1(92.5)  \\ 
                               & 5  & 0.191(0.024) & \textbf{163.6(15.8)} & \textbf{0.51(0.006)} & \textbf{472.6(34.9)} & 357.5(35.2)  \\ 
                               & 10 & \textbf{0.191(0.024)} & \textbf{163.6(15.8)} & \textbf{0.51(0.006)} & \textbf{472.6(34.9)} & \textbf{245.9(20.3)}  \\ 
                               & 20 & \textbf{0.191(0.024)} & \textbf{163.6(15.8)} & \textbf{0.51(0.006)} & \textbf{472.6(34.9)} & \textbf{161.2(13.6)}  \\ \hdashline[0.5pt/5pt]
        \multirow{4}{*}{IPPPA} & 1  & \textbf{0.185(0.022)} & 172.3(16.4) & \textbf{0.47(0.06)} & 487.8(40.1) & \textbf{1476.5(98.9)}  \\ 
                               & 5  & \textbf{0.190(0.023)} & 179.2(17.2) & 0.53(0.07)& 498.5(43.7) & \textbf{301.4(29.7)}  \\ 
                               & 10 & 0.196(0.024) & 184.6(18.5) & 0.61(0.08) & 500+(50.2) & 257.8(22.1)  \\ 
                               & 20 & 0.199(0.026) & 190.2(20.7) & 0.64(0.08) & 500+(55.3) & 212.52(19.2)  \\ \hline
    \end{tabular}}
\label{tab4}
\end{table*}

An interesting observation in two figures is that in Figure  \ref{Fig1}, when \( K \geq 20 \), increasing the number of partitions \( K \) of the gram matrix does not accelerate the computation speed for three parallel algorithms. In Figure \ref{Fig2}, this phenomenon appears when \( K \geq 30 \).  The observed phenomenon is likely a consequence of the combined effects of increased synchronization and communication overhead, potential load imbalance, memory bandwidth and cache constraints, non-linear scaling of algorithmic complexity, and hardware limitations in handling a large number of parallel tasks. Nevertheless, moderately and effectively increasing the number of parallelism is an effective way to accelerate the computation speed of DSs.
In Tables \ref{tab3} and \ref{tab4}, with the variation of \(K\), all metrics in PPPA, excluding the Time metric, \textcolor{red}{remain constant and are identical to those when \(K = 1\) (nonparallel environment).    The reduction in computation time can be attributed to two factors. Firstly, matrix partitioning shortens the time required for matrix-matrix or matrix-vector multiplications. Secondly, the iterative variables corresponding to each sub-matrix are processed in parallel. } This is also the partition insensitivity described in Theorem \ref{TH1}. This insensitivity is also reflected in Figures \ref{Fig1} and \ref{Fig2}, where the AE of PPA is shown as a horizontal line. The two PPA algorithms outperform TADMM in terms of estimation accuracy, variable selection, and computational efficiency. As we discussed earlier, our algorithms have fewer iterative variables, which not only enhances the accuracy of the solutions but also improves computational speed. 

\begin{figure}[H]
\centering
\setlength{\abovecaptionskip}{0.cm}  
\subfigure{
\label{Fig1.a}
\includegraphics[width=0.48\textwidth]{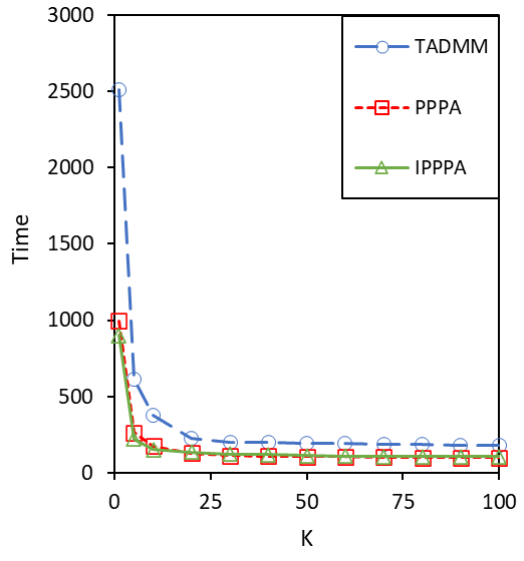}}
\subfigure{
\label{Fig1.b}
\includegraphics[width=0.48\textwidth]{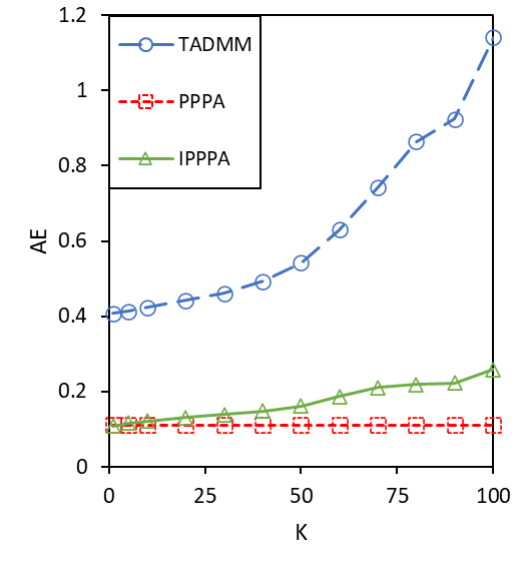}}
\caption{A schematic diagram illustrating the variation of AE and Time with respect to \( K \), where \( s = 5 \).}
\label{Fig1}
\end{figure}

\begin{figure}[H]
\centering
\setlength{\abovecaptionskip}{0.cm}  
\subfigure{
\label{Fig2.a}
\includegraphics[width=0.48\textwidth]{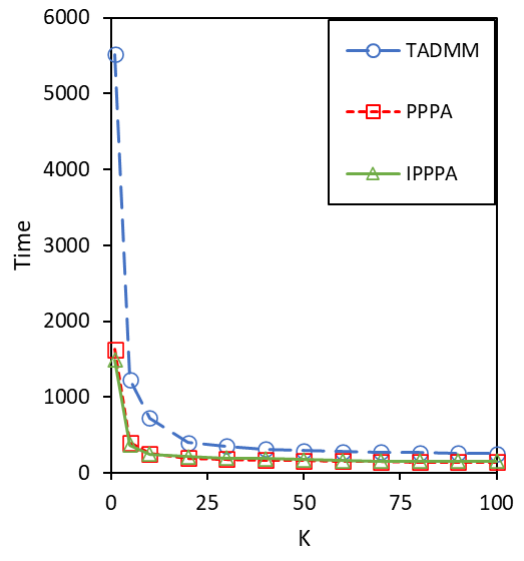}}
\subfigure{
\label{Fig2.b}
\includegraphics[width=0.48\textwidth]{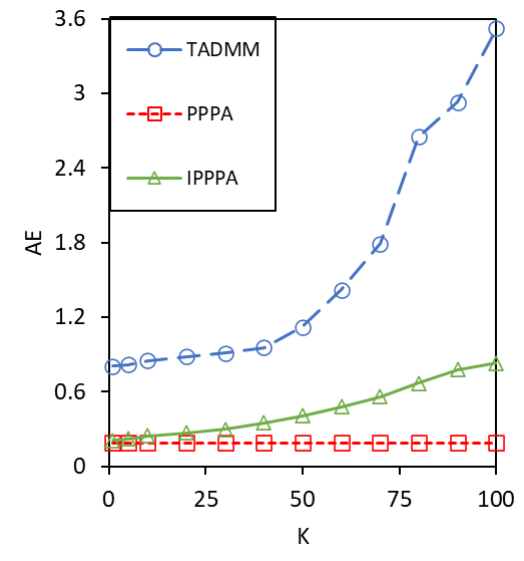}}
\caption{A schematic diagram illustrating the variation of AE and Time with respect to \( K \), where \( s = 10 \).}
\label{Fig2}
\end{figure}

The numerical performance of the improved parallel PPA (IPPPA) compared to the parallel PPA (PPPA) presented in Tables \ref{tab3} and \ref{tab4} does not appear to have improved. In many cases, IPPPA  may require even more iterations and longer computation times. The reason IPPPA outperforms PPPA in computational effectiveness is that IPPPA approximates \( \text{eigen}(\mu \bm A^\top \bm A) \) by calculating the eigenvalues of submatrices \( \text{eigen}(\mu \bm A_i^\top \bm A_i) \) instead of directly computing \( \text{eigen}(\mu \bm A^\top \bm A) \). This approach leads to additional iterations, and indeed, as \( K \) increases, the computational performance of IPPPA deteriorates and its efficiency decreases. The real advantage of IPPPA over PPPA comes into play when the storage requirements for the matrix are too large to allow for \( \text{eigen}(\mu \bm A^\top \bm A) \)  calculations; in such cases, PPPA becomes infeasible, while IPPPA can still be implemented.

\section{Numerical simulations using real dataset}\label{sec5}
In this section, we utilize two real-world datasets to construct three different DS models: $\ell_1$-DS, SCAD-DS, and MCP-DS. These models are trained using the parallel computation algorithms TADMM, proposed by \cite{Wen2024Nonconvex}, as well as two parallel PPA algorithms introduced in this paper.
\subsection{Leukemia dataset}
In this experiment, the convex and nonconvex Dantzig selectors generated by various algorithms are applied to a set of biomarker data to determine whether a patient may be diagnosed with a specific type of cancer. Leukemia  is a type of cancer that affects the blood or bone marrow and is characterized by an abnormal increase in white blood cells. The leukemia dataset first introduced in \cite{Golub1999Molecular}  consists of 38 training samples and 34 test samples. Among the 38 training samples, 27 are classified as acute lymphocytic leukemia (ALL), while the remaining 11 are classified as acute myelogenous leukemia (AML). Each sample in this dataset includes measurements of $7,129$ genes.  This data has been studied in the literature through various regularization methods and Dantzig selectors, such as in \cite{Lu2012An}, \cite{Wang2012The} \cite{Prater2015Finding} and \cite{Wu2024Multi}. 
\begin{table}[!ht]\footnotesize
    \centering
    \caption{Comparison of three DSs for leukaemia data.}
    \renewcommand\arraystretch{1.5}
    \begin{tabular}{ccccccc}
    \hline
        \makebox[0.12\textwidth][c]{Method} & \makebox[0.13\textwidth][c]{Algorithm} & \makebox[0.1\textwidth][c]{TRE} & \makebox[0.1\textwidth][c]{TEE} &  \makebox[0.1\textwidth][c]{$| \hat{\mathcal A}|$}  & \makebox[0.1\textwidth][c]{Ite} & \makebox[0.1\textwidth][c]{Time(s)} \\ \hline
        \multirow{3}{*}{$\ell_1$-DS} & TADMM & \textbf{0/38} & 2/34 & 206 & 182 & 1.96 \\ 
                               & PPPA & \textbf{0/38} & \textbf{0/34} & \textbf{63} & \textbf{120} & \textbf{1.21} \\ 
                               & IPPPA & \textbf{0/38} & \textbf{0/34} & 67 & 123 & 1.26 \\ \hdashline[0.5pt/5pt]
        \multirow{3}{*}{SCAD-DS} & TADMM & \textbf{0/38} & \textbf{0/34} & 105 & 193 & 2.07 \\ 
                                 & PPPA & \textbf{0/38} & \textbf{0/34} & 44 & 126 & 1.28 \\ 
                                 & IPPPA & \textbf{0/38} & \textbf{0/34} & \textbf{42} & \textbf{125} & \textbf{1.24} \\ \hdashline[0.5pt/5pt]
        \multirow{3}{*}{MCP-DS} & TADMM & \textbf{0/38} & 1/34 & 109 & 195 & 2.11 \\ 
                                & PPPA & \textbf{0/38} & \textbf{0/34} & \textbf{49} & \textbf{127} & \textbf{1.30} \\ 
                                & IPPPA & \textbf{0/38} & \textbf{0/34} & 50 & 130 & 1.33 \\ 
    \hline
    \end{tabular}
    \label{tab5}
\end{table}

Let \( \bm{X}_{\text{train}} \in \mathbb{R}^{38 \times 7129} \) represent the leukemia dataset from the training set, where each row corresponds to the $7,129$ gene measurements of a single patient, and each column has been normalized to have unit \(\ell_2\) norm. Let \( \bm{y}_{\text{train}} \in \mathbb{R}^{38} \) denote the column vector indicating the diagnosis of each patient within the training set, defined as:
\[
\bm{y}_{\text{train}}(j) = 
\begin{cases} 
1, & \text{if patient } j \text{ in the training set is diagnosed with AML}, \\
0, & \text{if patient } j \text{ in the training set is diagnosed with ALL}.
\end{cases}
\]
Similarly, define \( \bm{X}_{\text{test}} \in \mathbb{R}^{34 \times 7129} \) and \( \bm{y}_{\text{test}} \in \mathbb{R}^{34} \) using the data from the testing set.

During the testing phase, the trained parameter \(\hat{\bm \beta}\) is utilized to predict the diagnoses of patients in the testing set. The predictive indicator vector \(\hat{\bm{y}}_{\text{test}} \in \mathbb{R}^{34}\) is computed from \( \bm{y} = \bm{X}_{\text{test}} \hat{\bm \beta} \) by applying thresholding and clustering to identify values near the threshold boundary.
Define \(\hat{\bm{y}}_{\text{test}}(j)\) as follows:
\[
\hat{\bm{y}}_{\text{test}}(j) = 
\begin{cases} 
1, & \text{if } y(j) \geq 0.5, \\
0, & \text{if } y(j) < 0.49.
\end{cases}
\]
This allows us to categorize the predictions based on whether they fall above or below the specified thresholds.

In a parallel computing environment, we randomly partition the $7,129$ features into 5 groups, denoted as \( K = 5 \), with 4 groups containing $1,425$ features each and one group containing $1,429$ features. In  Table \ref{tab5}, we present the results of the numerical experiments, where TRE represents the training error, TEE denotes the testing error, and $|\hat{\mathcal A}|$ indicates the number of estimated coefficients that are non-zero.  Overall, the numerical results indicate that, in a parallel computing environment, the two parallel PPA algorithms yield lower prediction errors and require fewer iterations compared to TADMM when solving the DS models constructed using leukemia dataset. This substantial reduction in iterations significantly saves computational time. 

\textcolor{red}{Specifically, in terms of TRE metric, all three parallel algorithms can accurately identify symptoms of ALL and AML. However, in terms of TEE metric, the parallel PPA algorithm proposed in this paper will perform better than the TADMM algorithm. In details, on the prediction set,  $\ell_1$-DS model trained by the PPA algorithms accurately identifies 20 cases of ALL and 14 cases of AML. In contrast, $\ell_1$-DS model trained by the TADMM algorithm misidentifies one case of AML as ALL and one case of ALL as AML. That is to say, TEE is 2 out of 34. The SCAD-DS model trained by the three algorithms can accurately recognize all ALL and AML.  For the MCP-DS model trained by the three algorithms, the model trained by PPAs can accurately recognize all ALL and AML, but TADMM will incorrectly recognize one AML, which has a TEE of 1/34. More importantly, the two PPA algorithms select a smaller number of non-zero coefficients. In other words, the value of $|\hat{\mathcal A}|$ for the PPA algorithms is smaller than that for the TADMM algorithm. This characteristic holds a certain guiding significance in practical applications. It implies that in the future, one only needs to observe the features corresponding to the non-zero coefficients in each sample. Consequently, this approach can help save the cost of data collection. }

\subsection{Supermarket dataset}
In this subsection, we assess the performance of the proposed algorithms using a supermarket dataset previously analyzed by 
\cite{Wang2009Forward}. This data was also evaluated by \cite{Wen2024Nonconvex} for their proposed regularization method and algorithm. This dataset includes the daily number of customers and the daily sales volumes of \( p = 6,398 \) products from a Chinese supermarket over a period of \( n = 464 \) days. \textcolor{red}{The supermarket manager is keen to determine which product's sales volume exhibits the strongest correlation with the number of customers, after accounting for the influence of other products. Then, DS models prove to be valuable. } Hence, our goal is to predict the daily number of customers (the response variable of interest) using the daily sales volumes as predictors. In line with \cite{Wang2009Forward}, both the response variable and predictors have been standardized to have a mean of zero and a variance of one for confidentiality reasons.

\begin{table}[!ht]\footnotesize
    \centering
    \caption{Comparison of three DSs for supermarket data.}
    \renewcommand\arraystretch{1.5}
    \begin{tabular}{ccccccc}
    \hline
        \makebox[0.12\textwidth][c]{Method} & \makebox[0.13\textwidth][c]{Algorithm} & \makebox[0.1\textwidth][c]{TRE} & \makebox[0.1\textwidth][c]{TEE} &  \makebox[0.1\textwidth][c]{$|\hat{\mathcal A}|$}  & \makebox[0.1\textwidth][c]{Ite} & \makebox[0.1\textwidth][c]{Time(s)}  \\ \hline
        \multirow{3}{*}{$\ell_1$-DS} & TADMM & 0.232 & 0.329 & 172 & 223 & 2.53  \\ 
                               & PPPA & 0.227 & \textbf{0.275} & \textbf{96} & \textbf{141} & \textbf{1.59}  \\ 
                               & IPPPA & \textbf{0.225} & 0.298 & 103 & 152 & 1.62  \\ \hdashline[0.5pt/5pt]
        \multirow{3}{*}{SCAD-DS} & TADMM & 0.226 & 0.349 & 41 & 271 & 2.98  \\ 
                                 & PPPA & 0.218 & \textbf{0.312} & 39 & \textbf{149} & \textbf{1.70}  \\ 
                                 & IPPPA & \textbf{0.213} & 0.320 & \textbf{36} & 163 & 1.81  \\ \hdashline[0.5pt/5pt]
        \multirow{3}{*}{MCP-DS} & TADMM & 0.231 & 0.332 & 50 & 268 & 2.85  \\ 
                                & PPPA & \textbf{0.219} & \textbf{0.316} & \textbf{43} & \textbf{153} & \textbf{1.72}  \\ 
                                & IPPPA & 0.227 & 0.325 & 41 & 167 & 1.84  \\ \hline
    \end{tabular}
    \label{tab6}
\end{table}
Following \cite{Wen2024Nonconvex}, we divide the dataset into a training set comprising observations from the first 300 days and a testing set consisting of the remaining days. Firstly, using the training dataset, we construct three DS models: $\ell_1$-DS, SCAD-DS, and MCP-DS. Next, we employ the TADMM proposed by \cite{Wen2024Nonconvex} and the two parallel PPA algorithms presented in this paper to solve these models. The obtained estimated coefficients are then used for prediction.

In a parallel computing environment, we randomly partition the  \( p = 6,398 \) features into 5 groups, denoted as \( K = 5 \), with 4 groups containing $1,279$ features each and one group containing $1,282$ features.  We present the results of the numerical experiments in the Table \ref{tab6}, where TRE represents the training error, TEE denotes the testing error, and  $|\hat{\mathcal A}|$  indicates the number of estimated coefficients that are non-zero. The numerical results in Table \ref{tab6} indicate that the two parallel PPA algorithms have better training and testing errors compared to TADMM when solving the DS models constructed with supermarket dataset, and also have improved computational efficiency.    \textcolor{red}{To be specific,  two PPA algorithms offer more accurate predictions of the daily number of customers compared to the TADMM algorithm. This means that supermarket managers can utilize our algorithms to obtain more precise forecasts, enabling them to better arrange the quantity of goods and the number of service staff.  In addition, the coefficients estimated by our algorithm have a higher degree of sparsity. When the variety of supermarket products remains unchanged, this can reduce the cost of data collection in subsequent data-gathering processes. }

\section{Conclusion and further research}\label{sec6}
In this paper, we have successfully developed a novel variable splitting parallel algorithm for addressing both convex and nonconvex Dantzig selectors, leveraging the proximal point algorithm. Our approach uniquely minimizes the number of iteration variables, thereby significantly enhancing computational efficiency and accelerating convergence. Notably, our algorithm exhibits a partition-insensitive property, ensuring consistent performance regardless of data partitioning. Theoretical analysis has confirmed the linear convergence of our algorithm, and empirical results demonstrate its competitive performance across various computational environments. Additionally, to handle extreme situations where the matrix dimension is excessively large, and the maximum eigenvalue of the entire gram matrix cannot be computed, we have developed an enhanced version of this parallel PPA algorithm.
The R package for our algorithm, available at \url{https://github.com/xfwu1016/PPADS}, invites contributions and additional applications from the research community.

While our work has shown promising results, future research could explore several avenues for further enhancement. Firstly, investigating adaptive partitioning strategies to optimize performance across diverse datasets could yield even greater efficiency gains. Secondly, extending the algorithm to handle streaming data scenarios, where data arrives sequentially, would broaden its applicability. Lastly, integrating our method with advanced machine learning models and leveraging GPU acceleration for even faster processing could further solidify its utility in large-scale data applications.

\section*{Acknowledgements}
We would like to extend our sincere gratitude to Professor Bingsheng He from Nanjing University for sharing his work on the proximal point algorithm. It is his pioneering research that inspired this paper.  The research of Zhang was supported by the National Natural Science Foundation of China [Grant Numbers 12271066, 12171405, 11871121], and the research of Wu, Tang and Liang  was supported by the Scientific and Technological Research Program of Chongqing Municipal Education Commission [Grant Numbers KJQN202302003].

\begin{appendices}
\section{\textcolor{red}{Proofs of  Theorems \ref{TH1} and  \ref{TH5} }} \label{C}

\subsection{Proof of  Theorem \ref{TH1}}\label{C1}
\textcolor{red}{Recall that we denote $\left\{ \hat{\bm \beta}^t, \hat{\bm z}^t, \hat{\bm u}^t\right\}$ as the results of the $t$-th iteration of Algorithm \ref{alg1}, and $\left\{ \tilde{\bm \beta}^t, \tilde{\bm z}^t, \tilde{\bm u}^t \right\}$ as the results of the $i$-th iteration of Algorithm \ref{alg2}. We make the assumption that 
\begin{align}\label{ass}
\left\{ \hat{\bm \beta}^t, \hat{\bm z}^t, \hat{\bm u}^t\right\} = \left\{ \tilde{\bm \beta}^t, \tilde{\bm z}^t, \tilde{\bm u}^t \right\}, 
\end{align}
and subsequently, we will conduct an analysis of the iteration scenario at step $t + 1$. }

\textcolor{red}{Firstly, we will discuss the update of the $\bm \beta$-subproblem. For Algorithm \ref{alg1}, we have 
\begin{align}\label{prc1}
\hat{\bm \beta}^{t+1} \leftarrow   \mathop {\arg \min }\limits_{\bm \beta} \left\{\| \bm \beta \|_1 + \frac{\eta}{2} \|\bm \beta - \hat{\bm \beta}^t - \frac{\bm A^\top \hat{\bm u}^t}{\eta}   \|_{2}^2 \right \};
\end{align}
and  for   Algorithm \ref{alg2}, we have
\begin{align}\label{prc2}
\tilde{\bm \beta}_{i\cdot} ^{t+1} \leftarrow  \mathop {\arg \min }\limits_{\bm \beta_{i\cdot}} \left\{\| \bm \beta_{i\cdot} \|_1 + \frac{\eta}{2} \|\bm \beta_{i\cdot} - \tilde{\bm \beta}_{i\cdot}^t - \frac{\bm A_i^\top \tilde{\bm u}^t}{\eta}   \|_{2}^2 \right \},  i = 1,2,\dots,K.
\end{align}
If  $\hat{\bm \beta}^{t+1}$ is also decomposed from the column,  that is,  $$\hat{\bm \beta}^{t+1} = ((\hat{\bm \beta}_{1\cdot}^{t+1})^\top, (\hat{\bm \beta}_{2\cdot}^{t+1})^\top, \dots, (\hat{\bm \beta}_{K\cdot}^{t+1})^\top)^\top.$$
Since $\bm A = [\bm A_1, \bm A_2,\dots,\bm A_K ]$,  it can be concluded that \begin{align}\label{prc3}
\hat{\bm \beta}_{i\cdot} ^{t+1} \leftarrow  \mathop {\arg \min }\limits_{\bm \beta_{i\cdot}} \left\{\| \bm \beta_{i\cdot} \|_1 + \frac{\eta}{2} \|\bm \beta_{i\cdot} - \hat{\bm \beta}_{i\cdot}^t - \frac{\bm A_i^\top \hat{\bm u}^t}{\eta}   \|_{2}^2 \right \},  i = 1,2,\dots,K.
\end{align}
According to the assumption in \eqref{ass}, there is  $\hat{\bm \beta}_{i\cdot}^t =  \tilde{\bm \beta}_{i\cdot}^t $ and  $\hat{\bm u}^t = \tilde{\bm u}^t$. It follows from \eqref{prc2} and  \eqref{prc3} that
\begin{align}\label{prbeta}
 \hat{\bm \beta}_{i\cdot} ^{t+1} = \tilde{\bm \beta}_{i\cdot} ^{t+1} ,  i = 1,2,\dots,K.
\end{align}}

\textcolor{red}{Next, we will discuss the update of the $\bm z$-subproblem.  For Algorithm \ref{alg1}, we have   \begin{align}\label{prz}
\hat{\bm z}^{t+1} \leftarrow  \mathop {\arg \min }\limits_{\bm z} \left\{ \delta_{\mathcal{Z}_0(\bm z)} + \frac{\mu}{2} \|\bm z - \hat{\bm z}^t + \frac{ \hat{\bm u}^t}{\mu}   \|_{2}^2 \right \};
\end{align}
and   for   Algorithm \ref{alg2}, we get
\begin{align}\label{prz2}
\tilde{\bm z}^{t+1} \leftarrow  \mathop {\arg \min }\limits_{\bm z} \left\{ \delta_{\mathcal{Z}_0(\bm z)} + \frac{\mu}{2} \|\bm z - \tilde{\bm z}^t + \frac{ \tilde{\bm u}^t}{\mu}   \|_{2}^2 \right \};
\end{align}
Again,  according to the assumption in \eqref{ass},   there is  $\hat{\bm z}^t = \tilde{\bm z}^t$ and  $\hat{\bm u}^t = \tilde{\bm u}^t$.  It follows from \eqref{prz} and  \eqref{prz2} that
\begin{align}\label{prz3}
 \hat{\bm z}^{t+1} = \tilde{\bm z}^{t+1}.
\end{align}
Finally, we will discuss the update of the $\bm u$-subproblem. Since $\sum_{i=1}^{K} \bm A_i \bm \beta_{i\cdot}  = \bm A \bm \beta $,   It follows from \eqref{prbeta} and  \eqref{prz3} that 
\begin{align}\label{pru}
 \hat{\bm u}^{t+1} = \tilde{\bm u}^{t+1}.
\end{align}}

\textcolor{red}{Therefore,   we can draw the conclusion that when the iteration sequences of the two algorithms at step $t$ are the same, the iteration sequences at step $t + 1$  will also be the same.  Based on this, assuming that \(\{ \hat{\bm \beta}^0, \hat{\bm z}^0, \hat{\bm u}^0 \} = \{ \tilde{\bm \beta}^0, \tilde{\bm z}^0 , \tilde{\bm u}^0 \}\), it follows that \(\{ \hat{\bm \beta}^1, \hat{\bm z}^1, \hat{\bm u}^1 \} = \{ \tilde{\bm \beta}^1, \tilde{\bm z}^1 , \tilde{\bm u}^1 \}\). By the same token, we can deduce that 
\begin{align}
\left\{ \hat{\bm \beta}^t, \hat{\bm z}^t, \hat{\bm u}^t \right\} = \left\{ \tilde{\bm \beta}^t, \tilde{\bm z}^t , \tilde{\bm u}^t \right\}, \quad \text{for all } t.
\end{align}
Up to this point, the proof of the theorem has been successfully completed. }%

\subsection{Proof of  Theorem \ref{TH5}}\label{C2}

\textcolor{red}{This proof is divided into two parts, first proving the time complexity of the algorithm, and then proving the space complexity of the algorithm. }

\textcolor{red}{(1). Reviewing the iterative processes of Algorithm \ref{alg2} and Algorithm \ref{alg3}, we observe that the updates of the  \(\bm \beta\),  \(\bm z\) and \(\bm u\) subproblems are nearly identical. The main computational burden of \(\bm \beta\) updates is the calculation of $\bm A^\top \bm u$.  Due to its parallel computation across $K$ nodes, each node only needs to calculate $\bm A_{i}^\top \bm u$.   Its computational complexity is $\mathcal{O}(\frac{p^2}{K})$. The update of the $\bm z$ subproblem only involves the addition, subtraction, and comparison of some $p$-dimensional vectors. Then, its computational complexity is $\mathcal{O}(p)$. The biggest computational burden of updating \(\bm u\)  lies in computing $\bm A \bm \beta = \sum_{i=1}^{K} \bm A_i \bm \beta_{i\cdot}$, which will be computed in parallel across $K$ nodes. Its computational complexity is $\mathcal{O}(\frac{p^2}{K})$. Clearly,  $\mathcal{O}(\frac{p^2}{K}) + \mathcal{O}(p) + \mathcal{O}(\frac{p^2}{K}) =  \mathcal{O}(\frac{p^2}{K})$ because  the number of column partitions cannot exceed the number of columns, i.e. $p \ge K$.}

\textcolor{red}{Next,  we need to discuss the time complexity of the pre-computation for Algorithm \ref{alg2} and Algorithm \ref{alg3} separately.   Algorithm \ref{alg2}  requires pre-computation of $\bm X^\top \bm y$,  $\bm A_i =  \bm X^\top \bm X_i$,  and calculate $\eta$ using the power method. The time complexity required for the first and second items is  $\mathcal O(np)$ and  $\mathcal O(\frac{np^2}{K}) $, respectively. As discussed in \cite{Liang2024Linearized}, the computational complexity required for the power method is quadratic with the number of columns in matrix $\bm A$, i.e. $\mathcal O({p^2}) $. Thus,   the time complexity of the pre-computation for Algorithm \ref{alg2} is  $\mathcal O(\frac{np^2}{K}) + \mathcal O({p^2})  $. The main difference between Algorithm  \ref{alg3} pre-computation and Algorithm \ref{alg2} is that one calculates $\eta$ and the other calculates $\eta_i$. Similarly, the complexity of calculating $\eta_i$ is $\mathcal O(\frac{p^2}{K^2})$.  Thus,   the time complexity of the pre-computation for Algorithm \ref{alg3} is  $\mathcal O(\frac{np^2}{K})  $. 
By summarizing the above discussion, we can arrive at the conclusion in (\ref{comp1}).}

\textcolor{red}{(2). Next, we will discuss the space complexity of the Algorithm \ref{alg2} and Algorithm \ref{alg3}. The space complexity of algorithms mainly includes three parts: the required storage for inputs and outputs, and the additional storage required for computation. The maximum storage space required for the input and output of Algorithm  \ref{alg2} is matrix $\bm X$, which is $\mathcal O(np)$. The largest additional storage during the precomputation process is the storage of matrix $\bm A$, which is $\mathcal O(p^2)$. The additional storage required for updating the three subproblems is relatively small, which is $\mathcal O(p)$. Therefore, the required storage space for Algorithm \ref{alg2}  is  $\mathcal O(np) + \mathcal O(p^2)$.}

\textcolor{red}{Similarly, the maximum storage space required for the input and output of Algorithm  \ref{alg3} is matrix $\bm X$, which is $\mathcal O(np)$. he largest additional storage during the precomputation process is the storage of matrix $\bm A_i$, which is $\mathcal O(\frac{p^2}{K})$.  The updates of the three sub problems in Algorithm \ref{alg3} are similar to those in Algorithm \ref{alg2}, requiring only $\mathcal O(p)$ of additional storage space. Therefore, the required storage space for Algorithm \ref{alg3}  is  $\mathcal O(np) +\mathcal O(\frac{p^2}{K})$.}

\section{Proofs of  Theorems \ref{TH2} and \ref{TH3}}\label{A}

\subsection{preliminary}

\subsubsection{Lemma}
\begin{lem}\label{lem1}
(Lemma 2.1 in \cite{He2022A}). Let \( \mathbb{X} \subset \mathbb{R}^l \) be a closed convex set, and let \( \theta : \mathbb{R}^l \to \mathbb{R} \) and \( f : \mathbb{R}^l \to \mathbb{R} \) be convex functions. Suppose \( f \) is differentiable on an open set containing \( \mathbb{X} \), and the minimization problem
\[
\min \{ \theta(x) + f(x) \mid x \in \mathbb{Z} \}
\]
has a nonempty solution set. Then, \( x^* \in \arg \min \{ \theta(x) + f(x) \mid x \in \mathbb{X} \} \) if and only if
\[
x^* \in \mathbb{X} \quad \text{and} \quad \theta(x) - \theta(x^*) + (x - x^*)^\top \nabla f(x^*) \geq 0, \quad \forall x \in \mathbb{X}.
\]
\end{lem}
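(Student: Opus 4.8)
The plan is to establish the stated equivalence by proving its two implications separately, using the convexity of $\mathbb{X}$ to form feasible line segments and the differentiability of $f$ to extract a first-order condition. Throughout I write $\phi(x) := \theta(x) + f(x)$ for the objective, and I note in advance that the convexity of $\theta$ (rather than any smoothness of $\theta$) is what carries the nonsmooth term.

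For the necessity direction ($\Rightarrow$), suppose $x^{*} \in \arg\min\{\phi(x)\mid x\in\mathbb{X}\}$. Fix an arbitrary $x\in\mathbb{X}$; since $\mathbb{X}$ is convex, the point $x_{\alpha} := (1-\alpha)x^{*} + \alpha x$ lies in $\mathbb{X}$ for every $\alpha\in(0,1]$. Optimality of $x^{*}$ gives $\phi(x^{*})\le\phi(x_{\alpha})$, while convexity of $\theta$ gives $\theta(x_{\alpha})\le(1-\alpha)\theta(x^{*})+\alpha\theta(x)$. First I would combine these two inequalities and cancel the common $\theta(x^{*})$ contributions to reach
\[
\alpha\bigl(\theta(x^{*})-\theta(x)\bigr)\le f(x_{\alpha})-f(x^{*}).
\]
Dividing by $\alpha>0$ and letting $\alpha\to 0^{+}$, the right-hand difference quotient tends to the directional derivative of $f$ at $x^{*}$ in the direction $x-x^{*}$, which equals $(x-x^{*})^{\top}\nabla f(x^{*})$ because $f$ is differentiable there. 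This yields precisely $\theta(x)-\theta(x^{*})+(x-x^{*})^{\top}\nabla f(x^{*})\ge 0$.

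For the sufficiency direction ($\Leftarrow$), suppose $x^{*}\in\mathbb{X}$ satisfies the variational inequality. The tool I would use is the gradient inequality for the convex differentiable function $f$, namely $f(x)\ge f(x^{*})+(x-x^{*})^{\top}\nabla f(x^{*})$ for all $x\in\mathbb{X}$. Substituting the resulting bound $(x-x^{*})^{\top}\nabla f(x^{*})\le f(x)-f(x^{*})$ into the assumed inequality and rearranging gives $\theta(x)+f(x)\ge\theta(x^{*})+f(x^{*})$ for every $x\in\mathbb{X}$, which is exactly the assertion that $x^{*}$ minimizes $\phi$ over $\mathbb{X}$.

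I expect the main obstacle to be the limit passage $\alpha\to 0^{+}$ in the necessity direction: one must justify that $\alpha^{-1}\bigl(f(x_{\alpha})-f(x^{*})\bigr)\to(x-x^{*})^{\top}\nabla f(x^{*})$, and this is the only place where differentiability of $f$ on an open neighborhood of $\mathbb{X}$ is genuinely invoked, guaranteeing that the directional derivative exists and coincides with the inner product against the gradient. The convexity of $\theta$ is what allows me to sidestep differentiating the possibly nonsmooth term entirely, replacing it by a single convex-combination estimate; this is the structural reason the characterization remains valid for nonsmooth $\theta$. The nonemptiness of the solution set is needed only to make the equivalence non-vacuous and plays no role in either implication.
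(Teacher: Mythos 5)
Your proof is correct. Note that the paper itself does not prove this statement: it is imported verbatim as Lemma 2.1 of the cited reference (He et al.), so there is no in-paper argument to compare against. Your two-directional argument is the standard one for composite first-order optimality over a convex set --- the convex-combination trick $x_\alpha=(1-\alpha)x^*+\alpha x$ plus the limit $\alpha\to 0^+$ for necessity, and the gradient inequality $f(x)\ge f(x^*)+(x-x^*)^\top\nabla f(x^*)$ for sufficiency --- and both steps are carried out without gaps. Your closing remarks are also accurate: differentiability of $f$ on an open set containing $\mathbb{X}$ is invoked only in the limit passage, convexity of $\theta$ is what lets the nonsmooth term ride along via a single convex-combination estimate, and the nonemptiness hypothesis (and the stray $\mathbb{Z}$ in the paper's statement, evidently a typo for $\mathbb{X}$) plays no role in the pointwise equivalence itself.
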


\begin{lem}\label{lem2}
Assume that  \( \bm{H} \in \mathbb{R}^{n \times n} \) is a positive definite matrix, and \( \bm{a}, \bm{b} \) are two arbitrary \( n \)-dimensional vectors. If $\bm b^\top \bm H (\bm a - \bm b )\ge 0$, then we can have
\begin{align*}
    \| \bm b \|_{\bm H}^2 \le   \| \bm a \|_{\bm H}^2  -  \| \bm a- \bm b \|_{\bm H}^2. 
\end{align*}
   \end{lem}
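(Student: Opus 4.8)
The plan is to reduce everything to the bilinear expansion of the $\bm H$-norm and then read the conclusion off directly from the hypothesis. Since $\bm H$ is positive definite (hence symmetric in this paper's convention, cf. the notation $\bm H \succ \bm 0$), the map $(\bm u, \bm v) \mapsto \bm u^\top \bm H \bm v$ is a symmetric bilinear form and $\|\bm u\|_{\bm H}^2 = \bm u^\top \bm H \bm u$ is its associated quadratic form. The entire argument is a single algebraic identity together with one sign condition, so I expect no genuine obstacle; the only point worth flagging is the implicit use of the symmetry of $\bm H$.

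First I would expand the cross term $\|\bm a - \bm b\|_{\bm H}^2$. Using symmetry of $\bm H$,
\[
\|\bm a - \bm b\|_{\bm H}^2 = (\bm a - \bm b)^\top \bm H (\bm a - \bm b) = \|\bm a\|_{\bm H}^2 - 2\, \bm a^\top \bm H \bm b + \|\bm b\|_{\bm H}^2.
\]
Substituting this into the right-hand side of the claimed inequality gives
\[
\|\bm a\|_{\bm H}^2 - \|\bm a - \bm b\|_{\bm H}^2 = 2\, \bm a^\top \bm H \bm b - \|\bm b\|_{\bm H}^2,
\]
so the assertion $\|\bm b\|_{\bm H}^2 \le \|\bm a\|_{\bm H}^2 - \|\bm a - \bm b\|_{\bm H}^2$ is equivalent, after collecting the $\|\bm b\|_{\bm H}^2$ terms, to $2\|\bm b\|_{\bm H}^2 \le 2\, \bm a^\top \bm H \bm b$, that is, to $\bm b^\top \bm H \bm b \le \bm a^\top \bm H \bm b$.

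The final step is to recognize this last inequality as exactly the stated hypothesis. Indeed, again by symmetry $\bm a^\top \bm H \bm b = \bm b^\top \bm H \bm a$, so $\bm a^\top \bm H \bm b - \bm b^\top \bm H \bm b = \bm b^\top \bm H (\bm a - \bm b) \ge 0$ by assumption. Equivalently, I would package the whole computation as the single three-point identity
\[
\|\bm a\|_{\bm H}^2 - \|\bm b\|_{\bm H}^2 - \|\bm a - \bm b\|_{\bm H}^2 = 2\, \bm b^\top \bm H (\bm a - \bm b),
\]
whose right-hand side is nonnegative precisely because $\bm b^\top \bm H(\bm a - \bm b) \ge 0$; rearranging this identity yields the lemma at once. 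Note that positive definiteness itself is not needed beyond ensuring that $\|\cdot\|_{\bm H}$ is a genuine norm, while symmetry is the only structural property actually invoked.
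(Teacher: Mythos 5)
Your proof is correct, and it supplies exactly the standard verification that the paper omits: the paper merely asserts that the lemma "is very easy to verify" and cites prior ADMM/PPA literature, so the intended argument is precisely your three-point identity $\|\bm a\|_{\bm H}^2 - \|\bm b\|_{\bm H}^2 - \|\bm a- \bm b\|_{\bm H}^2 = 2\,\bm b^\top \bm H(\bm a - \bm b)$ followed by the sign hypothesis. Your remark that only the symmetry of $\bm H$ is actually used (and that positive definiteness merely makes $\|\cdot\|_{\bm H}$ a norm) is accurate and consistent with the paper's convention, where the matrices $\bm H$ and $\bm H_K$ in (\ref{m1})--(\ref{m2}) are symmetric.
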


The conclusion of Lemma \ref{lem2} is very easy to verify. Though simple, this conclusion is widely used to prove PPA and ADMM convergence, as seen in \cite{Cai2013A}, \cite{Gu2014Customized}, \cite{He2018A}, and \cite{He2022A}.
\subsubsection{Four matrices}
To simplify the presentation of analysis, we  define the following four matrices in (\ref{m1}) and (\ref{m2}).
\begin{align}\label{m1}
\bm M = \begin{pmatrix}
\mu \bm A^\top \bm A  & \bm 0  & \bm A^\top \\
\bm 0 & \mu \bm I_p   &  -\bm I_p \\
\bm A &  -\bm I_p & \frac{2}{\mu} \bm I_p
\end{pmatrix},
\
\bm H = \begin{pmatrix}
\mu \bm A^\top \bm A + \bm S  & \bm 0  & \bm A^\top \\
\bm 0 & \mu \bm I_p   &  -\bm I_p \\
\bm A &  -\bm I_p & \frac{2}{\mu} \bm I_p
\end{pmatrix},
\end{align}
where $\bm S= \eta \bm I_p - \mu \bm A^\top \bm A$, and  $\eta >\text{eigen}(\mu  \bm A ^\top  \bm A)$.
\begin{align}\label{m2}
\bm M_K = \begin{pmatrix}
\bm \Lambda_K & \bm G^\top \\
\bm G &  \frac{K+1}{\mu} \bm I_p
\end{pmatrix},
\
\bm H_K = \begin{pmatrix}
\tilde{\bm \Lambda}_K & \bm G^\top \\
\bm G &  \frac{K+1}{\mu} \bm I_p
\end{pmatrix},
\end{align}
 where $\bm S_i^{'} = \eta_i \bm I_p - \mu \bm A_i^\top \bm A_i$,   $\eta_i > \text{eigen}(\mu  \bm A_i ^\top  \bm A_i)$ and $i \in \{1,\dots,K \}$.  Here, $$\bm G = ( \bm A_1,\dots, \bm A_K,-\bm I_p),$$
\[
\bm \Lambda_K = \text{diag}(\mu \bm A_1^\top \bm A_1, \dots,\mu \bm A_K^\top \bm A_K, \mu \bm I_p),
\] and \[
\tilde{\bm \Lambda}_K  = \text{diag}(\mu \bm A_1^\top \bm A_1 +  \bm S_1^{'} , \dots,\mu \bm A_K^\top \bm A_K +  \bm S_K^{'}, \mu \bm I_p) = \text{diag}( \eta_1 \bm I_p, \dots,\eta_K \bm I_p, \mu \bm I_p). 
\]

The two matrices in (\ref{m1}) are utilized for the proof of Theorem \ref{TH2}, while the two matrices in (\ref{m2}) are employed for the proof of Theorem \ref{TH3}. Among them, \(\bm M\) and \(\bm M_K\) are positive semidefinite matrices, and \(\bm H\) and \(\bm H_K\) are positive definite matrices. We employ the following proposition to elucidate this concept.
\begin{prop}
$\bm M \succeq  \bm 0$,  $\bm M_K \succeq  \bm 0$, $\bm H \succ  \bm 0$ \text{and} $\bm H_K \succ  \bm 0$.
\end{prop}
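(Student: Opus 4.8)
The plan is to verify each of the four relations by evaluating the associated quadratic form on a generic block vector and rewriting it as a sum of (weighted) squared $\ell_2$-norms. The natural organization is to treat the pairs $(\bm M, \bm H)$ and $(\bm M_K, \bm H_K)$ together, since in each pair the second matrix differs from the first only by a block-diagonal positive semidefinite perturbation: one checks $\bm M \succeq \bm 0$ (resp.\ $\bm M_K \succeq \bm 0$) by completing squares, and then upgrades to $\bm H \succ \bm 0$ (resp.\ $\bm H_K \succ \bm 0$) by a short forcing argument that uses the strict positivity of $\bm S$ (resp.\ of the $\bm S_i^{'}$).

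First I would treat $\bm M$. Writing a generic vector as $\bm v = (\bm x^\top, \bm y^\top, \bm w^\top)^\top$ with $\bm x, \bm y, \bm w \in \mathbb{R}^p$, a direct expansion gives
\begin{align*}
\bm v^\top \bm M \bm v = \mu \|\bm A \bm x\|_2^2 + \mu \|\bm y\|_2^2 + \tfrac{2}{\mu}\|\bm w\|_2^2 + 2\bm w^\top \bm A \bm x - 2 \bm w^\top \bm y.
\end{align*}
The decisive step is to split the coefficient $\tfrac{2}{\mu}$ of $\|\bm w\|_2^2$ into two equal halves and complete the square twice, which would yield
\begin{align*}
\bm v^\top \bm M \bm v = \tfrac{1}{\mu}\|\mu \bm A \bm x + \bm w\|_2^2 + \tfrac{1}{\mu}\|\mu \bm y - \bm w\|_2^2 \geq 0,
\end{align*}
establishing $\bm M \succeq \bm 0$. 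For $\bm H$ I would observe that $\mu \bm A^\top \bm A + \bm S = \eta \bm I_p$, so $\bm H = \bm M + \text{diag}(\bm S, \bm 0, \bm 0)$ and hence $\bm v^\top \bm H \bm v$ equals the expression above plus $\bm x^\top \bm S \bm x$. Since $\eta > \text{eigen}(\mu \bm A^\top \bm A)$ forces $\bm S \succ \bm 0$, a vanishing quadratic form would force $\bm x = \bm 0$ (from $\bm x^\top \bm S \bm x = 0$), then $\bm w = \bm 0$ (from $\mu \bm A \bm x + \bm w = \bm 0$), then $\bm y = \bm 0$ (from $\mu \bm y - \bm w = \bm 0$); this chain gives $\bm H \succ \bm 0$.

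The matrices $\bm M_K$ and $\bm H_K$ are handled by exactly the same mechanism, only with more cross terms. Writing $\bm v = (\bm x_1^\top, \dots, \bm x_K^\top, \bm y^\top, \bm w^\top)^\top$ and using $\bm G \bm v = \sum_{i=1}^K \bm A_i \bm x_i - \bm y$, the form $\bm v^\top \bm M_K \bm v$ expands to $\sum_{i=1}^K \mu \|\bm A_i \bm x_i\|_2^2 + \mu \|\bm y\|_2^2 + \tfrac{K+1}{\mu}\|\bm w\|_2^2 + 2\bm w^\top\big(\sum_{i=1}^K \bm A_i \bm x_i - \bm y\big)$. Now the split must be into $K+1$ equal pieces $\tfrac{1}{\mu}\|\bm w\|_2^2$, pairing one with each of the $K$ terms $\mu\|\bm A_i \bm x_i\|_2^2$ and the last with $\mu\|\bm y\|_2^2$, which gives
\begin{align*}
\bm v^\top \bm M_K \bm v = \sum_{i=1}^K \tfrac{1}{\mu}\|\mu \bm A_i \bm x_i + \bm w\|_2^2 + \tfrac{1}{\mu}\|\mu \bm y - \bm w\|_2^2 \geq 0.
\end{align*}
Then $\bm H_K = \bm M_K + \text{diag}(\bm S_1^{'}, \dots, \bm S_K^{'}, \bm 0)$ adds $\sum_{i=1}^K \bm x_i^\top \bm S_i^{'} \bm x_i$; using $\bm S_i^{'} \succ \bm 0$ (since $\eta_i > \text{eigen}(\mu \bm A_i^\top \bm A_i)$), the same forcing argument makes every $\bm x_i = \bm 0$, then $\bm w = \bm 0$, then $\bm y = \bm 0$, so $\bm H_K \succ \bm 0$.

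The only genuinely delicate point, and the one I expect to be the main obstacle, is recognizing the correct partition of the bottom-right block ($\tfrac{2}{\mu}\bm I_p$ for $\bm M$, $\tfrac{K+1}{\mu}\bm I_p$ for $\bm M_K$) so that every cross term is absorbed into a perfect square; the appearance of the factor $K+1$ is precisely what is needed to match the number of completed squares, consistent with the modified $\tfrac{\mu}{K+1}$ update appearing in Algorithm \ref{alg3}. Once this split is chosen correctly, the remaining expansions are routine algebra, and the upgrade from semidefiniteness to definiteness is the short contradiction argument exploiting the strict positivity of $\bm S$ and the $\bm S_i^{'}$.
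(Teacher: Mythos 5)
Your proof is correct and follows essentially the same route as the paper: your completion-of-squares identities $\bm v^\top \bm M \bm v = \tfrac{1}{\mu}\|\mu\bm A\bm x+\bm w\|_2^2 + \tfrac{1}{\mu}\|\mu\bm y-\bm w\|_2^2$ and its $K$-block analogue are exactly the paper's factorization of $\bm M$ and $\bm M_K$ into sums of outer products, and the passage to $\bm H$, $\bm H_K$ via the added block-diagonal terms $\bm S$, $\bm S_i^{'}$ is identical. Your explicit forcing chain ($\bm x=\bm 0 \Rightarrow \bm w=\bm 0 \Rightarrow \bm y=\bm 0$) is in fact slightly more detailed than the paper's one-line assertion of strict positivity.
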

\begin{proof}
\( \bm M \) and \(\bm M_K \) can be represented as $$
  \begin{pmatrix}       
\sqrt{\mu} \bm A^\top \\
\bm 0 \\
\frac{1}{\sqrt \mu} \bm I_p
\end{pmatrix} \begin{pmatrix}       
\sqrt{\mu} \bm A &\bm 0 &\frac{1}{\sqrt \mu} \bm I_p
\end{pmatrix} +  \begin{pmatrix}     
\bm 0 \\  
-\sqrt{\mu} \bm I_p \\
\frac{1}{\sqrt \mu} \bm I_p
\end{pmatrix} \begin{pmatrix}       
\bm 0 &-\sqrt{\mu} \bm I_p &\frac{1}{\sqrt \mu} \bm I_p
\end{pmatrix}, $$ and $$  \sum_{i=1}^K \left[ \begin{pmatrix}       
\vdots\\
\sqrt{\mu} \bm A_i^\top \\
\vdots \\
\frac{1}{\sqrt{\mu}} \bm I_p
\end{pmatrix} \begin{pmatrix}       
\cdots & \sqrt{\mu} \bm A_i&\cdots&\frac{1}{\sqrt{\mu}} \bm I_p
\end{pmatrix} \right] +  \begin{pmatrix}     
\vdots \\  
-\sqrt{\mu} \bm I_p \\
\frac{1}{\sqrt \mu} \bm I_p
\end{pmatrix} \begin{pmatrix}       
\cdots &-\sqrt{\mu} \bm I_p &\frac{1}{\sqrt \mu} \bm I_p
\end{pmatrix}, $$ respectively.
 For any non-zero vector \( \bm{v} \), we have \( \bm{v}^T \bm M \bm{v} \ge 0 \)  and \( \bm{v}^T \bm M_K \bm{v} \ge 0 \), thus $\bm M \succeq  \bm 0$ and  $\bm M_K \succeq  \bm 0$.
It is evident that \( \bm M \) and $\bm M_K$ are at least positive semidefinite matrices; if \( \bm A \) and \( \bm A_i \) have full column rank, then \( \bm M \) and $\bm M_K$ are positive definite matrices.

Similarly, \( \bm H \) and \(\bm H_K \) can be represented as $$  \bm H =  \bm M +   \text{diag}(\bm S, \bm 0, \bm 0), $$ and
$$  \bm H_K =  \bm M_K +   \text{diag}(\bm S_1^{'},  \cdots, \bm S_K^{'}, \bm 0, \bm 0),$$
respectively.
 For any non-zero vector \( \bm{v} \), we have \( \bm{v}^T \bm H \bm{v} > 0 \)  and \( \bm{v}^T \bm H_K \bm{v} > 0 \), thus $\bm H \succ  \bm 0$ and  $\bm H_K \succ  \bm 0$.
\end{proof}

\subsubsection{Variational inequality characterization}
The constrained optimization form of DS  is defined as,
\begin{align}\label{dsp}
\begin{aligned}
& \min_{\bm \beta, \bm z} \quad   \sum_{i=1}^{K}\|\bm \beta_{i\cdot}\|_{1} +  \delta_{\mathcal{Z}_0(\bm z)}  \\
& \text{s.t.} \quad  \sum_{i=1}^{K} \bm A_i \bm \beta_{i\cdot} -  \bm z = \bm X^\top \bm y,
\end{aligned}
\end{align}
And the Lagrange multiplier form of (\ref{dsp}) is   
\begin{align}\label{dsp2}
L(\bm \beta, \bm z;  \bm u) = 
 \sum_{i=1}^{K}\|\bm \beta_{i\cdot}\|_{1}  + \delta_{\mathcal{Z}_0(\bm z)}  - \bm u^\top (  \sum_{i=1}^{K} \bm A_i \bm \beta_{i\cdot}  -  \bm z   -  \bm X^\top \bm y ).
  \end{align}
Since \(  \sum_{i=1}^{K}\|\bm \beta_{i\cdot}\|_{1}  =  \|\bm \beta\|_1\) and  $\sum_{i=1}^{K} \bm A_i \bm \beta_{i\cdot}  = \bm A \bm \beta $, the parallel and nonparallel constraint optimization forms of DS are the same.

According to the variational inequality characterization in section 2 of \cite{He2012On}, we know that the solution of the constrained optimization function above is the saddle point of the  following Lagrangian function in (\ref{dsp2}). 
However, $\sum_{i=1}^{K}\|\bm \beta_{i\cdot}\|_{1}$  and  $\delta_{\mathcal{Z}_0(\bm z)}$ are non differentiable, so the variational inequalities they mentioned cannot be directly applied to the scenario in this paper. A recent research work, \cite{He2022A}, can include the non differentiable scenario designed in our work. As described in their section 2.2, finding a saddle point of \( L(\bm \beta, \bm z;  \bm u)  \) is equivalent to finding \( \bm \beta^*, \bm z^*,  \bm u^* \) such that the following inequalities are satisfied:
\begin{equation}\label{vi}
\bm h^* \in  \Omega, \ \theta(\bm f) - \theta(\bm f^*) + (\bm g -  \bm g^*)^\top F (\bm g^*) \geq 0,  \ \forall \bm g \in \Omega,  
\end{equation}
where  $\bm f = (\bm \beta^\top, \bm z^\top)^\top ,  \  \bm g =  (\bm \beta^\top, \bm z^\top, \bm u^\top)^\top$,    
\begin{equation}
\theta(\bm f) =  \sum_{i=1}^{K}\|\bm \beta_{i\cdot}\|_{1}  + \delta_{\mathcal{Z}_0(\bm z)}, \quad \Omega =  \mathbb{R}^{p} \times \mathbb{R}^{p} \times \mathbb{R}^{p},
\end{equation}
and
\begin{equation}\label{F}
F(\bm g) = \begin{pmatrix}
-\bm A_1^\top \bm u \\
\vdots \\
-\bm A_K^\top \bm u \\
 \sum_{i=1}^{K} \bm A_i \bm \beta_{i\cdot}  -  \bm z   -  \bm X^\top \bm y 
\end{pmatrix}.
\end{equation}

Note that the operator \( F \) defined in (\ref{F}) is monotone, because
\begin{equation}\label{f}
(\bm g_1 - \bm g_2)^\top \left[F(\bm g_1) - F(\bm{g}_2)\right] \equiv 0, \quad \forall \ \bm g_1, \bm{g}_2 \in \Omega. 
\end{equation}
Throughout, we denote by \( \Omega^* \) the solution set of (\ref{vi}), which is also the set of saddle points of the Lagrangian function (\ref{dsp2}) of the model (\ref{dsp}).

\subsection{Proof}\label{A2}
From the iterative steps of the parallel PPA  (with linearization) in (\ref{nppa2}) and Lemma \ref{lem1}, we have
\begin{equation}\label{vi1}
\begin{aligned}
&\bm \beta_{i\cdot}\ \textbf{step:}\quad \|\bm \beta_{i\cdot}\|_{1}  - \|\bm \beta_{i\cdot}^{t+1}\|_{1} + (\bm \beta_{i\cdot} -\bm \beta_{i\cdot}^{t+1} )^\top \left[-\bm A_i^\top \bm u^t + \mu \bm A_i^\top \bm A_i (\bm \beta_{i\cdot}^{t+1} -\bm \beta_{i\cdot}^t ) \right. \\ 
&\qquad \qquad \ \left.  + \bm S_i   (\bm \beta_{i\cdot}^{t+1} -\bm \beta_{i\cdot}^t  ) \right] \ge 0,i \in \{1, \cdots, K\} ; \\ \vspace{1em}
&\ \bm z\ \ \textbf{step:}\quad \delta_{\mathcal{Z}_0(\bm z)}  -  \delta_{\mathcal{Z}_0(\bm z^{t+1})}  + (\bm z - {\bm z}^{t+1} )^\top \left[  \bm u^t + \mu (\bm z^{t+1} - {\bm z}^{t})  \right]  \ge 0; \\\vspace{1em}
&\ \bm u\ \ \textbf{step:}\quad (\bm u - {\bm u}^{t+1})^\top \left[ (\sum_{i=1}^{K} \bm A_i \bm \beta_{i\cdot}^{t+1} - \bm z^{t+1} - \bm X^\top \bm y) + \sum_{i=1}^{K} \bm A_i (\bm \beta_{i\cdot}^{t+1} -  \bm \beta_{i\cdot}^t) \right.\\
&\qquad \qquad \ \left. - ( \bm z^{t+1} - \bm z^{t}) + \frac{2}{\mu} (\bm u^{t+1} - \bm u^t)  \right] = 0;
\end{aligned}
\end{equation}
where $\bm S_i = \eta \bm I_{p_i} - \mu \bm A_i^\top \bm A_i$,   $\eta > \text{eigen}(\mu \bm A^\top \bm A)$, and $i \in \{1,\dots,K \}$.
The last equation is derived from the last equation of (\ref{nppa2}).   By integrating the three equations in (\ref{vi1}) together, we can obtain
\begin{small}
    \begin{align}\label{vi2} 
\theta(\bm f) - \theta( { \bm f}^{t+1}) + (\bm g - { \bm g}^{t+1})^\top F ({ \bm g}^{t+1}) \geq  (\bm g - { \bm g}^{t+1})^T \bm H  (\bm g^t - { \bm g}^{t+1}),  \ \forall \bm g \in \Omega,
\end{align}
\end{small}
where \( \bm g = (\bm \beta^\top, \bm z^\top, \bm u^\top)^\top \), \( \bm{f} \) is a component of \( \bm{g} \) by definition, and \( \bm H \) can be found in (\ref{m1}). It is worth noting that if the \(\bm \beta\) subproblem is not linearized, then the \( \bm H \) matrix on the right side of the inequality above needs to be replaced with the \( \bm M \) matrix  in (\ref{m1}).

Similarly,  from the iterative steps of the improved parallel PPA  (with linearization) in (\ref{nppa3}) and Lemma \ref{lem1}, it follows that 

\begin{align} \notag
\label{vi3}
&\bm \beta_{i\cdot}\ \textbf{step:}\quad \|\bm \beta_{i\cdot}\|_{1}  - \|\bm \beta_{i\cdot}^{t+1}\|_{1}  + (\bm \beta_{i\cdot} -\bm \beta_{i\cdot}^{t+1} )^\top \left[-\bm A_i^\top \bm u^t + \mu \bm A_i^\top \bm A_i (\bm \beta_{i\cdot}^{t+1} -\bm \beta_{i\cdot}^t) \right. \\ \notag
&\qquad \qquad \ \left. + \bm S_i^{'}   (\bm \beta_{i\cdot}^{t+1} -\bm \beta_{i\cdot}^t  ) \right] \ge 0, i \in \{1, \cdots, K\}; \\ 
&\ \bm z\ \ \textbf{step:}\quad \delta_{\mathcal{Z}_0(\bm z)}  -  \delta_{\mathcal{Z}_0(\bm z^{t+1})}  + (\bm z - {\bm z}^{t+1} )^\top \left[  \bm u^t + \mu (\bm z^{t+1} - {\bm z}^{t})  \right]  \ge 0; 
\end{align}

\begin{equation}
\begin{aligned}\notag
&\ \bm u\ \ \textbf{step:}\quad (\bm u - {\bm u}^{t+1})^\top \left[ (\sum_{i=1}^{K} \bm A_i \bm \beta_{i\cdot}^{t+1} - \bm z^{t+1} - \bm X^\top \bm y) +  \sum_{i=1}^{K} \bm A_i (\bm \beta_{i\cdot}^{t+1} -  \bm \beta_{i\cdot}^t) \right.\\
&\qquad \qquad \ \left. - ( \bm z^{t+1} - \bm z^{t})  + \frac{K}{\mu} (\bm u^{t+1} - \bm u^t)  \right] = 0;
\end{aligned}
\end{equation}
where $\bm S_i^{'} = \eta_i \bm I_{p_i} - \mu \bm A_i^\top \bm A_i$,   $\eta_i > \text{eigen}(\mu  \bm A_i ^\top  \bm A_i)$ and $i \in \{1,\dots,K \}$. The last equation is derived from the last equation of (\ref{nppa3}).   By integrating the three equations in (\ref{vi3}) together, we can obtain
\begin{small}
\begin{align}\label{vi4} 
\theta(\bm f) - \theta( { \bm f}^{t+1}) + (\bm g - { \bm g}^{t+1})^\top F ({ \bm g}^{t+1}) \geq  (\bm g - { \bm g}^{t+1})^\top \bm H_K  (\bm g^t - { \bm g}^{t+1}),  \ \forall \bm g \in \Omega,
\end{align}
\end{small}
where \( \bm H_K \) can be found in (\ref{m2}).  If the \(\bm \beta\) subproblem is not linearized, then the \( \bm H_K \) matrix on the right side of the inequality above needs to be replaced with the \( \bm M_K \) matrix  in (\ref{m2}).    

  \textcolor{red}{Since the weighted form does not alter the convexity of $\|\bm \beta_{i\cdot}\|_{1}$ and $\delta_{\mathcal{Z}_0(\bm z)}$, the variational inequalities (\ref{vi2}) and (\ref{vi4}) also hold for the solution of the weighted $\ell_1$-DS in step 2.2 of Algorithm \ref{alg4}. This is also the reason why Corollary \ref{corollar1} holds true. }  

\subsubsection{Global convergence}
In this subsection, we prove the global convergence of the algorithms for Theorem \ref{TH2} and Theorem \ref{TH3}.
Next,  based on the iterative  steps of  (\ref{nppa2}) and (\ref{nppa3}), we will deduce that the sequence $\{\bm g^k\}$ has the contraction property. Reviewing (\ref{vi2}) and (\ref{vi4}), these two inequalities serve as the foundation for the subsequent analysis. We can consolidate these two equations into a single form, namely
\begin{small}
\begin{align}\label{vi5} 
\theta(\bm f) - \theta( { \bm f}^{t+1}) + (\bm g - { \bm g}^{t+1})^\top F ({ \bm g}^{t+1}) \geq  (\bm g - { \bm g}^{t+1})^\top \bm H_{*}  (\bm g^t - { \bm g}^{t+1}), \quad \forall \bm g \in \Omega,
\end{align}  
\end{small}
where $\bm g= (\bm f^\top, \bm u^\top)^\top$, and \( \bm H_{*} = \bm H \) for the parallel PPA in (\ref{nppa2}) and \( \bm H_{*} = \bm H_K \) for the improved parallel PPA  in (\ref{nppa3}).

\begin{prop}\label{prop2}
For the sequence \(\{ \bm{g}^t \}\) generated by  the parallel PPA and the improved parallel PPA, we have
\begin{align}
\| \bm{g}^{t+1} - \bm{g}^* \|_{\bm {H}_*}^2  \leq \| \bm{g}^t - \bm{g}^* \|_{\bm {H}_*}^2  & -  \| \bm{g}^{t+1} - {\bm{g}}^t \|_{\bm {H}_*}^2, \quad \forall \bm{g}^* \in \Omega^*. \label{res1}
\end{align}
\end{prop}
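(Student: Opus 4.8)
The plan is to derive the contraction inequality (\ref{res1}) directly from the unified variational inequality (\ref{vi5}) by a single well-chosen substitution together with Lemma \ref{lem2}. Specifically, I would set the free variable $\bm g$ equal to a saddle point $\bm g^* \in \Omega^*$ in (\ref{vi5}), which yields
\begin{align*}
\theta(\bm f^*) - \theta(\bm f^{t+1}) + (\bm g^* - \bm g^{t+1})^\top F(\bm g^{t+1}) \geq (\bm g^* - \bm g^{t+1})^\top \bm H_* (\bm g^t - \bm g^{t+1}).
\end{align*}
The strategy then splits into two independent tasks: first, show that the left-hand side of this inequality is nonpositive; second, convert the resulting quadratic inequality into (\ref{res1}) via Lemma \ref{lem2}.

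For the first task I would exploit the variational characterization (\ref{vi}) of $\bm g^* \in \Omega^*$. Taking $\bm g = \bm g^{t+1}$ in (\ref{vi}) gives $\theta(\bm f^{t+1}) - \theta(\bm f^*) + (\bm g^{t+1} - \bm g^*)^\top F(\bm g^*) \geq 0$, which I would rearrange to bound $\theta(\bm f^*) - \theta(\bm f^{t+1})$ from above by $-(\bm g^* - \bm g^{t+1})^\top F(\bm g^*)$. Substituting this bound into the left-hand side collapses it to $(\bm g^* - \bm g^{t+1})^\top [F(\bm g^{t+1}) - F(\bm g^*)]$. The key point is that the operator $F$ defined in (\ref{F}) is affine, so by the identity (\ref{f}) this inner product vanishes; hence the left-hand side is $\leq 0$. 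Combining this with the displayed inequality gives the pivotal relation $(\bm g^{t+1} - \bm g^*)^\top \bm H_* (\bm g^t - \bm g^{t+1}) \geq 0$.

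Finally I would invoke Lemma \ref{lem2} with $\bm H = \bm H_*$, $\bm a = \bm g^t - \bm g^*$, and $\bm b = \bm g^{t+1} - \bm g^*$, so that $\bm a - \bm b = \bm g^t - \bm g^{t+1}$ and the hypothesis $\bm b^\top \bm H_* (\bm a - \bm b) \geq 0$ is exactly the pivotal relation just established (note $\bm H_* \succ \bm 0$, so the norm is well-defined). The conclusion $\|\bm b\|_{\bm H_*}^2 \leq \|\bm a\|_{\bm H_*}^2 - \|\bm a - \bm b\|_{\bm H_*}^2$ is precisely (\ref{res1}), using $\|\bm g^t - \bm g^{t+1}\|_{\bm H_*}^2 = \|\bm g^{t+1} - \bm g^t\|_{\bm H_*}^2$. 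The main obstacle is the first task, namely correctly pairing the optimality condition (\ref{vi}) for $\bm g^*$ with the affine identity (\ref{f}) to cancel the non-quadratic $\theta$ and $F$ terms; once the sign of the left-hand side is settled, the remainder is a mechanical application of Lemma \ref{lem2}. The same argument applies verbatim to both algorithms, since only the matrix $\bm H_*$ (equal to $\bm H$ for the parallel PPA and to $\bm H_K$ for the improved parallel PPA) changes.
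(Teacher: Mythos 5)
Your proposal is correct and follows essentially the same route as the paper's own proof: substitute $\bm g = \bm g^*$ into (\ref{vi5}), use the saddle-point inequality (\ref{vi}) at $\bm g = \bm g^{t+1}$ together with the identity (\ref{f}) to conclude that $(\bm g^{t+1} - \bm g^*)^\top \bm H_* (\bm g^t - \bm g^{t+1}) \ge 0$, and then apply Lemma \ref{lem2} with $\bm a = \bm g^t - \bm g^*$ and $\bm b = \bm g^{t+1} - \bm g^*$. The only difference is that you spell out the sign argument for the $\theta$ and $F$ terms more explicitly than the paper does.
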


\begin{proof}
Let \( \bm g \) in (\ref{vi5} ) be \( \bm g^* \) and with \(\bm  f^* \) is a component of \( \bm g^* \),   then we have,
$$
(  { \bm g}^{t+1} - \bm g^*)^\top \bm H_{*}  (\bm g^t - { \bm g}^{t+1}) \ge  \theta(\bm f^{t+1}) - \theta( { \bm f^*}) + (\bm g^{t+1} - { \bm g^*})^\top F ({ \bm g}^{t+1}) .
$$
Since \( \bm g^* \) represents the optimal point (saddle point) in (\ref{vi}) and \( \bm g^{k+1} \in \Omega \), in accordance with (\ref{f}), it becomes evident that the right-hand side of the aforementioned equation is non-negative. Consequently, we derive the following inequality:
\begin{align}\label{pr1}
 { (\bm g}^{t+1} - \bm g^*)^\top \bm H_{*}  (\bm g^t - { \bm g}^{t+1} )\ge 0.
\end{align}
Let  $\bm a = ({\bm g}^{t} - \bm g^*) $ and $\bm b = ({\bm g}^{t+1} - \bm g^*) $, according to Lemma \ref{lem2} and (\ref{pr1}), we can get  (\ref{res1}).
\end{proof}

The aforementioned contraction property plays a pivotal role in ensuring the convergence of sequences. The assertions summarized in the following corollary are straightforward, grounded in the fact (\ref{res1}).
\begin{cor}\label{cor1}
Let ${\bm g^t}$ be the sequence generated  by  the parallel PPA and the improved parallel PPA. Then, we have
\begin{enumerate}
\item $\mathop {\lim }\limits_{t \to \infty}\|\bm g^t-\bm g^{t+1}\|_{\bm H_*}=0$;
\item the sequence  $\{\bm g^t\}$ is bounded;
\item for any optimal solution $\bm g^*$, the sequence $\{\|\bm g^t-\bm g^* \|_{\bm H_*} \}$ is monotonically non-increasing.
\end{enumerate}
\end{cor}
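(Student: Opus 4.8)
The plan is to derive all three claims directly from the contraction inequality (\ref{res1}) of Proposition \ref{prop2}, invoking only the positive definiteness of $\bm H_*$ (where $\bm H_* = \bm H$ for the parallel PPA and $\bm H_* = \bm H_K$ for the improved parallel PPA, so a single argument settles both cases). I would prove the three items in the order (3), (2), (1), since the later part reuses the monotonicity established first.

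First I would dispatch item (3). Because $\bm H_* \succ \bm 0$, the subtracted term $\|\bm g^{t+1} - \bm g^t\|_{\bm H_*}^2$ on the right-hand side of (\ref{res1}) is nonnegative, so discarding it yields immediately
\begin{align*}
\|\bm g^{t+1} - \bm g^*\|_{\bm H_*}^2 \leq \|\bm g^t - \bm g^*\|_{\bm H_*}^2, \quad \forall t, \ \forall \bm g^* \in \Omega^*,
\end{align*}
which is exactly the monotone non-increasing property. Item (2) then follows at once: iterating the last inequality gives the uniform bound $\|\bm g^t - \bm g^*\|_{\bm H_*} \leq \|\bm g^0 - \bm g^*\|_{\bm H_*}$ for all $t$, and since $\bm H_*$ is positive definite the quantity $\|\cdot\|_{\bm H_*}$ is a genuine norm equivalent to the Euclidean norm, so this uniform bound confines $\{\bm g^t\}$ to a bounded set.

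The only item requiring a substantive argument is (1). Here I would sum (\ref{res1}) over $t = 0, 1, \dots, T$ and exploit that the right-hand side telescopes:
\begin{align*}
\sum_{t=0}^{T} \|\bm g^{t+1} - \bm g^t\|_{\bm H_*}^2 \leq \|\bm g^0 - \bm g^*\|_{\bm H_*}^2 - \|\bm g^{T+1} - \bm g^*\|_{\bm H_*}^2 \leq \|\bm g^0 - \bm g^*\|_{\bm H_*}^2.
\end{align*}
This bounds the partial sums uniformly in $T$, so letting $T \to \infty$ shows the series $\sum_{t=0}^{\infty} \|\bm g^{t+1} - \bm g^t\|_{\bm H_*}^2$ converges; convergence of a nonnegative series forces its general term to vanish, giving $\lim_{t\to\infty} \|\bm g^t - \bm g^{t+1}\|_{\bm H_*} = 0$.

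The hard part, to the extent there is one, is purely bookkeeping within item (1): confirming the telescoping cancellation is valid and that summability of $\|\bm g^{t+1} - \bm g^t\|_{\bm H_*}^2$ implies the term tends to zero. No deeper difficulty arises, because the entire corollary is a formal consequence of the contraction estimate (\ref{res1}) together with $\bm H_* \succ \bm 0$; in particular, one never needs to revisit the variational inequalities (\ref{vi2}) or (\ref{vi4}) beyond their already-established summary in Proposition \ref{prop2}.
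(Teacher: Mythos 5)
Your proposal is correct and follows essentially the same route as the paper: the paper declares the corollary a straightforward consequence of the contraction inequality (\ref{res1}), and the telescoping-sum argument you use for item (1) is exactly the computation the paper itself carries out later when establishing the $\mathcal{O}(1/T)$ rate. Nothing is missing; the positive definiteness of $\bm H_*$ is the only additional ingredient needed, and you invoke it correctly for items (1) and (2).
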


The proof of sequence convergence derived from (\ref{res1}) (or   Corollary \ref{cor1})  has been extensively documented in the literature, including Theorem 2 of \cite{He2018A} and Theorem 4.1 of \cite{He2022A}. For completeness, we will include the detailed proof here.

\begin{prop}\label{prop3}
The sequence \( \{ \bm{g}^t \} \) generated by the parallel PPA and the improved parallel PPA converges to a point \( \bm{g}^\infty \)  that belongs to \( \Omega^* \), which is the set of all solutions to the variational inequality given in (\ref{vi}). 
\end{prop}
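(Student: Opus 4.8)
The plan is to run the standard Fej\'er-monotonicity argument on the contraction inequality (\ref{res1}) together with the consolidated variational inequality (\ref{vi5}), exactly as in Theorem~2 of \cite{He2018A} and Theorem~4.1 of \cite{He2022A}. First I would invoke Corollary \ref{cor1}, part 2: the sequence $\{\bm g^t\}$ is bounded, so by the Bolzano--Weierstrass theorem it admits a convergent subsequence $\{\bm g^{t_j}\}$ with some limit $\bm g^\infty$. Because Corollary \ref{cor1}, part 1, gives $\|\bm g^{t_j}-\bm g^{t_j+1}\|_{\bm H_*}\to 0$ and $\bm H_*\succ\bm 0$, the shifted subsequence $\{\bm g^{t_j+1}\}$ converges to the same limit $\bm g^\infty$, so I may pass to the limit in either index.

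Next I would show $\bm g^\infty\in\Omega^*$ by letting $j\to\infty$ in (\ref{vi5}) evaluated at $t=t_j$. The key simplification is the identity (\ref{f}): since $(\bm g-\bm g^{t+1})^\top[F(\bm g)-F(\bm g^{t+1})]\equiv 0$, I may freeze the operator at the test point and replace $F(\bm g^{t+1})$ by $F(\bm g)$, so that only the term $\theta(\bm f^{t+1})$ still carries the iterate. In the resulting inequality the right-hand side $(\bm g-\bm g^{t_j+1})^\top\bm H_*(\bm g^{t_j}-\bm g^{t_j+1})$ vanishes, being the inner product of a bounded vector with a quantity tending to zero, while the linear term converges to $(\bm g-\bm g^\infty)^\top F(\bm g)$ by continuity of the affine map $F$. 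Collecting these limits I would obtain $\theta(\bm f)-\theta(\bm f^\infty)+(\bm g-\bm g^\infty)^\top F(\bm g)\ge 0$ for every $\bm g\in\Omega$, and then apply (\ref{f}) once more to restore $F(\bm g^\infty)$ in place of $F(\bm g)$, which is precisely the characterization (\ref{vi}). Hence $\bm g^\infty\in\Omega^*$.

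Finally I would upgrade subsequential convergence to convergence of the whole sequence. Since $\bm g^\infty\in\Omega^*$, Corollary \ref{cor1}, part 3, applies with $\bm g^*=\bm g^\infty$, so the scalar sequence $\{\|\bm g^t-\bm g^\infty\|_{\bm H_*}\}$ is monotonically non-increasing. It already possesses a subsequence tending to $0$ (namely $\|\bm g^{t_j}-\bm g^\infty\|_{\bm H_*}\to 0$), and a monotone non-increasing sequence with a null subsequence converges to $0$ in its entirety. Therefore $\|\bm g^t-\bm g^\infty\|_{\bm H_*}\to 0$, and positive definiteness of $\bm H_*$ forces $\bm g^t\to\bm g^\infty$, establishing the claim.

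I expect the main obstacle to be the limit passage in the nonsmooth term of (\ref{vi5}): because $\theta$ is merely convex rather than continuous, the argument cannot use naive continuity and must instead rely on the lower semicontinuity of $\theta$, giving $\liminf_j\theta(\bm f^{t_j+1})\ge\theta(\bm f^\infty)$, with careful attention to the direction of the inequality so that it is preserved in the limit. The device of using (\ref{f}) to freeze $F$ at the test point $\bm g$ is what isolates $\theta(\bm f^{t+1})$ as the only problematic term and makes this step clean.
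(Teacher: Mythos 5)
Your proposal is correct and follows essentially the same route as the paper's proof: Fej\'er monotonicity from (\ref{res1}) gives boundedness and a cluster point, passing to the limit in (\ref{vi5}) along the subsequence shows the cluster point solves (\ref{vi}), and monotone non-increase of $\|\bm g^t-\bm g^\infty\|_{\bm H_*}$ combined with the null subsequence upgrades this to convergence of the whole sequence. The one place you improve on the paper is the limit passage in the nonsmooth term: the paper simply invokes ``continuity of $\theta$'' even though $\theta$ contains the indicator $\delta_{\mathcal{Z}_0}$ and is only lower semicontinuous, whereas your device of using (\ref{f}) to freeze $F$ at the test point and then applying $\liminf_j\theta(\bm f^{t_j+1})\ge\theta(\bm f^\infty)$ makes that step rigorous.
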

\begin{proof}
According to  Corollary \ref{cor1}, the sequence \( \{ \bm{g}^t \} \) is bounded and
\begin{align} \label{c1}
\lim_{t \to \infty} { \| \bm{g}^t - {\bm{g}}^{t+1} \| }_{\bm H_*} = 0. 
\end{align}
Let \(\bm{g}^\infty \) be a cluster point of \( \{{\bm{g}}^{t} \} \),  and \( \{ {\bm{g}}^{t_j} \} \) be a subsequence that converges to  \(\bm{g}^\infty \).
It  follows from (\ref{vi5}) that
\begin{small}
\begin{equation}
{\bm g}^{t_j + 1} \in \Omega, \  \theta(\bm f) - \theta( { \bm f}^{t_j + 1}) +  (\bm g - { \bm g}^{t_j + 1})^\top F ( { \bm g}^{t_j + 1} ) \geq  \  (\bm g - { \bm g}^{t_j + 1} )^\top \bm H_*  (\bm g^{t_j} - { \bm g}^{t_j + 1}),  \ \forall \bm g \in \Omega.   
\nonumber
\end{equation}
\end{small}
Note that the matrix $\bm H_*$  is a positive definite matrix, it follows from the continuity of \( \theta(\bm f) \) and \( F(\bm g) \) that
\[
\bm{g}^\infty \in \Omega, \quad \theta(\bm f) - \theta(\bm f^\infty) + (\bm{g} - \bm{g}^\infty)^\top F(\bm{g}^\infty) \geq 0, \quad \forall \bm{h} \in \Omega.
\]
The above variational inequality indicates that $\bm{g}^\infty$ is a solution point of (\ref{vi}), that is $\bm{g}^\infty \in \Omega^*$.  Finally, due to (\ref{res1}), we have
\[
\| \bm g^{k+1} -\bm g^\infty \|_{\bm H_*} \leq \|\bm g^k -\bm g^\infty \|_{\bm H_*}
\]
which implies that the sequence \(\{\bm g^k\}\) converges to \( \bm g^\infty\). The proof is complete.
\end{proof}

The proposition above indicates that \( \bm w^\infty \in \Omega^* \), meaning that \( \bm w^\infty \) and \( \bm w^* \) are essentially the same, but simply represented differently. Proposition \ref{prop3} is derived from the contraction inequality (\ref{res1}), which directly leads to the global convergence of Theorem \ref{TH2} and Theorem \ref{TH3}.

\subsubsection{Linear convergence rate}
Here, we demonstrate that a worst-case convergence rate of \( \mathcal{O}\left({1}/{T}\right) \) in a non-ergodic sense of Algorithm \ref{alg2}  and Algorithm \ref{alg3},   which have been stated in Theorem \ref{TH2} and Theorem \ref{TH3}. To do this, we first need to prove the following proposition.
\begin{prop}\label{prop4}
For the sequence \(\{ \bm{g}^t \}\) generated by  the parallel PPA and the improved parallel PPA, we have
\begin{align}
\| \bm{g}^{t} - \bm{g}^{t+1} \|_{\bm {H_*}}^2   \leq  \| \bm g^{t-1} - { \bm g}^{t} \|_{\bm {H_*}}^2, \label{res2}
\end{align}
where   \( \bm H_{*} = \bm H \) for the parallel PPA in (\ref{nppa2}) and \( \bm H_{*} = \bm H_K \) for the improved parallel PPA  in (\ref{nppa3}).
\end{prop}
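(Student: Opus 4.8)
The plan is to exploit the unified variational inequality (\ref{vi5}), which holds at \emph{every} iteration of both the parallel PPA and the improved parallel PPA, and to play it off against itself at two consecutive steps. Writing (\ref{vi5}) once for the transition $\bm g^{t-1} \to \bm g^{t}$ and once for $\bm g^{t} \to \bm g^{t+1}$ gives two inequalities valid for all test points $\bm g \in \Omega$. Following the standard contraction argument in the framework of \cite{He2018A} and \cite{He2022A}, I would cross-substitute the iterate produced by one step as the free variable in the other.

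Concretely, in the inequality for step $t$ I would set $\bm g = \bm g^{t+1}$, obtaining
\begin{align*}
\theta(\bm f^{t+1}) - \theta(\bm f^{t}) + (\bm g^{t+1} - \bm g^{t})^\top F(\bm g^{t}) \geq (\bm g^{t+1} - \bm g^{t})^\top \bm H_* (\bm g^{t-1} - \bm g^{t}),
\end{align*}
and in the inequality for step $t+1$ I would set $\bm g = \bm g^{t}$, obtaining
\begin{align*}
\theta(\bm f^{t}) - \theta(\bm f^{t+1}) + (\bm g^{t} - \bm g^{t+1})^\top F(\bm g^{t+1}) \geq \|\bm g^{t} - \bm g^{t+1}\|_{\bm H_*}^2 .
\end{align*}
Adding these, the values $\theta(\bm f^{t})$ and $\theta(\bm f^{t+1})$ cancel, while the two $F$-terms combine into $(\bm g^{t+1} - \bm g^{t})^\top [F(\bm g^{t}) - F(\bm g^{t+1})]$, which vanishes identically by the monotonicity identity (\ref{f}). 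The left-hand side of the sum therefore collapses to zero.

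What survives is the purely quadratic relation $0 \geq (\bm g^{t+1} - \bm g^{t})^\top \bm H_* (\bm g^{t-1} - \bm g^{t}) + \|\bm g^{t} - \bm g^{t+1}\|_{\bm H_*}^2$. Writing $\bm a = \bm g^{t-1} - \bm g^{t}$ and $\bm b = \bm g^{t} - \bm g^{t+1}$ (so that $\bm g^{t+1} - \bm g^{t} = -\bm b$), this rearranges to $\|\bm b\|_{\bm H_*}^2 \leq \bm b^\top \bm H_* \bm a$, i.e. $\bm b^\top \bm H_*(\bm a - \bm b) \geq 0$. This is exactly the hypothesis of Lemma \ref{lem2} (the same tool already used to prove Proposition \ref{prop2}), so I would conclude $\|\bm b\|_{\bm H_*}^2 \leq \|\bm a\|_{\bm H_*}^2 - \|\bm a - \bm b\|_{\bm H_*}^2 \leq \|\bm a\|_{\bm H_*}^2$, which is precisely (\ref{res2}); the positive definiteness of $\bm H_*$ (established earlier for both $\bm H$ and $\bm H_K$) guarantees the norm is genuine.

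The only delicate point is the cancellation of the $F$-terms: it relies on $F$ being not merely monotone but \emph{affine with a skew-symmetric linear part}, so that (\ref{f}) holds with equality rather than inequality. This exact-zero property is what lets the summed left-hand side vanish completely instead of merely being bounded on one side; everything else is bookkeeping — matching test points across the two variational inequalities and a single appeal to Lemma \ref{lem2}. Note also that the argument is uniform in $\bm H_*$, so the same chain proves the claim simultaneously for $\bm H_* = \bm H$ (parallel PPA) and $\bm H_* = \bm H_K$ (improved parallel PPA).
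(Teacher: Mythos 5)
Your proposal is correct and follows essentially the same route as the paper's proof: you add the two variational inequalities (\ref{vi5}) at consecutive steps with cross-substituted test points $\bm g = \bm g^{t+1}$ and $\bm g = \bm g^{t}$, cancel the $F$-terms via the identity (\ref{f}), and apply Lemma \ref{lem2} with $\bm a = \bm g^{t-1} - \bm g^{t}$, $\bm b = \bm g^{t} - \bm g^{t+1}$, exactly as in the derivation of (\ref{cr3}). Your added remark that the cancellation needs (\ref{f}) to hold with equality (not mere monotonicity) is a correct and worthwhile clarification of what the paper uses implicitly.
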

\begin{proof}
First, by setting \( \bm g = {\bm g}^{t} \) in (\ref{vi5}), we obtain
\begin{align}\label{cr1}
\theta( { \bm f}^t)  -  \theta( {\bm f}^{t+1}) +  ({ \bm g}^t   - {\bm g}^{t+1})^\top F ( {\bm g}^{t+1} ) \geq & \  ({ \bm g}^t   - {\bm g}^{t+1}) ^\top \bm H_*  (\bm g^t -{ \bm g}^{t+1}). 
\end{align}
Note that (\ref{vi5}) is also true for $t := t + 1$. Thus, we also have
\begin{align}\label{crrrr}
\theta(\bm f) - \theta( { \bm f}^{t}) + (\bm g - { \bm g}^{t})^\top F ({ \bm g}^{t}) \geq  (\bm g - { \bm g}^{t})^\top \bm H_{*}  (\bm g^{t-1} - { \bm g}^{t}), \quad \forall \bm g \in \Omega,
\end{align}
Setting $\bm g = {\bm g}^{t+1}$ in the above inequality, we obtain
\begin{small}
\begin{align}\label{cr2}
\theta(\bm f^{t+1}) - \theta( { \bm f}^{t}) + (\bm g^{t+1} - { \bm g}^{t})^\top F ({ \bm g}^{t}) \geq  (\bm g^{t+1} - { \bm g}^{t})^\top \bm H_{*}  (\bm g^{t-1} - { \bm g}^{t}), \quad \forall \bm g \in \Omega. 
\end{align}    
\end{small}
By adding (\ref{cr1}) and (\ref{cr2}) , and utilizing the monotonicity of \( F \)  in (\ref{f}), we obtain
\begin{align}\label{cr3}
({ \bm g}^t   - {\bm g}^{t+1})^\top \bm H_* \left[  (\bm g^{t-1} - { \bm g}^{t}) -  (\bm g^t -{ \bm g}^{t+1})  \right] \geq 0.
\end{align}

Let  $\bm a =(\bm g^{t-1} - { \bm g}^{t})$ and $\bm b = ({ \bm g}^t   - {\bm g}^{t+1}) $, according to Lemma \ref{lem2} and (\ref{cr3}), we can get  
$$\| \bm{g}^{t} - \bm{g}^{t+1} \|_{\bm {H_*}}^2   \leq  \| \bm g^{t-1} - { \bm g}^{t} \|_{\bm {H_*}}^2 -   \| (\bm g^{t-1} - { \bm g}^{t}) -  (\bm g^t -{ \bm g}^{t+1})   \|_{\bm {H_*}}^2$$
Since \( \bm H_* \) is a positive-definite matrix, the assertion in (\ref{res2}) follows immediately.
\end{proof}

Now, with (\ref{res1}) and (\ref{res2}), we can establish the worst-case \( \mathcal{O}\left({1}/{T}\right) \)  convergence rate  in a nonergodic sense for the parallel PPA and the improved parallel PPA (Linear convergence rate in  Theorem \ref{TH2} and Theorem \ref{TH3}).

\begin{prop}
Let \(\{\bm g^t\}\)  be the sequence  generated by  the parallel PPA and the improved parallel PPA. For any integer \(T > 0\), we have
\begin{align}\label{p25}
\|\bm g^T - \bm g^{T+1} \|_{\bm{H}}^2 \leq \frac{1}{ \left(T+1\right)}  \|\bm g^0 - \bm g^*\|_{\bm{H}}^2.
\end{align}
\end{prop}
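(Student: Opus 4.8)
The plan is to combine the two contraction-type estimates already in hand, namely the Fej\'er-type inequality (\ref{res1}) from Proposition \ref{prop2} and the monotonicity of successive differences (\ref{res2}) from Proposition \ref{prop4}. The strategy is a standard summation-plus-monotonicity argument: summability of the squared successive differences comes from telescoping (\ref{res1}), while the fact that the \emph{last} difference is the smallest comes from (\ref{res2}). Throughout I work with the norm $\|\cdot\|_{\bm H_*}$, where $\bm H_* = \bm H$ for the parallel PPA and $\bm H_* = \bm H_K$ for the improved parallel PPA, so that a single argument establishes the linear convergence rate in both Theorem \ref{TH2} and Theorem \ref{TH3}.

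First I would sum the inequality (\ref{res1}) over the index $t = 0, 1, \ldots, T$. The right-hand side is telescoping, so
\begin{align}\label{plan-telescope}
\sum_{t=0}^{T} \| \bm g^{t+1} - \bm g^{t} \|_{\bm H_*}^2 \leq \sum_{t=0}^{T} \left( \| \bm g^t - \bm g^* \|_{\bm H_*}^2 - \| \bm g^{t+1} - \bm g^* \|_{\bm H_*}^2 \right) = \| \bm g^0 - \bm g^* \|_{\bm H_*}^2 - \| \bm g^{T+1} - \bm g^* \|_{\bm H_*}^2.
\end{align}
Since $\bm H_* \succ \bm 0$, the subtracted term is non-negative, so the displayed sum is bounded above by $\| \bm g^0 - \bm g^* \|_{\bm H_*}^2$.

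Next I would invoke the monotonicity (\ref{res2}), which states that $\| \bm g^{t} - \bm g^{t+1} \|_{\bm H_*}^2$ is non-increasing in $t$. Consequently each of the $T+1$ summands in (\ref{plan-telescope}) is at least as large as the final one, giving
\begin{align}\label{plan-mono}
(T+1) \, \| \bm g^{T} - \bm g^{T+1} \|_{\bm H_*}^2 \leq \sum_{t=0}^{T} \| \bm g^{t} - \bm g^{t+1} \|_{\bm H_*}^2 \leq \| \bm g^0 - \bm g^* \|_{\bm H_*}^2.
\end{align}
Dividing by $T+1$ yields the claimed bound $\| \bm g^{T} - \bm g^{T+1} \|_{\bm H_*}^2 \leq \frac{1}{T+1} \| \bm g^0 - \bm g^* \|_{\bm H_*}^2$, which is exactly (\ref{p25}).

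There is no real analytical obstacle here, since both required inequalities are already proved; the proof is essentially bookkeeping. The only point demanding a little care is ensuring the index ranges align when telescoping and when applying monotonicity, so that the coefficient comes out to be precisely $T+1$ rather than $T$, and that the discarded term $\| \bm g^{T+1} - \bm g^* \|_{\bm H_*}^2 \geq 0$ is legitimately dropped using positive definiteness of $\bm H_*$. I would close by noting that the bound holds verbatim with $\bm H_*$ replaced by $\bm H$ or $\bm H_K$, thereby simultaneously establishing the linear convergence claims of both Theorem \ref{TH2} and Theorem \ref{TH3}.
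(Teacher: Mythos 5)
Your proposal is correct and follows essentially the same route as the paper's own proof: summing the Fej\'er-type inequality (\ref{res1}) over $t=0,\dots,T$, telescoping, and then using the monotonicity (\ref{res2}) of the successive differences to bound the last term by the average. The paper's argument is identical in substance, merely written more tersely.
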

\begin{proof}
First, it follows from  (\ref{res1}) that
\begin{align}\label{p23}
 \| \bm{g}^{t+1} - {\bm{g}}^t \|_{\bm {H}_*}^2   \leq \| \bm{g}^t - \bm{g}^* \|_{\bm {H}_*}^2   -   \| \bm{g}^{t+1} - \bm{g}^* \|_{\bm {H}_*}^2, \quad \forall \bm{g}^* \in \Omega^*. 
\end{align}
Summing over \( t = 0, 1, \ldots, T \) yields the following
\begin{align}
  \sum_{t=0}^{T}\| \bm{g}^{t+1} - {\bm{g}}^t \|_{\bm {H}_*}^2 \le  \| \bm{g}^0 - \bm{g}^* \|_{\bm {H}_*}^2, \quad \forall \bm{g}^* \in \Omega^*.  
\end{align} 
Combining with the conclusion from (\ref{res2}), we obtain 
\begin{align}
    \| \bm{g}^{T+1} - {\bm{g}}^T \|_{\bm {H}_*}^2 \ \le \frac{1}{T+1}  \| \bm{g}^0 - \bm{g}^* \|_{\bm {H}_*}^2, \quad \forall \bm{g}^* \in \Omega^*.
\end{align}
\end{proof}

\section{Supplementary experiments}\label{B}
In this section, we will supplement the additional experiments mentioned in Sections \ref{sec41} and \ref{sec42} regarding SCAD-DS and MCP-DS. Since both this paper and \cite{Wen2024Nonconvex} handle nonconvex regularizers using the LLA (Local Linear Approximation, \cite{Zou2008One}) method, this implies that the iterative algorithm will require more iterations to solve the nonconvex DS model compared to the $\ell_1$-DS model. Consequently, in this section, the upper limit for the number of iterations is set to 1000, rather than 500 in Section \ref{sec4}.

\subsection{Supplementary experiments for Section  \ref{sec41}}\label{B1}
The generation of \(\bm X\), \(\bm y\), and \(\bm \beta^*\), as well as the selection of initial values follow the same procedure as outlined in Section \ref{sec41}. The sole distinction lies in the focus of this subsection, which is the solution of nonconvex DS models. The outcomes of the numerical experiments are summarized in Table \ref{tab7}, highlighting that both parallel PPA methods exhibit competitiveness when compared to TADMM in addressing nonconvex DS models.
\begin{table*}[!ht]\tiny
    \centering
    \caption{Comparison of various  algorithms for solving nonconvex DS models in nonparallel environments with data scale \((n, p) = (1000, 10,000)\).}
    \renewcommand\arraystretch{1.3}
    \resizebox{\linewidth}{!}{
    \begin{tabular}{llllllll}
    \hline
        Mehod                  & \multicolumn{7}{l}{SCAD-DS} \\ \cmidrule(lr){2-8}
                               & $\rho$ & $\ell_1$ error & $\ell_2$ error & Model & FP & NI & Time(s)  \\ \hline
        \multirow{3}{*}{TADMM} & 0.1 & 3.29(0.125) & 0.77(0.012) & 0.66(0.010) & 7.32(0.88) & 1000+(0.00) & 2000+(0.0)   \\ 
                               & 0.5 & 3.22(0.118) & 0.75(0.011) & 0.62(0.010) & 7.24(0.82) & 1000+(0.00) & 2000+(0.0)    \\ 
                               & 0.9 & 3.47(0.132) & 0.85(0.015) & 0.71(0.011) & 8.11(0.91) & 1000+(0.00) & 2000+(0.0)   \\ \hdashline[0.5pt/5pt]
        \multirow{3}{*}{PPA} & 0.1 & 1.09(0.009) & 0.22(0.004) & 0.15(0.002) & 5.32(0.42) & 479.6(39.7) & 691.3(40.3)   \\ 
                             & 0.5 & 1.15(0.012) & 0.25(0.004) & 0.11(0.002) & 5.66(0.57) & 483.7(44.3) & 714.2(49.2)   \\ 
                             & 0.9 & 1.35(0.014) & 0.28(0.005) & 0.19(0.003) & 6.16(0.61) & 506.2(49.1) & 732.1(51.2)  \\ \hline
        Method                 & \multicolumn{7}{l}{MCP-DS} \\ \cmidrule(lr){2-8}
                               & $\rho$ & $\ell_1$ error & $\ell_2$ error & Model & FP & NI & Time(s)  \\ \hline
        \multirow{3}{*}{TADMM} & 0.1 & 3.31(0.129) & 0.82(0.014) & 0.63(0.009) & 6.79(0.72) & 1000+(0.00) & 2000+(0.0) \\
                               & 0.5 & 3.19(0.116) & 0.79(0.012) & 0.59(0.008) & 7.15(0.75) & 1000+(0.00) & 2000+(0.0)  \\
                               & 0.9 & 3.60(0.141) & 0.84(0.014) & 0.68(0.010) & 7.44(0.81) & 1000+(0.00) & 2000+(0.0)  \\\hdashline[0.5pt/5pt]
        \multirow{3}{*}{PPA} & 0.1 & 1.11(0.010) & 0.27(0.006) & 0.17(0.003) & 4.82(0.38) & 483.5(39.9) & 702.4(43.2)  \\
                             & 0.5 & 1.18(0.012) & 0.30(0.007) & 0.15(0.002) & 4.96(0.49) & 488.2(42.2) & 709.3(46.7) \\
                             & 0.9 & 1.50(0.015) & 0.31(0.007) & 0.21(0.004) & 5.23(0.53) & 508.7(50.3) & 736.0(53.5)  \\ \hline
    \end{tabular}}
    \label{tab7}
\end{table*}
\subsection{Supplementary experiments for Section  \ref{sec42}}\label{B2}
The generation of \(\bm X\), \(\bm y\), and \(\bm \beta^*\), as well as the selection of initial values, follows the same procedure outlined in Section \ref{sec42}. The only distinction is that this subsection focuses on the solution of nonconvex DS models. The results of the numerical experiments are summarized in Table \ref{tab8}, demonstrating that both parallel PPA methods are competitive with TADMM in addressing nonconvex DS models. It is worth noting that both PPPA and IPPPA exhibit more stable solutions compared to TADMM as the number of matrix partitions increases. In particular, PPPA, due to its partition insensitivity, yields solutions that remain unchanged with variations in \( K \).
\begin{table*}[!ht]
    \centering
    \caption{Comparison of parallel environment for solving nonconvex DS models with $s=10$.}
    \renewcommand\arraystretch{1.5}
    \resizebox{\linewidth}{!}{
    \begin{tabular}{llcccccccc}
    \hline
                               & Method & \multicolumn{4}{l}{SCAD-DS}     & \multicolumn{4}{l}{MCP-DS}                    \\ \cmidrule(lr){3-6} \cmidrule(lr){7-10}
                               & $K$ & AE           & FP         & Ite         & Time(s)      & AE           & FP         & Ite         & Time(s) \\ \hline
        \multirow{3}{*}{TADMM} & 5 & 0.453(0.151) & 81.3(8.51) & 1000+(0.00) & 1241.4(43.5) & 0.461(0.160) & 83.7(8.63) & 1000+(0.00) & 1238.5(44.1)  \\ 
                               & 10 & 0.528(0.165) & 86.4(9.07) & 1000+(0.00) & 721.6(26.8) & 0.493(0.168) & 85.5(8.98) & 1000+(0.00) & 711.2(25.7) \\ 
                               & 20 & 0.545(0.179) & 92.3(10.3) & 1000+(0.00) & 434.8(15.7) & 0.572(0.182) & 93.4(10.5) & 1000+(0.00) & 456.8(14.9) \\   \hdashline[0.5pt/5pt]
        \multirow{3}{*}{PPPA} & 5 & \textbf{0.052(0.008)} & 47.32(5.6) & \textbf{533.3(38.8)} & \textbf{299.2(25.4)}   & \textbf{0.058(0.009)} & \textbf{39.87(4.9)} & \textbf{527.2(35.6)} & \textbf{293.5(24.2)}\\ 
                              & 10 & \textbf{0.052(0.008)} & \textbf{47.32(5.6)} & \textbf{533.3(38.8)} & \textbf{183.5(17.1) } & \textbf{0.058(0.009)} & \textbf{39.87(4.9)} & \textbf{527.2(35.6)} & \textbf{176.1(16.5)}\\ 
                              & 20 & \textbf{0.052(0.008)} & \textbf{47.32(5.6)} & \textbf{533.3(38.8)} & \textbf{137.0(11.2)}  & \textbf{0.058(0.009)} & \textbf{39.87(4.9)} & \textbf{527.2(35.6)} & \textbf{124.8(10.7)}\\ \hdashline[0.5pt/5pt]
        \multirow{3}{*}{IPPPA} & 5 & 0.061(0.009) & \textbf{47.24(5.8)} & 541.1(39.2) & 310.1(26.4)  & 0.063(0.010) & 45.35(5.2) & 543.5(40.3) & 305.2(25.8)\\ 
                               & 10 & 0.065(0.011) & 49.62(6.1) & 557.8(40.4) & 192.7(16.3) & 0.067(0.011) & 47.27(6.0) & 559.1(42.7) & 187.0(17.3)\\ 
                               & 20 & 0.069(0.012) & 50.19(6.6) & 572.5(42.8) & 151.3(12.6) & 0.072(0.012) & 52.24(6.5) & 567.8(43.8) & 149.4(11.7) \\ \hline
    \end{tabular}}
    \label{tab8}
\end{table*}

\subsection{\textcolor{red}{Supplementary experiments for  different types of noise}}\label{B3}
\textcolor{red}{As suggested by a reviewer, we have added the numerical performance of the algorithm in solving DS under different types of noise other than Gaussian noise to demonstrate the algorithm's adaptability.
Here, \( \bm \epsilon\) represents the error vector, where each component is independently and identically distributed, following a mixed normal distribution ( $0.4 \mathcal{N}(-3, 4)  + 0.6 \mathcal{N}(2, 1)$), $t(2.5)$ distribution, and Cauchy distribution. The data scale is the same as Table \ref{tab8}, and $K=10$. We have placed the numerical results in Table \ref{tab9}.}

\begin{table*}[!ht]
    \centering
    \caption{Comparison of parallel environment for solving nonconvex DS models with different types of noise.}
    \renewcommand\arraystretch{1.5}
    \resizebox{\linewidth}{!}{
    \begin{tabular}{llcccccccc}
    \hline
                               & Method & \multicolumn{4}{l}{SCAD-DS}     & \multicolumn{4}{l}{MCP-DS}                    \\ \cmidrule(lr){3-6} \cmidrule(lr){7-10}
                               & \( \bm \epsilon\) & AE           & FP         & Ite         & Time(s)      & AE           & FP         & Ite         & Time(s) \\ \hline
        \multirow{3}{*}{TADMM} & Mixed normal &  0.650(0.200) & 105.5(12.0) & 1000+(0.00) & 480.0(18.0) & 0.680(0.210) & 107.0(12.5) & 1000+(0.00) & 505.0(17.0)   \\ 
                               & $t(2.5)$ & 0.720(0.220) & 112.8(13.2) & 1000+(0.00) & 515.6(19.8) & 0.750(0.230) & 114.5(13.8) & 1000+(0.00) & 538.5(18.7)  \\ 
                               & Cauchy  & 1.623(0.421) & 216.7(31.4) & 1000+(0.00) & 513.6(20.4) & 1.639(0.399) & 223.1(30.8) & 1000+(0.00) & 546.3(21.2) \\   \hdashline[0.5pt/5pt]
        \multirow{3}{*}{PPPA} & Mixed normal  & \textbf{0.055(0.009)} & \textbf{50.15(6.0)} & \textbf{560.2(41.5)} & \textbf{145.2(12.0)} & \textbf{0.061(0.010)} & \textbf{42.32(5.2)} & \textbf{554.6(38.2)} & \textbf{132.7(11.4)}\\ 
                              & t(2.5) & \textbf{0.058(0.010)} & \textbf{53.00(6.4)} & \textbf{590.0(44.5)} & \textbf{153.5(12.8)} & \textbf{0.064(0.011)} & \textbf{44.80(5.6)} & \textbf{584.0(41.0)} & \textbf{140.8(12.2)}\\ 
                              & Cauchy & \textbf{0.090(0.015)} & {80.00(9.5)} & {900.0(65.0)} & \textbf{220.0(18.5)} & \textbf{0.100(0.017)} & \textbf{70.00(8.2)} & {920.0(60.0)} & \textbf{200.0(17.5)}  \\ \hdashline[0.5pt/5pt]
        \multirow{3}{*}{IPPPA} &  Mixed normal & 0.075(0.013) & 57.00(7.0) & 650.0(46.5) & 220.0(18.8) & 0.077(0.013) & 54.30(6.9) & 645.0(49.1) & 215.0(19.9) \\ 
                               &  t(2.5) & 0.076(0.013) & 55.10(6.7) & 621.4(44.4) & 217.1(18.4) & 0.077(0.013) & 53.04(6.6) & 622.7(47.0) & 210.1(19.4) \\ 
                               & Cauchy & 0.112(0.020) & \textbf{79.8(9.2)} & \textbf{860.3(62.1)} & 230.1(25.5) & 0.116(0.020) & 77.12(8.9) & \textbf{870.20(55.2)} & 285.2(26.8)  \\ \hline
    \end{tabular}}
    \label{tab9}
\end{table*}

The numerical results in Table \ref{tab9} indicate that when the error vector $\bm \epsilon$ follows a sub-Gaussian distribution (both the mixed normal distribution and the $t(2.5)$ distribution are sub-Gaussian), the nonconvex DS methoda still exhibit acceptable numerical performance, though it is slightly inferior to the case of the Gaussian distribution presented in Table \ref{tab8}. When the error vector $\bm \epsilon$ follows the Cauchy distribution, the numerical results deteriorate significantly.  This is not due to the limitations of our algorithm, but rather because the statistical properties of the DS assume that the error distribution follows a  Gaussian distribution or sub-Gaussian distribution (see  \cite{Candes2007The} and \cite{Wen2024Nonconvex}).
These distributions possess finite moments and satisfy properties such as concentration inequalities. Under these assumptions, it can be proven that the estimates of the Dantzig selector exhibit favorable properties, including consistency and convergence rates. Nevertheless, due to the heavy-tailed nature of the Cauchy distribution and the absence of finite moments, the original proofs of consistency and convergence rates become invalid. This is because these proofs rely on the moment conditions of the error terms and the concentration inequalities. 

\end{appendices}


\bibliography{myrefq}

\end{document}